\documentclass[11pt]{article}
\pdfoutput=1

\usepackage{amsfonts}
\usepackage{amsmath}
\usepackage{amssymb}
\usepackage{mathrsfs}
\usepackage{amsthm}
\usepackage[english]{babel}
\usepackage[latin1]{inputenc}
\usepackage[T1]{fontenc}
\usepackage{epsfig}
\usepackage{textcomp}
\usepackage{graphicx}
\usepackage{epstopdf}
\usepackage{pgfplots}
\pgfplotsset{compat=newest}
\usepackage{booktabs}

\numberwithin{equation}{section}
\setlength{\textwidth}{17.5cm}
\setlength{\textheight}{22.1cm}
\setlength{\oddsidemargin}{-0.5cm}
\setlength{\evensidemargin}{1cm}
\setlength{\headheight}{0cm}
\setlength{\headsep}{0cm}
\setlength{\topmargin}{0cm}
\setlength{\footskip}{1.5cm}
\baselineskip 0.6cm


\newcommand{\clearemptydoublepage}{\newpage{\pagestyle{empty}\cleardoublepage}}
\newtheorem{lemma}{Lemma}
\newtheorem{proposition}{Proposition}
\newtheorem{corollary}{Corollary}


\def\a{\alpha}
\def\b{\beta}

\def\d{\delta}

\def\ep{\epsilon}
\def\vep{\varepsilon}
\def\z{\zeta}
\def\e{\eta}

\def\vth{\vartheta}


\def\la{\lambda}
\def\La{\Lambda}
\def\m{\mu}
\def\n{\nu}
\def\x{\xi}
\def\vp{\varpi}
\def\r{\rho}

\def\s{\sigma}

\def\t{\tau}

\def\vf{\varphi}
\def\ch{\chi}
\def\ps{\psi}
\def\o{\omega}


\def\ta{\tilde{\alpha}}

\def\tt{\tilde{\tau}}
\def\tp{\tilde{\pi}}
\def\tvp{\tilde{\varpi}}

\newcommand{\ti}[1]{\tilde{#1}}


\newcommand{\Xd}{X^{\cdot}}
\newcommand{\xid}{\xi^{\cdot}}
\newcommand{\chd}{\chi^{\cdot}}

\newcommand{\xd}{x^{\cdot}}
\newcommand{\vfd}{\varphi^{\cdot}}


\newcommand{\bn}{\bar{n}}

\newcommand{\bvf}{\bar{\varphi}}
\newcommand{\bvfd}{\bar{\varphi}^{\cdot}}


\newcommand{\hxi}{\hat{\xi}}
\newcommand{\hxid}{\hat{\xi}^{\cdot}}

\newcommand{\h}[1]{\hat{#1}}
\newcommand{\hp}{\hat{\pi}}


\providecommand{\abs}[1]{\lvert#1\rvert}

\newcommand{\de}{\partial}
\newcommand{\half}{\frac{1}{2}}

\newcommand{\p}{\prime}

\newcommand{\beq}{\begin{equation}}
\newcommand{\eeq}{\end{equation}}
\newcommand{\beqnn}{\begin{equation*}}
\newcommand{\eeqnn}{\end{equation*}}
\newcommand{\bea}{\begin{eqnarray}}
\newcommand{\eea}{\end{eqnarray}}

\newcommand{\noi}{\noindent}

\newcommand{\xim}{\xi^{\mu}}
\newcommand{\chm}{\chi^{\mu}}

\newcommand{\xm}{x^{\mu}}


\newcommand{\mbf}[1]{\mathbf{#1}}
\newcommand{\mbfG}{\mathbf{G}}

\newcommand{\tmbfg}{\tilde{\mathbf{g}}}
\newcommand{\tmbfK}{\tilde{\mathbf{K}}}
\newcommand{\tmbfG}{\tilde{\mathbf{G}}}
\newcommand{\tmbfT}{\tilde{\mathbf{T}}}
\newcommand{\mcal}[1]{\mathcal{#1}}
\newcommand{\mcalT}{\mathcal{T}}
\newcommand{\mcalC}{\mathcal{C}}
\newcommand{\mcalI}{\mathcal{I}}

\newcommand{\mcalA}{\mathcal{A}}
\newcommand{\mcalB}{\mathcal{B}}
\newcommand{\mcalF}{\mathcal{F}}

\newcommand{\mscr}[1]{\mathscr{#1}}
\newcommand{\mscrT}{\mathscr{T}}


\newcommand{\dez}{\de_{z}}
\newcommand{\dey}{\de_{y}}
\newcommand{\dezq}{\de_{z}^{2}}

\newcommand{\ded}[1]{\de_{#1}}

\newcommand{\dem}{\de_{\m}}

\newcommand{\boxf}{\Box_{4}}

\newcommand{\boxfi}{\Box_{5}}
\newcommand{\boxs}{\Box_{6}}

\newcommand{\dexi}{\de_{\xi}}
\newcommand{\dehxi}{\de_{\hat{\xi}}}

\newcommand{\dexim}{\de_{\xi^{\m}}}

\newcommand{\deximxin}{\de_{\xi^{\m}} \de_{\xi^{\n}}}

\newcommand{\dechm}{\de_{\chi^{\m}}}

\newcommand{\dechmchn}{\de_{\chi^{\m}} \de_{\chi^{\n}}}


\newcommand{\evb}{\Big\rvert_{X^{\cdot} = \bar{\varphi}^{\cdot}(\xi^{\cdot})}}

\newcommand{\evbhsi}{\Big\rvert_{\bar{\varphi}^{\cdot}(\hat{\xi}^{\cdot})}}
\newcommand{\evbhsin}{\Big\rvert_{\bar{\varphi}_{_{[n]}}^{\cdot}(\hat{\xi}^{\cdot})}}
\newcommand{\evbhsinf}{\Big\rvert_{\bar{\varphi}_{_{\infty}}^{\cdot}(\hat{\xi}^{\cdot})}}

\newcommand{\evbbhsi}{\bigg\rvert_{\bar{\varphi}^{\cdot}(\hat{\xi}^{\cdot})}}
\newcommand{\evbbhsinf}{\bigg\rvert_{\bar{\varphi}_{_{\infty}}^{\cdot}(\hat{\xi}^{\cdot})}}


\newcommand{\Msf}{M_{6}^{4}}
\newcommand{\Mft}{M_{5}^{3}}
\newcommand{\Mfs}{M_{4}^{2}}



\newcommand{\Zp}{Z^{\prime}}

\newcommand{\Yp}{Y^{\prime}}

\newcommand{\Sp}{S^{\prime}}
\newcommand{\Spq}{{S^{\prime}}^2}
\newcommand{\Spp}{S^{\prime \prime}}

\newcommand{\mZ}{\mathscr{Z}}
\newcommand{\mY}{\mathscr{Y}}

\newcommand{\bla}{\bar{\la}}
\newcommand{\blac}{\bar{\la}_{c}}
\newcommand{\dla}{\d \! \la}

\newcommand{\ltp}{l_2^{\shortparallel}}
\newcommand{\ltn}{l_2^{\perp}}
\newcommand{\lo}{l_1}

\newcommand{\dvf}{\delta \! \varphi}
\newcommand{\hdvf}{\delta \! \hat{\varphi}}

\newcommand{\dvfn}{\delta \! \varphi_{\!_{\perp}}}
\newcommand{\hdvfn}{\delta \! \hat{\varphi}_{\!_{\perp}}}
\newcommand{\hdvfnin}{\delta \! \hat{\varphi}_{\!_{\perp}}^{\infty}}
\newcommand{\hdvfninp}{\delta \! \hat{\varphi}_{\!_{\perp}}^{\infty \, \prime}}
\newcommand{\dvfp}{\delta \! \varphi_{\shortparallel}}
\newcommand{\hdvfp}{\delta \! \hat{\varphi}_{\shortparallel}}

\newcommand{\dvfzgi}{\delta \! \varphi^{z}_{gi}}
\newcommand{\dvfygi}{\delta \! \varphi^{y}_{gi}}

\newcommand{\bv}{\bar{v}}
\newcommand{\dv}{\delta v}

\newcommand{\dvn}{\delta v_{\!_{\perp}}}
\newcommand{\dhvn}{\delta \hat{v}_{\!_{\perp}}}

\newcommand{\dhvsp}{\delta \hat{v}_{\shortparallel}}


\begin{document}

\date{\mbox{}}

\title{
\vspace{-2.0cm}
\vspace{2.0cm}
{\bf \huge Perturbations of Nested Branes With Induced Gravity}
 \\[8mm]
}

\author{
Fulvio Sbis\`a$^{1,2}$\thanks{fulvio.sbisa@port.ac.uk} \phantom{i}and Kazuya Koyama$^{1}$\thanks{kazuya.koyama@port.ac.uk}
\\[8mm]
\normalsize\it
$^1$ Institute of Cosmology \& Gravitation, University of Portsmouth,\\
\normalsize\it Dennis Sciama Building, Portsmouth, PO1 3FX, United Kingdom\vspace{.5cm} \\
\normalsize\it
$^2\,$ Dipartimento di Fisica dell'Universit\`a di Milano\\
       {\it Via Celoria 16, I-20133 Milano} \\
       and \\
\normalsize\it
       INFN, Sezione di Milano, \\
       {\it Via Celoria 16, I-20133 Milano}\vspace{.5cm} \\[.3em]
}

\maketitle

\setcounter{page}{1}
\thispagestyle{empty}

\begin{abstract}

\noindent We study the behaviour of weak gravitational fields in models where a 4D brane is embedded inside a 5D brane equipped with induced gravity, which in turn is embedded in a 6D spacetime. We consider a specific regularization of the branes internal structures where the 5D brane can be considered thin with respect to the 4D one. We find exact solutions corresponding to pure tension source configurations on the thick 4D brane, and study perturbations at first order around these background solutions. To perform the perturbative analysis, we adopt a bulk-based approach and we express the equations in terms of gauge invariant and master variables using a 4D scalar-vector-tensor decomposition. We then propose an ansatz on the behaviour of the perturbation fields when the thickness of the 4D brane goes to zero, which corresponds to configurations where gravity remains finite everywhere in the thin limit of the 4D brane. We study the equations of motion using this ansatz, and show that they give rise to a consistent set of differential equations in the thin limit, from which the details of the internal structure of the 4D brane disappear. We conclude that the thin limit of the ``ribbon'' 4D brane inside the (already thin) 5D brane is well defined (at least when considering first order perturbations around pure tension configurations), and that the gravitational field on the 4D brane remains finite in the thin limit. We comment on the crucial role of the induced gravity term on the 5D brane.
 
\end{abstract}

\smallskip

\section{Introduction}

The Cosmological Constant problem (CC) and the late time acceleration of the universe (LTA) are two of the most compelling problems in contemporary theoretical physics. Despite it has been noted long ago \cite{WeinbergCC} that there is a difference of 60-120 orders of magnitude (depending on the cut-off of the quantum theory) between the theoretical estimates for the vacuum energy and its observed value, a fully convincing explanation of this mismatch is still missing. The recent discovery that the cosmological observations are best fitted by a model where the cosmological constant $\La$ is non-zero \cite{Riess98, Perlmutter98} renders the problem even more puzzling. In fact, along with explaining why $\La$ is small compared to the theoretical predictions (\emph{old} cosmological constant problem), it is now necessary to explain also why it is non-zero and yet extremely fine-tuned (\emph{new} cosmological constant problem). It is fair to say that it is in principle not obvious that these two problems have a common origin, since the reason why the universe is accelerating may be independent from the reason why the semiclassical effect of the vacuum energy is so different from what we expect. 

A very promising framework to address these problems is offered by the braneworld paradigm (see \cite{Akama82,RubakovShaposhnikov83} for early proposals), which is appealing from the point of view of high energy physics since the existence of branes and of extra dimensions is an essential ingredient in string theory (see for example \cite{PolchinskiBook}). Concerning the cosmological constant problem, it has been noted that braneworld models with infinite volume extra dimensions can bypass Weinberg's no-go theorem \cite{ArkaniHamed:2002fu,Dvali:2002pe}, and more specifically that (in the case of codimension higher than one) they somehow can act as a high-pass filter on the wavelength of gravitational sources, effectively ``degravitating'' sources which are nearly constant with respect to a characteristic length of the model \cite{Dvali:2007kt}. In particular, branes of codimension two have the well known property that pure vacuum energy does not produce curvature on the brane itself, but merely curves the extra dimensions: this opens up the possibility of having models where, in presence of matter, the effective vacuum energy on the brane dynamically relaxes to a very small value after a phase transition happens (\emph{self-tuning}) (see \cite{RubakovShaposhnikov83bis} and \cite{Koyama:2007rx,Nilles:2003km} for a review). Furthermore, if we believe that the CC problem and the LTA problem are independent, braneworld models with induced gravity offer the possibility of explaining the late time acceleration in a geometrical way, since they generically admit cosmological solutions which exhibit the phenomenon of self-acceleration.

Overall we can say that codimension-2 branes with infinite volume extra dimensions and induced gravity are a very interesting framework to address the CC problem and the LTA problem. On the other hand, they suffer from the notorious shortcomings that the thin limit of a brane is not well-defined if the codimension is higher than one \cite{GerochTraschen}, and that the brane-to-brane propagator of the gravitational field diverges when we send to zero the thickness of the brane \cite{Cline:2003ak, Vinet:2004bk} (unless we allow for Gauss-Bonnet terms in the bulk action \cite{Bostock:2003cv}). More worryingly, braneworld models with induced gravity are often plagued by the presence of ghosts. While in the (codimension-1) DGP model \cite{DGP00} ghost modes propagate only around self-accelerated solutions \cite{Koyama:2005tx, Gorbunov:2005zk, LutyPorratiRattazzi, NicolisRattazzi, DGPspectereoscopy, Koyama:2007za}, higher codimension generalizations of the same model seem to suffer from the presence of ghost even around solutions where the bulk is flat and the brane is straight \cite{Dubovsky:2002jm} (however see \cite{Berkhahn:2012wg}). This seems anyway to be a regularization dependent property, since specific regularization procedures seem to render the model ghost-free, at least when flat solutions are concerned \cite{Kolanovic:2003am}. From another point of view, it has been shown that the self-accelerating solutions of the (codimension-1) DGP model fit the cosmological data significantly worse than the $\La$CDM model \cite{MaartensMajerotto06, Rydbeck:2007gy}, and in fact even if they were ghost-free they would be observationally ruled out \cite{Fang:2008kc}.

This state of affairs prompts us to look for higher codimension extensions of the DGP model which preserve its good aspects while being free from its shortcomings. In fact, it can be expected that increasing the codimension would improve the agreement between the theoretical predictions of self-accelerating cosmological solutions and the observational data \cite{Agarwal:2009gy}. An interesting direction to explore is to consider elaborate constructions with more than one brane (as for example intersecting brane scenarios \cite{Corradini:2007cz, Corradini:2008tu}), hoping that the interplay between the branes may provide a mechanism to get rid of the ghosts. A very interesting proposal in this sense appeared few years ago, in which a $D$-dimensional bulk ($D \geq 6$) contains a sequence of branes of increasing dimensionality recursively embedded one into the other, and every brane is equipped with an induced gravity term (the Cascading DGP model \cite{deRham:2007xp}). It has been shown \cite{deRham:2007xp,deRham:2010rw} in fact that, in the 6D realization of this set-up, there exists a critical value $\blac$ for the tension of the codimension-2 brane such that first order perturbations around pure tension backgrounds contain a ghost mode or are ghost-free depending on the fact that the background tension $\bla$ is smaller or bigger than the critical tension. Furthermore, it has been claimed that the induced gravity term on the codimension-1 brane renders finite the (codimension-2) brane-to-brane propagator \cite{deRham:2007xp,deRham:2007rw}, thereby regularizing gravity. Both these properties (gravity regularization and existence of a critical tension) are very interesting and unexpected.

The purpose of this paper is to study the behaviour of weak gravitational fields in set-ups where matter is confined on a codimension-2 brane which is embedded inside a codimension-1 brane with induced gravity, focusing on the thin limit of the codimension-2 brane. More precisely, we consider (background) configurations where the source on the codimension-2 brane has the form of pure tension, and we study the behaviour of the gravitational field at first order in perturbations around these solutions. The main aim of our analysis is to understand geometrically the mechanism of gravity regularization and the role of the induced gravity term. Moreover, we want to develop a formalism to study perturbations in the nested branes with induced gravity set-ups which, in a future perspective, can be applied to other background configurations and to the Cascading DGP model (by adding an induced gravity term on the codimension 2 brane). The paper is therefore structured as follows: in section \ref{Nested branes with induced gravity} we introduce the specific regularization choice we use to study the nested brane system, and clarify the properties of the set-up. In section \ref{Perturbations around pure tension solutions} we derive the pure tension solutions, and consider first order perturbations using a bulk-based approach and the 4D scalar-vector-tensor decomposition. In section \ref{The equations of motion for the perturbations} we derive the equations of motion for the gauge invariant variables of the system. We introduce the master variables and study the case of a pure tension perturbation. In section \ref{ThinLimitCod2Brane} we propose an ansatz on the behaviour of the perturbation fields in the thin limit of the cod-2 brane. We derive the thin limit equations of motion and discuss their consistency. Finally, we present our conclusions in section \ref{Conclusions}.

\textbf{Conventions}: For metric signature, connection, covariant derivative, curvature tensors and Lie derivative we follow the conventions of Misner, Thorne and Wheeler \cite{MisnerThorneWheeler}. The metric signature is the ``mostly plus'' one, and we def\mbox{}ine symmetrization and antisymmetrization \emph{without} normalization. 6D indices are denoted by capital letters, so run from 0 to 5; 5D indices are denoted by latin letters, and run from 0 to 4, while 4D indices are denoted by greek letters and run from 0 to 3. The only exception is that the letters $i$, $j$ and $k$ indicate 2D indices which run on the extra dimensions $z$ and $y$. In general, quantities pertaining to the cod-1 brane are denoted by a tilde $\tilde{\phantom{a}}$, while quantities pertaining to the cod-2 brane are denoted by a superscript $\phantom{a}^{_{(4)}}$. Abstract tensors are indicated with bold-face letters, while quantities which have more than one component but are not tensors (such as coordinates $n$-tuples for example) are expressed in an abstract way replacing every index with a dot. 
When studying perturbations, the symbol $\simeq$ indicates usually that an equality holds at linear order. We use throughout the text the (Einstein) convention of implicit summation on repeated indices, and we use unit of measure where the speed of light has unitary value $c=1$.

\section{Nested branes with induced gravity}
\label{Nested branes with induced gravity}

Configurations where a brane is embedded inside another brane of higher dimensionality (see \cite{Morris:1997hj,Edelstein:1997ej} for a field theory realization), have been already studied for example in \cite{Gregory:2001xu,Gregory:2001dn} (without induced gravity terms) and \cite{Dvali:2006if} (with induced gravity terms) in the context of 5D braneworld models, where extended sources inside the 4D brane were used to investigate the non-perturbative properties of these theories. More recently they have been considered in the context of the Cascading DGP model \cite{deRham:2007xp}, where the existence of a critical tension and the regularization of gravity have been uncovered. In this paper, we consider systems where a (codimension 2) 4D brane is embedded inside a (codimension 1) 5D brane which is in turn embedded in a 6D bulk, and only the cod-1 brane is equipped with an induced gravity term. Therefore, we consider systems which are schematically described by the action
\beq
\label{6D General action}
S = \Msf \int_{\mcal{B}} \!\! d^6 X \, \sqrt{-g} \, R + \Mft \int_{\mcal{C}_1} \!\! d^5 \xi \, \sqrt{-\ti{g}} \, \ti{R} + \int_{\mcal{C}_2} \!\! d^4 \ch \, \sqrt{-g^{(4)}} \, \mscr{L}_{M}
\eeq
where the bulk $\mcal{B}$ is parametrized by the coordinates $\Xd = (z,y,x^{\m})$, the cod-1 brane $\mcal{C}_1$ is parametrized by the coordinates $\xid = (\xi, \xi^{\m})$ and the cod-2 brane $\mcal{C}_2$ is parametrized by the coordinates $\chd$. Here $\ti{\mathbf{g}}$ indicates the metric induced on the cod-1 brane (and $\ti{R}$ is the Ricci scalar built with it), while $\mathbf{g}^{_{(4)}}$ indicates the metric induced on the cod-2 brane and the Lagrangian $\mscr{L}_{M}$ describes the matter localized on the cod-2 brane.

\subsection{Regularization choice and thickness hierarchy}
\label{Regularization choice and thickness hierarchy}

To understand if gravity on the cod-2 brane is regularized by the (cod-1) induced gravity term, we should study how the gravitational field behaves when the cod-2 brane becomes thin. However, such an analysis is meaningful only if the concept of ``thin cod-2 brane'' is meaningful (i.e.~if the thin limit is well-defined). If not, the fact that gravity is regularized may be true only when the limit is performed in some specific ways (if any); this would physically mean that gravity is regularized only if the internal structures of the branes have some specific properties. To this effect, it has been proved that the thin limit of a (isolated) brane is not well-defined if its codimension is higher than one \cite{GerochTraschen}. It is reasonable to expect that this fact does not change if we embed a cod-2 brane inside a cod-1 brane, since, beside the freedom to choose the cod-2 internal structure, we now have the additional freedom to choose how the internal structures of the two (cod-1 and cod-2) branes are related one to the other (see \cite{deRham:2007rw} for a related discussion in the Cascading DGP model). Therefore, in absence of a rigorous proof (on the lines of \cite{GerochTraschen}) of the well-definiteness of the thin limit of the nested cod-1 and cod-2 branes, to perform a clean analysis we should consider configurations where both branes are thick.
%

Working with configurations where both branes are thick is however extremely complicated, and probably not doable in practice. To facilitate the analysis, we could consider particular cases in which there is a hierarchy of scales between the two branes, with the hope that this permits to describe the system with a good approximation by considering one of the branes thin (relatively to the other). 
\begin{figure}[htb!]
\centering
\begin{tikzpicture}
\begin{scope}[scale=.9,rotate=30,>=stealth]
\fill[color=green!20!white] (-6,-1.5) rectangle (6,1.5);
\filldraw[fill=red!20!white,draw=black] (0,0) ellipse (2cm and 1cm);
\draw[very thick] (-6,-1.5) -- (6,-1.5);
\draw[very thick] (-6,1.5) -- (6,1.5);
\draw[gray,thin,<->] (0,0) -- node[color=black,anchor=west]{$l_2^{\perp}$} (0,1);
\draw[gray,thin,<->] (0,0) -- node[color=black,anchor=north]{$l_2^{\shortparallel}$} (2,0);
\draw[gray,thin,<->] (5,-1.5) -- node[color=black,anchor=east]{$l_1$} (5,1.5);
\end{scope}
\end{tikzpicture}
\caption[Characteristic scales for the cod-1 and cod-2 branes]{Characteristic scales for the cod-1 brane (green) and the cod-2 brane (ellipse, violet)}
\label{characteristic nested scales}
\end{figure}
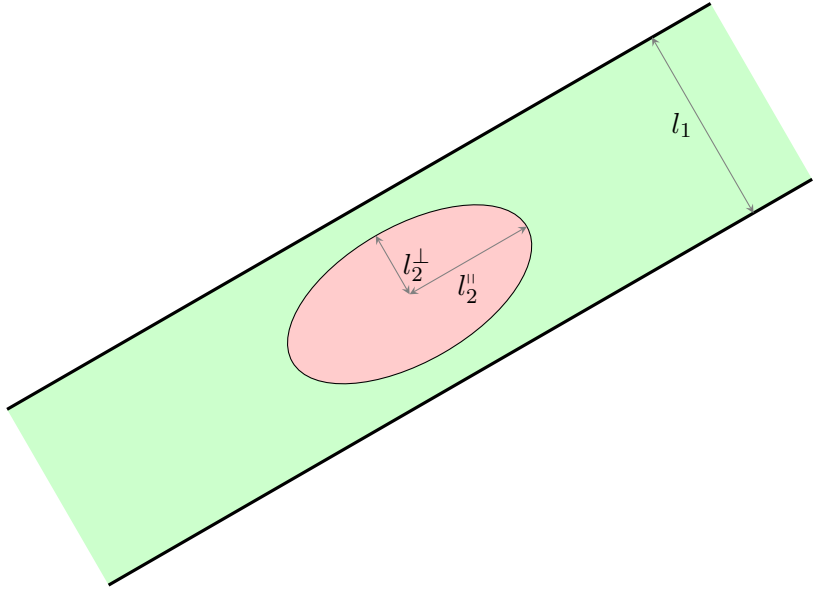
To clarify this point, let's consider in fact the simple schematic description of figure \ref{characteristic nested scales} where the 2D sections of the branes in the extra dimensions are plotted (each point in the figure represents a 4D spacetime): we indicate with $l_1$ the thickness of the cod-1 brane, with $l_2^{\shortparallel}$ the thickness of the cod-2 brane in the parallel directions (from the point of view of the cod-1 brane) and with $l_2^{\perp}$ the thickness of the cod-2 brane in the normal direction. Among the infinite possible choices for the three representative thicknesses, we could consider the class of configurations where the thicknesses satisfy the following hierarchy $\ltp \gg \ltn$, $\ltn \sim \lo$; in this case, we may consider the cod-1 brane to be thin with respect to the cod-2 brane, and we could describe this situation by considering a perfectly thin cod-1 brane, and ask that the matter on the cod-1 brane is distributed only inside a ribbon of width $\sim \ltp$. This situation is definitely appealing since we know that in this case the internal structure of the cod-1 brane does not play a role, and we can study it using the formalism of thin cod-1 branes. Furthermore, in this case there is a clear connection with the (cod-1) DGP model, which is well understood and can serve as a guidance. 

Therefore, in the following we will consider only configurations which satisfy the hierarchy outlined above, and we will describe them assuming that energy and momentum are distributed inside a ``ribbon'' cod-2 brane which lies inside a thin cod-1 brane. Henceforth, we refer to this class of configurations as \emph{nested branes with induced gravity} set-ups. Note that a priori we don't know if the thin limit of a ribbon cod-2 brane inside an already thin cod-1 brane is well defined or not. This is in fact a very important point to establish, and we will address it in our analysis. If this (second) thin limit is well defined, it is possible to work with a thin cod-2 brane and forget the internal structure of the cod-2 brane as well, thereby simplifying further the analysis.

\subsection{The set-up}

Before turning to the analysis of the equations of motion, it is useful to spell out clearly the properties of the nested branes with induced gravity set-up; this also gives us the chance to clarify the notation.

\subsubsection{The geometric set-up}
\label{The geometric set-up}

We assume that the cod-1 brane $\mcal{C}_1$ divides the 6D spacetime in two disconnected pieces which are (globally) diffeomorphic, and we assume that a $\mathbb{Z}_{2}$ symmetry holds across the cod-1 brane. We furthermore assume that there is a 4D submanifold $\mcal{C}_2$ embedded inside the cod-1 brane, which likewise divides the cod-1 brane in two patches whose common boundary is $\mcal{C}_2$. We assume that matter and tension are confined inside the cod-1 brane $\mcal{C}_1$ and localized around the 4D submanifold $\mcal{C}_2$, and we assume that a $\mathbb{Z}_{2}$ symmetry across $\mcal{C}_2$ holds inside the cod-1 brane. More specifically, we distinguish between a physical (thick) cod-2 brane, inside which matter and tension are confined (the ``ribbon'' cod-2 brane), and a mathematical (thin) cod-2 brane ($\mcal{C}_2$), with respect to which the $\mathbb{Z}_{2}$ symmetry is imposed. When the thin limit of the cod-2 brane is performed, the physical brane coincides with the mathematical one. Note that, differently from the original formulation of the Cascading DGP model \cite{deRham:2007xp}, we do \emph{not} impose a $\mathbb{Z}_{2} \times \mathbb{Z}_{2}$ symmetry to hold in each of the two mirror copies which constitute the bulk. In fact, the presence of a $\mathbb{Z}_{2}$ symmetry inside the cod-1 brane does not imply that a double $\mathbb{Z}_{2}$ symmetry holds in each of the two 6D mirror copies.

Since $\mcal{C}_1$ and $\mcal{C}_2$ are submanifolds of the 6D ambient space, they may be considered as separate manifolds, each one equipped with its atlas of reference systems plus an embedding function which describes how they are embedded in the ambient space. The position of the cod-1 brane in the bulk is described by the embedding function $\vfd$ whose component expression is $\varphi^{A}(\xi^{a})$, while the position of the (mathematical) cod-2 brane inside the cod-1 brane is described by the embedding function $\ti{\a}^{\cdot}$ whose expression in coordinates is $\ti{\a}^{a}(\chi^{\m})\,$. The bulk metric $\mathbf{g}$ induces on the codimension-1 brane the metric $\ti{\mathbf{g}} \equiv \varphi_{\star} \big( \mathbf{g} \big)$, where $\varphi_{\star}$ indicates the pullback with respect to the embedding function $\vf^{\cdot}$, and in turn the metric $\ti{\mathbf{g}}$ induces on the codimension-2 brane a metric $\mathbf{g}^{_{(4)}} \equiv \ti{\a}_{\star} \big( \mathbf{\ti{g}} \big)$, where $\ti{\a}_{\star}$ indicates the pullback with respect to the embedding function $\ti{\a}^{\cdot}$.

In the following it will be useful to consider reference systems on the codimension-1 brane which are Gaussian Normal with respect to the (matematical) cod-2 brane: henceforth, we refer to this class of reference systems as codimension-1 Gaussian Normal reference systems. We indicate quantities evaluated in this coordinate systems with an overhat $\h{\phantom{a}}$, and we synthetically indicate the cod-1 GN coordinates as $\hxid \equiv ( \hxi, \chd )$. By construction, we have that \cite{CarrollBook}
\begin{align}
\label{c1GNCgeneral}
\h{g}_{\xi\xi}(\hxi,\chd) &= 1 & \h{g}_{\xi\m}(\hxi,\chd) &= 0
\end{align}
and moreover we have that, choosing a fixed $\hxi$, the 4D tensor $\h{g}_{\m\n}(\hxi,\chd)$ (seen as a function of $\chd$) is the induced metric on the 4D slice characterized by that particular $\hxi$.

\subsubsection{The source set-up}

As we mentioned above, we assume that the energy-momentum tensor present on the cod-1 brane is localized inside the physical (ribbon) cod-2 brane. By ``localized'' we mean that, first of all, there exists a (finite) localization length $l_2$ such that the energy-momentum tensor $\h{T}_{ab}(\hxi, \chd)$ (in cod-1 GNC) vanishes when it is evaluated at a distance $\hxi$ from the cod-2 brane which is bigger than $l_2$ (the length $l_2$ corresponds, in the language of section \ref{Regularization choice and thickness hierarchy}, to the ``parallel'' thickness $\ltp$). Secondly, we ask that the pillbox integration across the cod-2 brane of the normal and mixed components of $\h{T}_{ab}$ vanishes
\beq
\label{angelica}
\int_{-l_2}^{+l_2} d \hxi \,\, \hat{T}_{\xi\xi}(\hxi, \chd) = \int_{-l_2}^{+l_2} d \hxi \,\, \hat{T}_{\xi\m}(\hxi, \chd) = 0
\eeq
which formalizes the idea that momentum does not flow out of the brane.

We define the cod-2 energy-momentum tensor as the 4D tensor $T^{_{(4)}}_{\m\n}(\chd)$ obtained by the pillbox integration of the 4D components of $\h{T}_{ab}$ across the cod-2 brane, so that we have
\beq
\label{cristina}
\int_{-l_2}^{+l_2} d \hxi \,\, \hat{T}_{ab}(\hxi, \chd) = \d_{a}^{\, \, \m} \,\, \d_{b}^{\, \, \n} \,\, T^{(4)}_{\m\n}(\chd)
\eeq
The latter tensor can be considered as the ``would be'' thin limit source configuration if the thin limit description were well-defined. From this point of view, we can consider different configurations $\ti{T}_{ab}$ which correspond to the same $T^{_{(4)}}_{\m\n}$ as different regularizations of the perfectly localized source $T^{_{(4)}}_{\m\n}$. In the following, it will be useful to perform the pillbox integration of the junction conditions across the cod-2 brane. We therefore introduce the notation
\beq
\label{strange equation1}
\int_{-}^{+} \! d\hxi \equiv \int_{-l_2}^{+l_2} \! d\hxi
\eeq
and also, given a quantity $\mscr{Q}(\xi)$ defined on the cod-1 brane, we indicate
\begin{align}
\mscr{Q}\Big\rvert_{+} &\equiv \mscr{Q}\Big\rvert_{\xi = +l_2} & \mscr{Q}\Big\rvert_{-} &\equiv \mscr{Q}\Big\rvert_{\xi = -l_2}
\end{align}
and finally
\beq
\label{strange equation2}
\Big[ \mscr{Q} \Big]_{\pm} \equiv \mscr{Q}\Big\rvert_{\xi = +l_2} - \mscr{Q}\Big\rvert_{\xi = -l_2}
\eeq

\subsubsection{The equations of motion}
\label{The equations of motion}

Since the cod-1 brane is thin, the physical configurations of the theory can be found by solving the Einstein equations in the bulk and by imposing the Israel junction conditions \cite{Israeljc} at the cod-1 brane. As usual for cod-1 branes, to single out a unique solution we have to add boundary conditions at spatial infinity (i.e.~in the extra dimensions); we implicitly assume this in the following. Moreover, because of the $\mathbb{Z}_2$ symmetry which holds across the cod-1 brane, it is enough to choose one of the two mirror copies which constitute the bulk, and to solve the Einstein equations only there (henceforth, with a slight abuse of language we refer to the chosen copy as the ``bulk'' itself).

Since we allow the embedding of the cod-1 brane to be non-trivial, it is useful to describe the geometrical properties of the cod-1 brane using objects which are intrinsic to the brane. In general, the extrinsic geometry of the cod-1 brane is described by the second fundamental form
\beq
\mbf{K} \equiv - \half \, \mcal{L}_{\mbf{n}} \, \mbf{P}
\eeq
where $\mbf{n}$ is the vector normal to the cod-1 brane, $\mbf{P} \equiv \mbf{g} - \mbf{g}(\mathbf{n},\_) \otimes \mbf{g}(\mathbf{n},\_)$ is the first fundamental form 
and $\mcal{L}$ indicates the Lie derivative. We fix the arbitrariness related to the choice of orientation of the brane by choosing the normal vector which points inward the bulk. To obtain an intrinsic object which describes the extrinsic geometry, we can pull-back $\mbf{K}$ to the brane obtaining the extrinsic curvature\footnote{Note that in some references $\mbf{K}$ is called ``extrinsic curvature'' instead of $\tilde{\mathbf{K}}$.} $\tilde{\mathbf{K}}(\xi^{\cdot})$
\beq
\tilde{\mathbf{K}} \equiv \vf_{\star} \big( \mbf{K} \big)
\eeq
which explicitly contains the embedding function. Using the explicit expression for the Lie derivative, we can decompose the extrinsic curvature as follows
\beq
\label{ExtrCurvDecomposition}
\tmbfK(\xi^{\cdot}) = \tmbfK^{[og]}(\xi^{\cdot}) + \tmbfK^{[pg]}(\xi^{\cdot}) + \tmbfK^{[b]}(\xi^{\cdot})
\eeq
where the three contributions read in coordinates
\begin{align}
\label{ExtrCurvOG}
\tilde{K}^{[og]}_{ab}(\xi^{\cdot}) &\equiv - \half \, \frac{\de \varphi^{A}(\xid)}{\de \xi^{a}}
\frac{\de \varphi^{B}(\xid)}{\de \xi^{b}} \, n^L(\xid) \frac{\de \, g_{AB}}{\de X^L}\Big\rvert_{X^{\cdot} = \vf^{\cdot}(\xid)} \\[2mm]
\label{ExtrCurvPG}
\tilde{K}^{[pg]}_{ab}(\xi^{\cdot}) &\equiv \half \, n^A(\xid) \, \frac{\de \varphi^{B}(\xid)}{\de \xi^{(a}} \, \frac{\de \varphi^{L}(\xid)}{\de \xi^{b)}} \frac{\de \, g_{AB}}{\de X^L}\Big\rvert_{X^{\cdot} = \vf^{\cdot}(\xid)}\\[2mm]
\label{ExtrCurvB}
\tilde{K}^{[b]}_{ab}(\xi^{\cdot}) &\equiv n_{L}(\xid) \, \frac{\de^2 \varphi^{L}(\xid)}{\de \xi^{a} \de \xi^{b}}
\end{align}
and we defined $n_M(\xid) = g_{LM}(\vfd(\xid)) \, n^M(\xid)$. Note that the vectors $\mathbf{v}_{(a)}$ defined in coordinates by
\beq
\label{parallelvectorsgeneral}
v_{(a)}^A(\xid) \equiv \bigg\{ \frac{\de}{\de \xi^a}\Big\rvert_{\xid} \vf^A \bigg\}_{\!a} \qquad \qquad a = 0, \ldots, 4
\eeq
are the tangent vectors to the cod-1 brane associated to the reference system $\xid$. This decomposition for the extrinsic curvature has a clear geometrical interpretation; the first two pieces are named ``orthogonal gradient'' and ``parallel gradient'' as they are non-zero when the bulk metric has non-zero derivative respectively in the directions orthogonal and parallel to the cod-1 brane, even when the cod-1 brane is not bent. The third piece is instead due to the bending, since it is non-zero when the brane is bent even if the bulk metric is constant.

Therefore, indicating with $\tmbfG$ the Einstein tensor built from the induced metric and with $\ti{K} = \textrm{tr} \, \tmbfK$ the trace of the extrinsic curvature, the equations of motion for our system are
\begin{align}
\label{Bulkeq}
\mbfG =& \,\, 0 \qquad \,\, (\textrm{bulk})\\[2mm]
\label{junctionconditionseq}
2 \Msf \Big( \tmbfK - \ti{K} \, \tmbfg \Big) + \Mft \, \tmbfG =& \,\, \tmbfT \qquad (\textrm{cod-1 brane})
\end{align}
where $\tmbfT$ is the (5D) energy-momentum tensor present inside the ribbon codimension-2 brane. The requirement that a $\mathbb{Z}_{2}$ symmetry across the cod-2 brane is assumed to hold inside the cod-1 brane is formalized asking that, when expressed in cod-1 GNC, the $\m\n$ and $\xi\xi$ components of the induced metric $\h{\mathbf{g}}$ (as well as of the curvature tensors $\h{\mathbf{G}}$ and $\h{\mathbf{K}}$ and of the energy-momentum tensor $\h{\mathbf{T}}$) are symmetric with respect to the reflection $\hxi \rightarrow - \hxi$, while the $\xi\m$ components are antisymmetric.

\section{Perturbations around pure tension solutions}
\label{Perturbations around pure tension solutions}

In the context of the nested branes with induced gravity, gravity regularization corresponds to the fact that the gravitational field on the ribbon brane remains finite when the thickness $l_2$ becomes smaller and smaller. Therefore, in this and the next section we study the gravitational field produced by a thick ribbon brane, and then consider the limit $l_2 \rightarrow 0^+$ in section \ref{ThinLimitCod2Brane}. As we already mentioned, we consider configurations when the gravitational field produced by the matter is weak, but we allow for a generic tension (vacuum energy) to be present on the cod-2 brane. In this section we derive the pure tension solutions, and develop a formalism to study perturbations at first order around the (background) pure tension configurations. We indicate with an overbar the quantities which correspond to the background configurations.

\subsection{Pure tension solutions}
\label{Pure tension solutions}

Let's consider localized source configurations such that, in cod-1 Gaussian Normal Coordinates, the (cod-1) energy-momentum tensor is of the form
\beq
\label{marina}
\bar{T}_{ab}(\hxi, \chd) = - \d_{a}^{\, \, \m} \,\, \d_{b}^{\, \, \n} \,\, f(\hxi) \, \bla \,\, \bar{g}_{\m\n}(\hxi, \chd)
\eeq
where $\bla > 0$ and the \emph{localizing function} $f$ vanishes for $\abs{\hxi} > l_2$ and satisfies
\beq
\label{Samatorza}
\int_{-}^{+} \! d\hxi \, f(\hxi) = 1
\eeq
This function can be considered to be a regularized version of the Dirac delta function, and to be compatible with the $\mathbb{Z}_2$ symmetry present inside the cod-1 brane it has to be even with respect to the reflection $\hxi \rightarrow - \hxi$. On every $\hxi$--constant (4D) hypersurface, the source configurations (\ref{marina}) have the form of pure tension, where the total tension $\bla$ is distributed according to the function $f$: we define ``thick pure tension source'' a source configuration which satisfy (\ref{marina}) and (\ref{Samatorza}), where $f$ descibes the internal structure of the ribbon cod-2 brane. When the cod-2 brane become thin and $f$ tends to a Dirac delta, the cod-2 energy momentum tensor tends to
\beq
\label{Ternova}
\bar{T}^{(4)}_{\m\n}(\chd) \rightarrow - \bla \,\, \bar{g}^{(4)}_{\m\n}(\chd)
\eeq
which is the energy-momentum tensor corresponding to a thin pure tension source (note that the minus sign is due to the choice of signature for the metric).

To find a solution to the equations of motion, we follow \cite{Dvali:2006if} (see also \cite{Gregory:2001xu,Gregory:2001dn}) and consider a geometrical ansatz which enjoys translational invariance in the 4D directions parallel to the (mathematical) cod-2 brane: we assume that $\mathcal{C}_2$ is placed at $\xi = 0$
\beq
\label{Robert}
\bar{\a}^{a}(\chi^{\cdot}) = \big( 0, \chi^{\m} \big)
\eeq
while the cod-1 brane has the following embedding
\beq
\label{Patty}
\bar{\varphi}^{A}(\xi^{\cdot}) = \big( Z(\xi), Y(\xi), \xi^{\m} \big)
\eeq
and the bulk metric is the 6D Minkowski metric
\beq
\label{George}
\bar{g}_{AB}(X^{\cdot}) = \eta_{AB}
\eeq
We assume furthermore that the function $Y(\xi)$ is a diffeomorphism, which in particular means that $Y^{\p}$ never vanishes\footnote{We indicate derivatives with respect to $\xi$ (or $\hxi$) with a prime $\dexi \equiv \phantom{i}^{\p}$.}. We can then use the gauge freedom to rescale the coordinate $\xi \rightarrow \hxi$ in such a way that $\bar{g}_{\xi\xi} (\hxi) = 1$, which implies 
\beq
\label{Kelly}
{Z^{\prime}}^{2}(\hxi) + {Y^{\prime}}^{2}(\hxi) = 1
\eeq
and therefore in full generality we can write $\Zp$ and $\Yp$ as
\begin{align}
\label{Jerry}
\Zp\big( \hxi \big) &= \sin S \big( \hxi \big)  & \Yp\big( \hxi \big) &= \cos S \big( \hxi \big) 
\end{align}
where $S$ is smooth since $\Zp$ and $\Yp$ are, and is called the \emph{slope function}. The name is justified by the fact that the slope of the embedding in the $y-z$ plane at the point $\hxi$ is given by
\beq
\label{Manelli}
\vartheta \big( \hxi \big) = \arctan \, \frac{\Zp\big( \hxi \big)}{\Yp\big( \hxi \big)} = S\big( \hxi \big)
\eeq
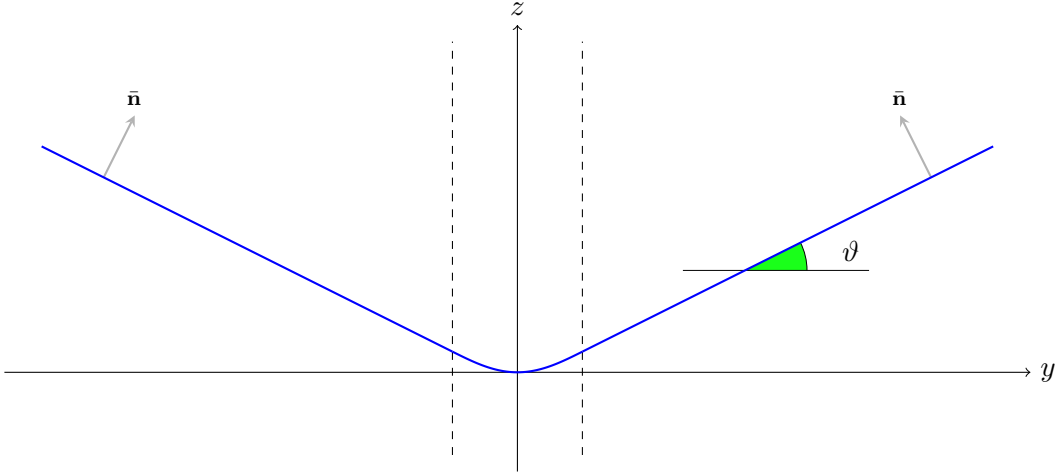
\begin{figure}[tbh!]
\centering
\begin{tikzpicture}[scale=1.1]
\draw[->] (-6.2,0) -- (6.2,0) node[right] {$y$};
\draw[->] (0,-1.2) -- (0,4.2) node[above] {$z$};
\draw[dashed] (0.78525,-1) -- (0.78525,4);
\draw[dashed] (-0.78525,-1) -- (-0.78525,4);
\fill[green!90!white] (2.75,1.232375) -- (3.5,1.232375) arc (0:27:0.75) -- (2.75,1.232375);
\draw[very thin] (2,1.232375) -- (4.25,1.232375) node[anchor=south east] {$\vartheta$};
\draw[very thin] (3.5,1.232375) arc (0:27:0.75);
\begin{scope}[>=stealth]
\draw[->,color=black!30!white,thick] (5,2.357375) -- (4.625,3.107375) node[above,color=black] {$_{\bar{\mathbf{n}}}$};
\draw[->,color=black!30!white,thick] (-5,2.357375) -- (-4.625,3.107375) node[above,color=black] {$_{\bar{\mathbf{n}}}$};
\end{scope}
\draw[color=blue,domain=-0.78525:0.78525,thick,smooth] plot (\x,{(1 - cos(2*\x r))/4});
\draw[color=blue,thick] (0.78525,0.25) -- (5.75,2.732375); 
\draw[color=blue,thick] (-0.78525,0.25) -- (-5.75,2.732375);
\end{tikzpicture}
\caption[The embedding of the cod-1 brane and its slope.]{The embedding of the cod-1 brane (thick, blue) and its slope. The vertical dashed lines indicate the boundaries of the cod-2 brane.}
\label{Slope function fig}
\end{figure}
(see figure \ref{Slope function fig}). Note that in the $(\hxi,\xim)$ reference system the metric induced on the cod-1 brane $\bar{g}_{ab}$ is the 5D Minkowski metric (in particular $(\hxi,\xim)$ is a cod-1 GN reference system) and that the metric induced on the cod-2 brane $\bar{g}^{_{(4)}}_{\m \n}$ is the 4D Minkowski metric. Using (\ref{Kelly}), the 6D 1-form normal to the cod-1 brane reads
\beq
\label{Hanayo}
\bar{n}_{M}(\hxi) = \varepsilon \, \big( Y^{\prime}(\hxi), - Z^{\prime}(\hxi), 0, 0, 0, 0 \big)
\eeq
and the only non-vanishing component of the extrinsic curvature of the cod-1 brane is
\beq
\bar{K}_{\xi \xi}(\hxi) = \varepsilon \, S^{\p}(\hxi)
\eeq
where $\varepsilon = \pm 1$ encodes the choice of the orientation of the cod-1 brane.

\subsubsection{Pure tension and def\mbox{}icit angle}

It is easy to see that the bulk equations of motion are identically satisfied, while the only components of the junction conditions which are not trivially satisfied are the $\m\n$ ones, which give
\beq
\label{Botswana2}
2 M_6^4 \, \bar{K}_{\xi \xi}(\hxi) = f(\hxi) \, \bla
\eeq
Note that, since the function $f(\hxi)$ is even, $S^{\p}(\hxi)$ has to be even as well; we choose to impose the condition $S(0) = 0$, so that $S(\hxi)$ is odd, which in turn implies that $\Zp \big( \hxi \big)$ is odd and $\Yp \big( \hxi \big)$ is even. The function $S(\hxi)$ is therefore determined by the Cauchy problem
\beq
\label{Lory}
\left\{
\begin{aligned}
S^{\p}(\hxi) &= \vep \, \frac{\bla}{2 M_6^4} \, f(\hxi) \\[2mm]
S(0) &= 0
\end{aligned}
\right.
\eeq
whose solution is
\beq
\label{satisfaction}
S(\hxi) = \vep \, \frac{\bla}{2 \Msf} \, \ep \big( \hxi \big)
\eeq
where we introduced the \emph{regulating function}
\beq
\ep \big( \hxi \big) \equiv \int_{0}^{\hxi} f (\z) \, d \z
\eeq
If $\bla \geq 2 \pi \Msf$, there exists a value $\hxi_{_{M}}$ (with $\abs{\hxi_{_{M}}} \leq l_2$) such that $Y^{\p}$ vanishes and changes sign at $\hxi = \hxi_{_{M}}$: in these cases, the cod-1 brane has a finite extent in the $\hxi$ direction. More precisely, in these cases the cod-1 and the cod-2 brane coincide and are both ribbon branes; to have a cod-1 brane which is infinite also in the $\hxi$ direction we have to impose $\bla < 2 \pi \Msf$, in which case $\Yp$ is a diffeomorphism $\mathbb{R} \rightarrow \mathbb{R}$. We conclude that there exists a maximum tension $\bla_M = 2 \pi \Msf$ for the thick pure tension solutions in the nested branes set-up to be well-defined. We assume henceforth that $\bla < \bla_M$; in this case, the functions $S(\hxi)$, $\Zp(\hxi)$ and $\Yp(\hxi)$ are constant for $\abs{\hxi} \geq l_2$. Imposing that $Z(0) = Y(0) = 0$, so that $Z \big( \hxi \big)$ is even and $Y \big( \hxi \big)$ is odd, for $\hxi \gtrless \pm l_2$ we have (see appendix \ref{Thin limit of the background})
\begin{align}
Z(\hxi) &= \Zp_+ \,\, \abs{\hxi} + Z_0 & Y(\hxi) &= \Yp_+ \,\, \hxi \pm Y_0 & \mathrm{for} \,\, \hxi &\gtrless \pm l_2
\end{align}
where $Z_0$ and $Y_0$ are integration constants and
\begin{align}
\label{Dom}
S_{+} &= \varepsilon \, \frac{\bla}{4 M_6^4} & Z^{\p}_{+} &= \sin \bigg( \varepsilon \, \frac{\bla}{4 M_6^4} \bigg) & Y^{\p}_{+} &=  \cos \bigg( \varepsilon \, \frac{\bla}{4 M_6^4} \bigg)
\end{align}

The slope of the embedding in the $y-z$ plane outside the thick cod-2 brane depends only on the total amount of tension $\bla$, and is independent of the internal structure of the thick cod-2 brane; furthermore, if we keep $\bla$ constant and perform a limit in which $l_2 \rightarrow 0^+$, $\Zp_+$ and $\Yp_+$ remain constant while $Z_0$ and $Y_0$ tend to zero (see appendix \ref{Thin limit of the background}). Therefore, if we restrict ourselves to configurations of the type (\ref{Ternova})--(\ref{George}), we can give a thin limit description to this set-up where the tension $\bla$ is perfectly localized on the mathematical cod-2 brane $\mcal{C}_2$ and the components of the embedding function read
\beq
\label{Iris}
\left\{
\begin{aligned}
Z(\hxi) &= \varepsilon \, \sin \Big( \frac{\bla}{4 M_6^4} \Big) \,\, \abs{\hxi} \\[2mm]
Y(\hxi) &= \cos \Big( \frac{\bla}{4 M_6^4} \Big) \,\, \hxi
\end{aligned}
\right.
\eeq
for every value of $\hxi$. At first sight, there are two solutions for each value of $\bla$, which correspond to the choices $\vep = +1$ and $\vep = -1$: however, it is not difficult to see that the two solutions $\vep = +1$, $\Sp >0$ and $\vep = -1$, $\Sp <0$ actually give rise to the same spacetime. For definiteness, henceforth we choose $\vep = +1$. From a geometrical point of view, the bulk correspondent to the solution defined by (\ref{Robert})-(\ref{George}), (\ref{Jerry}) and (\ref{satisfaction}) is the product of the 4D Minkowski space and of the region of the 2D Euclidean space which lies above the curve $\big( Y(\hxi) , Z(\hxi) \big)$ in the $y-z$ plane. The thin limit solution (\ref{Robert})-(\ref{George}) and (\ref{Iris}) then corresponds to a geometric configuration which is the product of the 4D Minkowski space and a two dimensional cone of deficit angle $\a = 4 \, \vth_+$, and using (\ref{Manelli}) and (\ref{Dom}) we obtain
\beq
\label{sandra}
\a = \frac{\bla}{M_6^4}
\eeq
Therefore, pure tension $\bla$ on a thin cod-2 brane produces a conical defect of deficit angle $\bla/M_6^4$ in the case of nested branes with induced gravity, generalizing the well-known result which holds for pure cod-2 branes. Note that when $\bla \to \bla_{M}^{-}$ the deficit angle tends to $2 \pi$, and the 2D cone tends to a degenerate cone (a half-line). The fact that the thick pure tension solutions are well-defined only if $\bla < \bla_M$ is reflected in the thin limit by the fact that the thin 6D nested branes configurations tend to a singular configuration when $\bla \to \bla_{M}^{-}$.

\subsection{Perturbations in the bulk-based approach}

We now turn to the study of perturbations. When studying metric theories of gravity, a very important step in solving the equations of motion is the choice of the reference system. In fact, choosing the gauge wisely is often crucial to be able to solve the equations explicitly. In cod-1 braneworld models, there exist two choices which are commonly used in the literature: the brane-based and the bulk-based approach. It is very important to understand which of the two approaches is best suited to our set-up, where the source on the cod-1 brane lies inside a ribbon whose thickness we want eventually to send to zero.

\subsubsection{Bulk-based and brane-based approach}

As a guide, we can look at the pure tension solutions, which have been derived in the previous section in the bulk-based approach. First of all, the global geometry of the solution is very transparent in this approach, since the deficit angle is directly connected to the slope of the embedding. In a brane-based approach, this information would be encoded indirectly in the form of the bulk metric. Secondly, when we send $l_2$ to zero, the normal vector (\ref{Hanayo}) becomes discontinuous at $\hxi = 0$ (i.e.~at the mathematical brane $\mcal{C}_2$, see appendix \ref{Thin limit of the background}) and the extrinsic curvature diverges there, while vanishing for $\hxi \neq 0$. In such a situation, we cannot impose (bulk) Gaussian Normal coordinates in a neighbourhood of $\mcal{C}_2$, since a necessary condition to do that is that the normal vector field is smooth \cite{CarrollBook}. Furthermore, even if we just ask that the cod-1 brane is straight, the bulk metric must be (at least) discontinuous in the bulk, because the discontinuity of the normal vector is now encoded entirely in the bulk metric (see \cite{deRham:2010rw}; of course, the singularity/discontinuity of the bulk metric is a pure coordinate artefact). This is true also for more general configurations: since the extrinsic curvature (\ref{ExtrCurvOG})--(\ref{ExtrCurvB}) is constructed from first derivatives of the bulk metric, if we fix the cod-1 brane embedding to be straight then we need a discontinuous or diverging bulk metric to generate an extrinsic curvature which diverges at one point. Therefore, a configuration where gravity is regularized is nevertheless reflected in the brane-based approach by the presence of discontinuities/divergences, with the difference that, if gravity is not regularized, then the discontinuities/divergences are not coordinate artefacts: to judge if gravity is regularized, we should perform a careful study of the behaviour of the geometry at the problematic points. 

On the other hand, this is not necessarily the case in a bulk-based approach: since the extrinsic curvature is built from \emph{second} derivatives of the embedding function, there exist a class of configurations (those where the embedding is cuspy and the bulk metric is smooth) where the extrinsic curvature diverges at $\mcal{C}_2$, while the bulk curvature tensors and the induced curvature tensors are finite. Therefore, the bulk based approach is the only choice where the fact that gravity is finite is reflected by the property that the configuration (metric and embedding) is \emph{continuous}. In our perturbative study, we would like to use an approach in which the properties of the bulk metric and cod-1 embedding reflects most clearly the fact that the intrinsic geometry diverges or not when the thin limit on the cod-2 brane is taken: therefore, we will use the bulk-based approach to perform our analysis, and we won't fix the embedding function.

\subsubsection{Perturbation of the geometry}

This in practice means that we leave both the bulk metric and the cod-1 embedding free to fluctuate, so we consider the perturbative decomposition
\begin{align}
g_{AB}(X^{\cdot}) &= \bar{g}_{AB}(X^{\cdot}) + h_{AB}(X^{\cdot}) \\[2mm]
\varphi^{A}(\xi^{\cdot}) &= \bar{\varphi}^{A}(\xi^{\cdot}) + \d\!\varphi^{A}(\xi^{\cdot})
\end{align}
while we decide to keep fixed the position of the cod-2 brane in the cod-1 coordinate system (\emph{i.e.} it is still located at $\xi = 0$). In addition, we still use the 4D coordinates of the cod-1 brane to parametrize the cod-2 brane, so the embedding of the cod-2 brane reads
\beq
\label{Joan}
\tilde{\a}^{a}(\chi^{\cdot}) = \bar{\a}^{a}(\chi^{\cdot}) = \big( 0, \chi^{\m} \big)
\eeq 
also at perturbative level. We def\mbox{}ine the perturbations of the metric induced on the cod-1 brane as follows
\beq
\label{Caribbean}
\tilde{h}_{ab}(\xi^{\cdot}) \equiv  \tilde{g}_{ab}(\xi^{\cdot}) - \bar{g}_{ab}(\xi^{\cdot})
\eeq
and analogously we define the perturbation of the metric induced on the cod-2 brane as
\beq
h^{(4)}_{\m \n}(\chi^{\cdot}) \equiv g^{(4)}_{\m \n}(\chi^{\cdot}) - \bar{g}^{(4)}_{\m \n}(\chi^{\cdot})
\eeq
It follows that, since the embedding of the cod-2 brane in the cod-1 brane is trivial, the perturbation of the metric induced on the cod-2 brane is linked to the perturbation of the metric induced on the cod-1 brane by the simple relation
\beq
h^{(4)}_{\m \n}(\chi^{\cdot}) = \tilde{h}_{\m \n}(0,\chd)
\eeq

When considering quantities which are decomposed in a background and a perturbation part, we use the convention that indices on the perturbation part of every quantity (and on the background part as well) are lowered/raised with the background metric. For example, considering the parallel vectors $\mathbf{v}_{(a)}$, we consider the perturbative decomposition
\beq
v^{A}_{(a)} = \bv^{A}_{(a)} + \dv^{A}_{(a)}
\eeq
where
\begin{align}
\bv^{A}_{(a)} &\equiv \frac{\de \bvf^A}{\de \xi^a} & \dv^{A}_{(a)} &\equiv \frac{\de \, \dvf^A}{\de \xi^a}
\end{align}
and the index-lowered background and perturbation parts read
\begin{align}
\bv_{A}^{(a)} &\equiv \e_{AB} \, \bv^{B}_{(a)} & \dv_{A}^{(a)} &\equiv \e_{AB} \, \dv^{B}_{(a)}
\end{align}
As a matter of fact, the vectors $\mathbf{v}_{(\m)}$ will not play a crucial role in the analysis, so in the following we will indicate $\bv^{A}_{(\xi)}$ and $\dv^{A}_{(\xi)}$ simply with $\bv^{A}$ and $\dv^{A}$. The 2D indices $i$, $j$ and $k$, which run on the extra dimensions $z$ and $y$, are raised/lowered with the identity matrix, so we have for example
\begin{align}
\bv_i = \bvf_i^{\p} &\equiv \d_{ij} \, \bvf^{j \, \p} & \bn^{i} &\equiv \d^{ij} \, \bn_{j}
\end{align}

\subsubsection{Perturbation of the source}

Concerning the source term, we consider a perturbed energy-momentum tensor which in cod-1 GNC is of the form
\beq
\label{Ipanema}
\hat{T}_{ab}(\hxi, \chd) = - \d_{a}^{\, \, \m} \,\, \d_{b}^{\, \, \n} \, f(\hxi) \, \big( \bla + \dla \big) \,\h{g}_{\m\n}(\hxi, \chd) + \h{\mcalT}_{ab}(\hxi, \chd)
\eeq
where $\h{\mcalT}_{ab}$ is the energy-momentum of the matter present inside the (thick) cod-2 brane and satisfies the relations (\ref{angelica}). Note that we perturb both the matter content and the tension ($\dla$) of the cod-2 brane. At linear order, the equation (\ref{Ipanema}) reads
\beq
\hat{T}_{ab} = \bar{T}_{ab} - \d_{a}^{\, \, \m} \,\, \d_{b}^{\, \, \n} \, f(\hxi) \, \bla \, \h{h}_{\m\n} + \d \hat{T}_{ab}
\eeq
where $\bar{T}_{ab}$ is the (thick) background pure tension source term, $- \d_{a}^{\, \, \m} \,\, \d_{b}^{\, \, \n} \, f(\hxi) \, \bla \, \h{h}_{\m\n}$ is a perturbation term coming from the background tension and
\beq
\d \hat{T}_{ab}(\hxi, \chd) = - \d_{a}^{\, \, \m} \,\, \d_{b}^{\, \, \n} \, f(\hxi) \,\, \dla \,\, \e_{\m\n} + \h{\mcalT}_{ab}(\hxi, \chd)
\eeq
is the perturbation term due to the tension perturbation and to the matter. Note that in principle $\bar{T}_{ab}$ and $\d \hat{T}_{ab}$ may be characterized by different localization lengths $l_2$ and $l^{\p}_2$: we ask that they are of the same order of magnitude $l_2 \sim l^{\p}_2$, and in the following for simplicity we indicate with $l_2$ the biggest between $l_2$ and $l^{\p}_2$. In particular, in the following we implicitly assume this convention when we use the notation defined in the equations (\ref{strange equation1})-(\ref{strange equation2}). It is important to notice that, in principle, the presence of the matter may alter the distribution of the tension inside the thick cod-2 brane, as a consequence of generalized Casimir effects, and that the localizing function may be dependent on the amount of tension. We assume that such effects are not present, and in particular the form of the background solution and the form of $\h{\mcalT}_{ab}$ are independent in our analysis. In analogy to what we did above, we define the matter cod-2 energy-momentum tensor as follows
\beq
\label{surprise}
\mcalT^{(4)}_{\m\n}(\chd) \equiv \int_{-}^{+} \! d \hxi \,\, \h{\mcalT}_{\m\n}(\hxi, \chd)
\eeq

\subsection{The scalar-vector-tensor decomposition}

We motivated above that, when considering nested branes with induced gravity set-ups, it is convenient to use a bulk based approach where the embedding of the cod-1 brane is free to fluctuate. 
However, fixing the gauge is in general very useful because it permits to get rid of pure gauge perturbation modes and work only with physical perturbation modes. Therefore, if we want to follow the route sketched above, we need a way to recognize and separate pure gauge modes from physical modes without fixing any gauge in the bulk.

We achieve this by performing a scalar-vector-tensor decomposition with respect to the 4D Lorentz group (acting on the coordinates $\xm$ in the bulk, on the coordinates $\xim$ on the cod-1 brane, and on the coordinates $\chm$ on the cod-2 brane $\mcal{C}_2$), and by working with gauge invariant variables. Note that, to perform this decomposition, we have to impose 4D boundary conditions which render the D'Alembert operator $\boxf$ invertible; we implicitly assume this in the following. Most importantly, the convenience of using this decomposition is also that, at linear order, the three sectors (tensor, vector and scalar) decouple, and so the equations of motion in each sector may be simpler to solve than the complete equations. In particular, we shall see below that the perturbation modes of the embedding of the cod-1 brane (\emph{bending modes}) play a role only in the scalar sector, where they are crucial for the regularization of gravity: so the decomposition also helps us to focus on the subtleties involved with the bending modes and on the mechanism we want to study.

\subsubsection{Metric and bending decomposition}

We consider the following decomposition of the bulk metric perturbation in TT-tensor, T-vector and scalar parts
\begin{align}
h_{\mu \nu} &= \mscr{H}_{\mu \nu} + \partial_{(\mu} V_{\nu)} + \e_{\mu \nu} \,\pi + \partial_{\mu}\partial_{\nu} \vp \\[2mm]
h_{z \mu} &= A_{z \mu} + \partial_{\mu} \s_{z} \\[2mm]
h_{y \mu} &= A_{y \mu} + \partial_{\mu} \s_{y} \\[2mm]
h_{yy} &= \psi \\[2mm]
h_{zy} &= \r \\[2mm]
h_{zz} &= \o
\end{align}
where all the quantities are functions of the bulk coordinates $X^{\cdot}$ and we use the notation $\dem \equiv \de/\de x^\m$. In particular, $\mscr{H}_{\mu \nu}$ is a transverse-traceless symmetric tensor while $V_\mu$ , $A_{z \mu}$ and $A_{y \mu}$ are transverse 1-forms, and $\o$, $\r$, $\psi$, $\s_{z}$, $\s_{y}$, $\pi$ and $\vp$ are scalars. Regarding the scalar part of the $\m\n$ components, we call \emph{trace part} the scalar field which multiplies the Minkowski metric (in this case $\pi$), while we call \emph{derivative part} the scalar field derivated twice with respect to the 4D coordinates (in this case $\vp$). Concerning the codimension-1 brane embedding, the bending modes $\dvf^z$ and $\dvf^y$ are scalars; for the 4D components, we consider the decomposition
\beq
\dvf_{\m} = \dvf^{_{T}}_{\m} + \dexim \dvf_4
\eeq
where $\dvf_4$ is a scalar and $\dvf^{_{T}}_{\m}$ is a transverse vector, and $\dexim \equiv \de/\de \xi^\m$.

Similarly, we perform the scalar-vector-tensor decomposition of the cod-1 induced metric with respect to the 4D coordinates $\xim$. Note that, if we consider only the 4D coordinates $x^\m$ in the bulk, the background embedding of the cod-1 brane into the bulk is trivial; this implies that the scalar/vector/tensor sector of the cod-1 induced metric is constructed only with the corresponding sector in the bulk evaluated on the brane (and with the corresponding sector of the bending modes). In other words, passing from the bulk metric to the induced metric does not couple the different sectors. In turn, this is true also when we pass form the cod-1 brane to the cod-2 brane, where we consider a scalar-vector-tensor decomposition with respect to the coordinates $\chm$. To avoid a cumbersome notation, we use henceforth the convention that the evaluation on the cod-1 brane of a bulk quantity is indicated with a tilde, so for example
\begin{align}
\ti{\mscr{H}}_{\m\n}(\xid) &\equiv \mscr{H}_{\m\n}(\Xd) \evb & \ti{\pi}(\xid) &\equiv \pi(\Xd) \evb
\end{align}
and similar definitions hold for the other bulk fields. Likewise, we use the convention that the evaluation on the (mathematical) cod-2 brane of a bulk quantity is indicated with a superscript $\!\phantom{i}^{_{(4)}}$, so for example
\begin{align}
\mscr{H}^{(4)}_{\m\n}(\chd) &\equiv \mscr{H}_{\m\n}(\Xd) \Big\rvert_{X^{\cdot} = \bar{\varphi}^{\cdot}(0,\chd)} & \pi^{(4)}(\chd) &\equiv \pi(\Xd) \Big\rvert_{X^{\cdot} = \bar{\varphi}^{\cdot}(0,\chd)}
\end{align}

\subsubsection{Source decomposition}
\label{source decomposition}

Regarding the matter cod-1 energy-momentum tensor, we consider the following decomposition 
\begin{align}
\ti{\mcalT}_{\m\n} &= \ti{\mscrT}_{\m\n} + \de_{\xi^{( \m}} \, \ti{B}_{\n)} + \deximxin \, \ti{\mcalT}_{de} + \eta_{\m\n} \, \ti{\mcalT}_{tr}  \\[2mm]
\ti{\mcalT}_{\xi \mu} &= \ti{D}_{\mu} + \partial_{\xim} \tt
\end{align}
where the symmetric tensor $\ti{\mscrT}_{\m\n}$ is transverse and traceless while $\ti{B}_{\m}$ and $\ti{D}_{\m}$ are transverse 1-forms and $\ti{\mcalT}_{\xi \xi}$, $\ti{\mcalT}_{tr}$, $\ti{\mcalT}_{de}$ are scalars. We consider also the scalar-vector-tensor decomposition of the cod-2 energy-momentum tensor with respect to the coordinates $\chd$
\beq
\mcalT^{(4)}_{\m\n} = \mscrT_{\m\n}^{(4)} + \de_{\ch^{( \m}} \, B^{(4)}_{\n)} + \dechmchn \, \mcalT^{(4)}_{de} + \eta_{\m\n} \, \mcalT^{(4)}_{tr}
\eeq
where $\mscrT_{\m\n}^{_{(4)}}$, $B^{_{(4)}}_{\m}$, $\mcalT^{_{(4)}}_{de}$ and $\mcalT^{_{(4)}}_{tr}$ are respectively the pillbox integration of $\h{\mscrT}_{\m\n}$, $\h{B}_{\m}$, $\h{\mcalT}_{de}$ and $\h{\mcalT}_{tr}$ while the pillbox integration of $\h{D}_{\mu}$, $\h{\t}$ and $\h{\mcalT}_{\xi\xi}$ vanish as a consequence of (\ref{angelica}). Note that taking the divergence of the cod-2 energy momentum tensor we get
\beq
\de^{\xi^{\n}} \mcalT^{(4)}_{\m\n} = \boxf B^{(4)}_{\m} + \dexim \Big( \boxf \mcalT^{(4)}_{de} + \mcalT^{(4)}_{tr} \Big)
\eeq
while taking the double divergence of the same tensor we get
\beq
\de^{\xim} \de^{\xi^{\n}} \mcalT^{(4)}_{\m\n} = \boxf \Big( \boxf \mcalT^{(4)}_{de} + \mcalT^{(4)}_{tr} \Big)
\eeq
Since we assume that the cod-2 energy momentum tensor is covariantly conserved, the invertibility of the operator $\boxf$ implies
\begin{align}
\boxf \mcalT^{(4)}_{de} + \mcalT^{(4)}_{tr} &= 0 \label{nottoobad1} \\[2mm]
B^{(4)}_{\m} & = 0 \label{nottoobad2}
\end{align}
which in particular means that the vector sector is not sourced at first order in perturbations when the cod-2 brane is thin. This result corresponds to the well-known result that, when considering first order perturbations around the 4D Minkowski spacetime, the interaction term in the action between the vector mode and the energy-momentum tensor is a total derivative if the latter is covariantly conserved.

\subsubsection{Bulk gauge invariant variables}

In order to identify the physical perturbation modes and the pure gauge perturbation modes, and to work only with the former modes, we recast the theory in terms of gauge invariant (g.i.) variables. Considering an infinitesimal change of coordinates in the bulk
\beq
X^{\p \, A} = X^A - \La^A(\Xd)
\eeq
the metric tensor transforms as follows
\beq
h^{\p}_{MN}(\Xd) = h_{MN}(\Xd) + \de_{X^{(M}} \,\, \La_{N)}(\Xd)
\eeq
and the bending modes transform as
\beq
\dvf^{\p \, A}(\xid) = \dvf^{A}(\xid) - \ti{\La}^A(\xid)
\eeq
where we defined $\ti{\La}^{L} \equiv \La^{L}(\bvf^{\cdot}(\xid))$ and it is intended that a prime here does not denote a derivative but just the fact that the quantities are expressed in the new coordinate system. In the context of the 4D scalar-vector-tensor decomposition, we can decompose the gauge parameter as follows: defining the index lowered gauge parameter as $\La_{N} = \eta_{NL} \, \La^{L}$, we have that $\La^{z} = \La_{z}$ and $\La^{y} = \La_{y}$ transform as scalars, while the 4D part decompose as
\beq
\La_{\m} = \La^{_{T}}_{\m} + \dem \La_{4}
\eeq
where $\La_{4}$ is a scalar while $\La^{_{T}}_{\m}$ is a transverse 1-form.

It is straightforward to check that the tensor part of the bulk metric perturbations is gauge invariant
\beq
h^{gi}_{\mu \nu} = \mscr{H}_{\mu \nu}
\eeq
while the gauge invariant variables for the vector part of the bulk metric perturbations are
\begin{align}
\mcal{A}_{z \m} = A_{z\m}^{gi} &\equiv A_{z \m} - \de_z \, V_{\m} \\[2mm]
\mcal{A}_{y \m} = A_{y\m}^{gi} &\equiv A_{y \m} - \de_y \, V_{\m}
\end{align}
and the gauge invariant variables for the scalar part of the bulk metric perturbations are
\begin{align}
\pi^{gi} &\equiv \pi \\[2mm]
h_{ij}^{gi} &\equiv h_{ij} - \ded{(i} \, \s_{j)} + \de_{i} \de_{j} \, \varpi
\end{align}
where $h_{ij}^{_{gi}}$ synthetically indicates the variables $\o^{_{gi}}$, $\r^{_{gi}}$ and $\ps^{_{gi}}$.

\subsubsection{Brane gauge invariant variables}

The Einstein equations in the bulk can be written in terms of gauge invariant variables alone, as it is easy to check. This is a consequence of the covariance of the Einstein equations, and of the fact that we can choose a gauge (the one where $V_\m = \vp = \s_z = \s_y = 0$) in which the g.i.~variables are equal to the non-zero components of the metric perturbations. We may wonder if also the perturbation of the embedding of the cod-1 brane can be described in a gauge invariant way, since at first sight the presence of the brane seems to break the diffeomorphism invariance in the bulk. This invariance is indeed broken if we describe the brane as a fixed boundary, or if we constrain in any way its position. However, if we allow the brane to fluctuate freely, the presence of the bending modes in the equations of motion actually enforces the diffeomorphism invariance in the bulk. This can be understood also from the fixed-boundary point of view: in that case, the broken diffeomorphism invariance in the bulk can be reinstated using St\"uckelberg fields, which however are physically interpreted as bending modes \cite{LutyPorratiRattazzi,NicolisRattazzi}.

In fact, we can give a gauge invariant description of the fluctuation in the brane position by considering the following gauge invariant versions of the bending modes
\begin{align}
\dvf^{i}_{gi} &\equiv \dvf^{i} + \Big[ \s_i - \half \, \de_i \vp \Big]\evb \\[1mm]
\dvf^{gi}_{4} &\equiv \dvf_{4} + \half \, \tvp \\[3mm]
\dvf_{\m}^{_{(T)}} &\equiv \dvf_{\m}^{_{T}} + \ti{V}_{\m}
\end{align}
where $\dvf^{z}_{^{gi}}$, $\dvf^{y}_{^{gi}}$ and $\dvf^{_{gi}}_{4}$ belong to the scalar sector while $\dvf_{\m}^{_{(T)}}$ belongs to the vector sector. Henceforth we refer to these variables as the brane-gauge invariant variables. Note that, in the gauge $V_\m = \vp = \s_z = \s_y = 0$, they coincide with the bending modes. 

The possibility to describe in a gauge invariant way both the perturbations of the bulk metric and the perturbations of the brane embedding, permits in the nested brane set-up to study the perturbations around the pure tension solutions in a purely gauge-invariant way.

\section{The equations of motion for the perturbations}
\label{The equations of motion for the perturbations}

\subsection{Bulk equations of motion and master variables}
\label{Bulk equations of motion and master variables}

The Einstein equations in the bulk (\ref{Bulkeq}) decompose into three groups of equations respectively for the tensor, vector and scalar part of the metric perturbations. The equation for the tensor sector is
\beq
\label{Bulkeqtensor}
\boxs \, \mscr{H}_{\m\n} = 0
\eeq
while the equations for the vector sector are
\begin{align}
\de_i \de^j \mcal{A}_{j\m} - \boxs \, \mcal{A}_{i\m} &= 0 \\[2mm]
\de^i \de_{(\m} \, \mcal{A}_{i|\n)} &= 0 \label{afterall}
\end{align}
Note that taking the 4D divergence $\de^{\n}$ of the equation (\ref{afterall}) we get
\beq
\boxf \, \de^i \mcal{A}_{i\m} = 0
\eeq
and applying the inverse of the 4D D'Alembert operator (which we indicate with $1/\boxf$) we obtain
\beq
\de^i \mcal{A}_{i\m} = 0
\eeq
Therefore, the bulk equations of motion for the vector sector are
\begin{align}
\boxs \, \mcal{A}_{i\m} &= 0 \label{Gina e Enzo} \\[2mm]
\de^i \mcal{A}_{i\m} &= 0 \label{Enzo e Gina}
\end{align}
where the equation (\ref{Gina e Enzo}) synthetically expresses the two equations for $\mcalA_{z\m}$ and $\mcalA_{y\m}$.

Regarding the scalar sector, in principle every component of the bulk Einstein equations produces an equation for the scalar sector. However, the Bianchi identity
\beq
\nabla_{M} \, G^{M}_{\,\, N} \simeq \de_{M} \, G^{M}_{\,\, N} = 0
\eeq
links together (in a differential way) the components of the Einstein tensor (independently of the fact that the metric solves the Einstein equations or not); this implies that it is sufficient to impose that $G^{z}_{\,\, z} = G^{y}_{\,\, z} = G^{y}_{\,\, y} = 0$ and that the trace part of the bulk Ricci tensor vanishes, to guarantee that the other components of the Einstein equations are satisfied. Therefore the bulk equations for the scalar sector read
\begin{align}
& \boxf \, h^{gi}_{ij} + 3 \, \d_{ij} \, \boxf \pi + 4 \, \de_{i} \de_{j} \, \pi = 0 \label{desiree} \\[3mm]
& \Box_6 \, \pi = 0 \label{Marcus} 
\end{align}
where the equation (\ref{desiree}) synthetically expresses the equations for $ij = zz$, $zy$ and $yy$.

\subsubsection{Scalar master variables}
\label{Scalar master variables}

The analysis can be greatly simplified by expressing the equations in terms of master variables \cite{Mukohyama:2000ui} (see also \cite{Kodama:2000fa}). Considering the scalar sector, a closer look to the bulk equations (\ref{desiree})--(\ref{Marcus}) reveals that the field $\pi$ obeys a decoupled equation; furthermore, the invertibility of the operator $\boxf$ implies that the other gauge invariant variables are completely determined in terms of $\pi$ by the equation (\ref{desiree}), since we have
\beq
\label{desiree2}
h^{gi}_{ij} = - 3 \, \d_{ij} \, \pi - \frac{4}{\boxf} \, \de_{i} \de_{j} \, \pi
\eeq
Moreover, evaluating the equation above on the cod-1 and cod-2 brane we can express the variables $\ti{h}^{_{gi}}_{ij}$ and $h^{_{(4)}}_{ij}$ in terms of $\ti{\pi}$ and $\pi^{_{(4)}}$: therefore, for the scalar sector we can actually work with only one metric variable, the master variable $\pi$.

Concerning the brane-gauge invariant variables, as we mentioned above the variable $\dvf^{_{gi}}_{4}$ does not appear in $\d\tilde{G}_{ab}$ and $\d \!\tilde{K}_{ab}$, so it does not appear at all in the equations of motion. Regarding the two remaining bending gauge invariant variables $\dvfzgi$ and $\dvfygi$, it is customary to describe the perturbations of the brane embedding by projecting the bending mode in the normal direction and in the parallel direction to the brane; we define therefore the normal component of the bending $\dvfn$ and the parallel component $\dvfp$
\begin{align}
\label{whoknows}
\dvfn &\equiv \bn_i \, \dvf_{gi}^{i} & \dvfp &\equiv \bv_i \, \dvf_{gi}^i
\end{align}
in terms of which the brane gauge invariant variables read
\beq
\dvf^{i}_{gi} = \dvfn \, \bn^{i} + \dvfp \, \bv^i
\eeq
It is important to keep in mind that, despite the normal and parallel components of the bending have an intuitive geometrical meaning when the normal vector is smooth, they are not well defined when the normal vector is discontinuous, while the $z$ and $y$ components of the bending remain well-defined.

A compelling reason for working with the normal and parallel components of the bending is that, when considering perturbations around a straight brane, the parallel component is a pure gauge mode since it does not appear in the equations of motion, and so the perturbation of the brane position can be described by only one master variable, $\dvfn$ . On the other hand, when the background configuration of the brane is bent, the perturbation of the extrinsic curvature contains both the normal component $\dvfn$ and the parallel component $\dvfp$ (we discuss this point in the appendix \ref{IndMetrandCurvTensApp}). Therefore, when the cod-2 brane is thick, we need to take explicitly into account $\dvfp$ as well and to work with two brane-master variables. However, this is not necessarily the case when we consider the thin limit of the cod-2 brane. In fact, in this limit the brane is straight everywhere apart from $\hxi = 0$ where the embedding is not derivable (see appendix \ref{Thin limit of the background}), and so $\dvfp$ disappears from the junction conditions for $\hxi \neq 0$. If we are able to express the junction conditions at the cod-2 brane entirely in terms of $\dvfn$ and $\pi$, the whole system (in the scalar sector) is described by two master variables: the ``metric'' master variable $\pi$, and the ``bending'' master variable $\dvfn$. This is indeed what happens in our set-up, as we shall see.

\subsubsection{Vector master variable}
\label{Vector master variable}

Regarding the vector sector, the vector bending mode $\dvf_{\m}^{_{(T)}}$ at first order does not contribute to the cod-1 Ricci and Einstein tensor and to the extrinsic curvature, as can be checked in appendix \ref{IndMetrandCurvTensApp}. Therefore the equations of motion for the vector sector involve only ``metric'' gauge invariant variables. Moreover, also in the vector sector we can introduce master variables. In fact, as it is shown in \cite{Mukohyama:2000ui}, the condition (\ref{Enzo e Gina}) implies that the gauge invariant variables $\mcalA_{z\m}$ and $\mcalA_{y\m}$ can be derived from a master variable $\Phi_{\m}$ as follows
\beq
\label{vectormasterdef}
\mcalA_{i\m} = \ep_{i}^{\,\, j} \, \de_j \, \Phi_{\m}
\eeq
where $\ep_{ij}$ is the totally antisymmetric symbol of unit weight (Levi-Civita symbol), and we choose the sign convention $\ep_{yz} = +1$. Note that the variable $\Phi_{\m}$ is \emph{not} unique, since it can be freely redefined by adding an arbitrary function of the 4D coordinates $\xm$. It is easy to check that with our convention we have
\begin{align}
\label{formula pero}
\bv^i \, \ep_{i}^{\,\, j}& = \bn^j & \bn^i \, \ep_{i}^{\,\, j}& = - \bv^j
\end{align}
and therefore
\begin{align}
\label{formula pero2}
\bv^i \, \h{\mcalA}_{i\m}& = \de_{\bar{\mathbf{n}}}\evbhsi \, \Phi_{\m} & \bn^i \, \h{\mcalA}_{i\m}& = - \de_{\bar{\mathbf{v}}}\evbhsi \, \Phi_{\m} = - \h{\Phi}_{\m}^{\p}
\end{align}

\subsection{The codimension-1 junction conditions}
\label{The codimension-1 junction conditions}

In analogy with the definition of the perturbation of the induced metric (\ref{Caribbean}), we define the perturbation of the cod-1 Einstein tensor and of the cod-1 extrinsic curvature as follows
\begin{align}
\d\ti{G}_{ab}(\xi) &\equiv \ti{G}_{ab}(\xi) - \bar{G}_{ab}(\xi) = \ti{G}_{ab}(\xi) \\[2mm]
\d\!\ti{K}_{ab}(\xi) &\equiv \ti{K}_{ab}(\xi) - \bar{K}_{ab}(\xi)
\end{align}
To study the junction conditions at perturbative level, we use henceforth the cod-1 Gaussian Normal Coordinates which we introduced in section \ref{The geometric set-up}. It may seem strange that, having paid attention to work in a gauge-invariant way in the bulk to avoid fictitious singularities produced by the bulk GNC, we choose now to impose GNC inside the cod-1 brane. To justify this, note first of all that it is always possible (at least locally) to impose Gaussian Normal Coordinates in each of the two domains which constitute the cod-1 brane and whose common boundary is the mathematical brane $\mcal{C}_2$. We can then join in a smooth way the two ``partial'' reference systems, and obtain a cod-1 Gaussian Normal Coordinates system (at least) in a neighbourhood of the cod-2 brane. This procedure works perfectly also in the thin limit, since the fact that the embedding becomes cuspy (from the bulk point of view) does not have any influence on our ability to impose the Gaussian Normal Coordinates in each of the two parts, and to join them smoothly (from a brane point of view). Therefore, the use of Gaussian Normal Coordinates inside the cod-1 brane is justified in our set-up.

\subsubsection{The perturbative junction conditions}

If we insert the perturbative decomposition of the metric and embedding around the background solutions in the junction conditions (\ref{junctionconditionseq}), we obtain the background junction conditions (which we disregard) plus a perturbation piece. The latter contains, in the left hand side, a term
\beq
- 2 M_6^4 \, \bar{K}_{\xi\xi} \, \h{h}_{ab}
\eeq
which cancels (as a consequence of the background relation (\ref{Botswana2}) and of the cod-1 GNC) the source term 
\beq
- \d_{a}^{\, \, \m} \,\, \d_{b}^{\, \, \n} \, f(\hxi) \, \bla \, \h{h}_{\m\n}
\eeq
which arises when we perturb the metric which multiplies the unperturbed tension. Disregarding these two terms as well, and noting that $\h{h}^{cd} \, \bar{K}_{cd}$ vanishes because of the cod-1 GNC, the perturbation of the junction conditions reads
\begin{align}
- 2 \, M_6^4 \, \e^{\m\n} \d \! \hat{K}_{\m\n} + M_5^3 \, \d \hat{G}_{\xi\xi} &= \d \hat{T}_{\xi\xi} \label{formula interessante1} \\[3mm]
2 M_6^4 \, \d \! \hat{K}_{\m \xi} + M_5^3 \, \d \hat{G}_{\xi \m} &= \d \hat{T}_{\m\xi} \label{formula interessante2} \\[3mm]
2 M_6^4 \Big[ \d \! \hat{K}_{\m\n} - \e_{\m\n} \Big( \d\!\h{K}_{\xi\xi} + \e^{\a\b} \d \! \h{K}_{\a\b} \Big) \Big] + M_5^3 \, \d \hat{G}_{\m\n} &= \d \hat{T}_{\m\n} \label{formula interessante3}
\end{align}
where all the quantities are functions of the cod-1 Gaussian Normal Coordinates $(\hat{\xi}, \chd)$. From the point of view of the scalar-vector-tensor decomposition, the equations (\ref{formula interessante1})--(\ref{formula interessante3}) decompose in separate junction conditions for each sector. In particular, the fields belonging to the tensor sector are present only in the $\m\n$ equation (\ref{formula interessante3}), while the fields belonging to the vector sector are present in the $\m\xi$ and $\m\n$ equations (\ref{formula interessante2}) and (\ref{formula interessante3}). As we mentioned above, the 4D components of the bending modes $\hdvf_{\m}$ do not contribute to the perturbation of the extrinsic curvature $\d \! \h{K}_{ab}$ and of the Einstein tensor $\d \! \h{G}_{ab}$; furthermore, the $z$ and $y$ components of the bending $\hdvf^{i}$ contribute only to the scalar sector of $\d \! \h{K}_{ab}$ and of $\d \! \h{G}_{ab}$ (see appendix \ref{IndMetrandCurvTensApp}). Therefore, the bending modes appear only in the scalar sector of the perturbative junction conditions (\ref{formula interessante1})--(\ref{formula interessante3}), and more in general of the equations of motion.

Using the formulae of appendix \ref{IndMetrandCurvTensApp}, the tensor sector of the perturbation of the cod-1 junction conditions reads explicitly

\beq
\label{maguardaunpo}
2 \Msf \, \de_{\bar{\mathbf{n}}}\evbhsi \, \mscr{H}_{\m\n} + \Mft \,\, \boxfi \, \h{\mscr{H}}_{\m\n} = - 2 \, \h{\mscrT}_{\m\n}
\eeq

\noi while the vector sector of the perturbation of the cod-1 junction conditions reads

\begin{align}
\label{avventura1}
2 M_6^4 \, \Big( \bv^i \, \de_{\bar{\mathbf{n}}}\evbhsi \, \mcalA_{i\m} - \bn^i \, \h{\mcalA}_{i\m}^{\p} \Big) + M_5^3 \, \boxf \, \bv^i \, \h{\mcalA}_{i\m} &= - 2 \, \h{D}_\m \\[3mm]
2 M_6^4 \, \de_{\ch^{( \m}} \, \bn^i \, \h{\mcalA}_{i|\n)} + M_5^3 \, \de_{\ch^{( \m}} \, \de_{\hxi} \Big( \bv^i \, \h{\mcalA}_{i|\n)} \Big) &= 2 \, \de_{\ch^{(\m}} \, \h{B}_{\n)}\label{avventura2}
\end{align}
Note that, taking the 4D divergence of the equation (\ref{avventura2}) and applying the operator $1/\boxf$, we get

\beq
\label{Sara}
2 M_6^4 \,\, \bn^i \h{\mcalA}_{i\m} + M_5^3 \, \de_{\hxi} \, \Big( \bv^i \h{\mcalA}_{i\m} \Big) = 2 \, \h{B}_{\m}
\eeq

Finally, considering the scalar sector, the $\xi\xi$ and the the $\xi\m$ components of the perturbation of the junction conditions eq.~(\ref{formula interessante1}) and eq.~(\ref{formula interessante2}) read respectively

\begin{align}
\label{xixijunctioncondgi}
2 \, M_6^4 \, \Big( 2 \, \de_{\bar{\mathbf{n}}}\evbhsi \, \pi - \boxf \, \bn_i \, \hdvf^{i}_{gi} \Big) + \frac{3}{2} \, M_5^3 \, \boxf \, \hp &= \h{\mcalT}_{\xi \xi} \\[3mm]
\label{ximujunctioncondgi}
2 \, M_6^4 \, \dechm \Big( \bn^i \bv^{j} \, \h{h}^{gi}_{ij} + 2 \, \bn_i \, \hdvf^{i \, \p}_{gi} \, \Big) - 3 \, M_5^3 \, \dechm \, \hp^{\p} &= 2 \, \dechm \, \hat{\t}
\end{align}
and, taking the 4D divergence of the equation (\ref{ximujunctioncondgi}) and applying the operator $1/\boxf$, we get
\beq
\label{ximujunctioncondgibis}
2 \, M_6^4 \, \Big( \bn^i \bv^{j} \, \h{h}^{gi}_{ij} + 2 \, \bn_i \, \hdvf^{i \, \p}_{gi} \, \Big) - 3 \, M_5^3 \, \, \hp^{\p} = 2 \, \hat{\t}
\eeq
Regarding the (scalar sector of the) $\m\n$ components of the perturbation of the junction conditions eq.~(\ref{formula interessante3}), the derivative part reads

\beq
\label{dermunujunctioncondgi}
2 M_6^4 \, \dechmchn \, \bn_i \, \hdvf^{i}_{gi} + M_5^3 \, \dechmchn \bigg( \! - \frac{1}{2} \, \bv^{i} \bv^{j} \, \h{h}^{gi}_{ij} - \hp + \bv_{i}^{\p} \, \hdvf^{i}_{gi} \bigg) = \dechmchn \hat{\mcal{T}}_{de}
\eeq
while the trace part reads
\begin{multline}
2 M_6^4 \, \bigg( \frac{3}{2} \, \de_{\bar{\mathbf{n}}}\evbhsi \, \pi + \frac{1}{2} \, \bv^{i} \bv^{j} \, \de_{\bar{\mathbf{n}}}\evbhsi \, h^{gi}_{ij} - \bn^i \bv^{j} \, \de_{\bar{\mathbf{v}}}\evbhsi \, h^{gi}_{ij} - \half \, \bn^i \bn^j \, \h{h}^{gi}_{ij} \, \big( \bn^k \bv_{k}^{\p} \big) - \\[2mm]
- \bn_i \, \boxfi \, \hdvf^{i}_{gi} \bigg) + M_5^3 \, \bigg( \frac{1}{2} \, \bv^{i} \bv^{j} \, \, \boxf \h{h}^{gi}_{ij} + \frac{3}{2} \, \hp^{\p \p} + \boxf \, \hp - \bv_{i}^{\p} \, \boxf \, \hdvf^{i}_{gi} \bigg) = \hat{\mcal{T}}_{tr} - f(\hxi) \, \dla \label{gloriagi}
\end{multline}
Taking the trace of the ``derivative'' equation (\ref{dermunujunctioncondgi}) we get
\beq
\label{dermunujunctioncondgibis}
2 M_6^4 \, \bn_i \, \boxf \, \hdvf^{i}_{gi} + M_5^3 \bigg( \! - \frac{1}{2} \, \bv^{i} \bv^{j} \, \boxf \, \h{h}^{gi}_{ij} - \boxf \, \hp + \bv_{i}^{\p} \, \boxf \, \hdvf^{i}_{gi} \bigg) = \boxf \, \hat{\mcal{T}}_{de}
\eeq

\subsubsection{The pure tension case}
\label{The pure tension case}

As a first check of the consistency of our analysis, we consider the case of a pure tension perturbation. In this case, the energy-momentum for the matter vanishes $\h{\mcalT}_{ab} = 0$ and we have
\beq
\d \hat{T}_{ab} = - \d_{a}^{\, \, \m} \,\, \d_{b}^{\, \, \n} \, f(\hxi) \,\, \dla \,\, \e_{\m\n}
\eeq
Inspired by the form of the background solutions, we consider an ansatz for the perturbation fields where the bulk metric perturbation $h_{AB}$ and the 4D components of the bending $\hdvf_{\m}$ vanish, and the $z$ and $y$ component of the bending depend only on $\hxi$ (note that in this case the bending modes coincide with their gauge invariant versions). This implies that
\begin{align}
\h{h}_{\xi\xi} &= 2 \, \bv_i \, \hdvf^{i \, \p} & \h{h}_{\xi\m} &= 0
\end{align}
so the requirement that the coordinate system $(\hxi,\chd)$ is Gaussian Normal inside the cod-1 brane is equivalent to the condition $\bv_i \, \hdvf^{i \, \p} = 0$. It is easy to see that all the equations of motion are identically satisfied apart from (\ref{gloriagi}) which reads
\beq
\label{Eszter}
2 M_6^4 \,\, \bn_i \, \hdvf^{i \, \p \p}  = f(\hxi) \, \dla
\eeq
To solve this equation, it is useful to introduce the orthogonal and parallel component of the perturbation of the parallel vector $\mathbf{v}_{(\xi)}$, namely
\begin{align}
\label{deltavdefinitiongi}
\d v_{\perp} &\equiv \bn_i \, \dvf_{gi}^{i \, \p} & \d v_{\shortparallel} &\equiv \bv_i \, \dvf_{gi}^{i \, \p}
\end{align}
and in particular we note that in this case in cod-1 GNC we have $\d \h{v}_{\shortparallel} = 0$. Using the identity $\bn_i \, \hdvf^{i \, \p \p} = (\bn_i \, \hdvf^{i \, \p})^{\p} - \bn_{i}^{\p} \, \hdvf^{i \, \p}$ and the relation (\ref{Bobbi}), we can express the equation (\ref{Eszter}) as  
\beq
2 M_6^4 \,\, \dhvn^{\p}  = f(\hxi) \, \dla
\eeq
which can be integrated to give
\beq
\label{republic}
\dhvn(\hxi) = \frac{\dla}{2 M_6^4} \int_{0}^{\hxi} f(\z) \, d\z
\eeq
which in particular implies that $\dhvn(\hxi)$ is constant for $\hxi \gtrless \pm l_2$ . Since (again using (\ref{Bobbi})) we have in general
\beq
\dvfn^\p = - \Sp \, \dvfp + \dvn
\eeq
we conclude that for $\abs{\hxi} \geq l_2$ we have
\beq
\label{Broncos}
\hdvfn (\hxi) = \frac{\dla}{4 M_6^4} \, \abs{\hxi} + \d\!\h{\vf}_0
\eeq
where $\d\!\h{\vf}_0$ is an integration constant. Note that the equations of motion does not fix $\d\!\h{\vf}_0$, which is then arbitrary; this is consistent with the fact that, since the bulk is exactly Minkowsky, a rigid translation of the brane is a symmetry of the system.

To understand the geometrical meaning of this configuration, we note that the (total) embedding function is of the form
\beq
\vf^A(\hxi) = \Big( \mZ(\hxi), \mY(\hxi), 0,0,0,0 \Big)
\eeq
where $\mZ = Z + \hdvf^{z}$ and $\mY = Y + \hdvf^{y}$. The solution defined by (\ref{Broncos}) corresponds to a configuration where the bulk is a ($Z_2$ symmetric) couple of slices of the 6D Minkowski spacetime, such that the total deficit angle is $\a = 4 \vartheta_+$ where
\beq
\tan \vartheta_+ = \frac{d \mZ (\mY)}{d \mY}\bigg\rvert_{+} = \frac{d \mZ (\hxi)}{d \hxi}\bigg\rvert_{+} \, \bigg( \frac{d \mY (\hxi)}{d \hxi} \bigg)^{\!\! -1}\bigg\rvert_{+}
\eeq
which at first order in perturbations reads
\beq
\tan \vartheta_+ = \tan S_+ + \frac{1}{\cos^2 S_+} \, \dvn\Big\rvert_{+}
\eeq
Applying the $\arctan$ to both sides of the former equation and expanding it around the value $\tan S_+$ in the right hand side, we get at first order
\beq
\vartheta_+ = S_+ + \dvn\Big\rvert_{+} = \frac{\bla + \dla}{4 M_6^4}
\eeq
where we used the background relation (\ref{Dom}) and the perturbative solution (\ref{republic}) to obtain the second equality. We conclude that a pure tension perturbation $\dla$ on the cod-2 brane produces a variation of the deficit angle
\beq
\d \a = \frac{\dla}{M_6^4}
\eeq
while the bulk metric remains the Minkowski metric. This is the same result we get from the exact solutions we obtained in section \ref{Pure tension solutions}, and therefore suggests that our perturbative analysis is consistent.

\section{The thin limit of the codimension-2 brane}
\label{ThinLimitCod2Brane}

We now turn to the analysis of the thin limit of the ribbon cod-2 brane, i.e.~to the limit $l_2 \rightarrow 0^+$.

\subsubsection*{The thin limit procedure}

In general, the thin limit description of a theory with localized sources is a description which provides a very good approximation to the true theory when we focus on length scales which are much bigger than the typical localization scales of the sources. However, to perform the thin limit in practice it is more convenient to adopt a different (but equivalent) point of view: we consider a fixed theory, and consider source configurations whose thickness becomes smaller and smaller, while keeping constant the ``total'' amount of the source.

In our case, we construct a sequence of source configurations $\h{T}^{_{[n]}}_{ab}$ whose localization length $l_2^{_{[n]}}$ tends to zero when $n \rightarrow + \infty$, but such that the cod-2 energy momentum tensor (defined in (\ref{cristina})) is independent from $n$. In the perturbative decomposition, this implies that we consider a sequence of localizing functions $f_{^{[n]}}$ which is a realization of the Dirac delta function, so that the background energy-momentum tensor converges to a thin pure tension configuration
\beq
\bar{T}^{[n]}_{ab}(\hxi, \chd) \, \xrightarrow{n \rightarrow + \infty} \, - \, \d_{a}^{\, \, \m} \,\, \d_{b}^{\, \, \n} \,\, \d(\hxi) \, \bla \,\, \e_{\m\n}
\eeq
Likewise, we consider a sequence of matter energy-momentum tensors $\h{\mcalT}^{_{[n]}}_{ab}$ such that the matter cod-2 energy momentum tensor (defined in (\ref{surprise})) is independent from $n$; we still indicate it with $\mcalT^{_{(4)}}_{\m\n}$. For each value of $n$, we consider the embedding and bulk metric configurations which solve the equation of motion with $\h{T}^{_{[n]}}_{ab}$ as a source. In particular, to the sequence $\bar{T}^{_{[n]}}_{ab}$ we associate the sequence of background embeddings $Z_{^{[n]}}$ and $Y_{^{[n]}}$, and to the sequence $\h{\mcalT}^{_{[n]}}_{ab}$ we associate the sequence of bending modes $\dvf^{i}_{^{[n]}}$ and the sequence of metric perturbations $\mscr{H}^{_{[n]}}_{\m\n}$, $\pi_{^{[n]}}$ and so on. For economy of notation, we indicate with $h^{_{[n]}}_{ij}$ and $\dvf^{i}_{^{[n]}}$ respectively the sequences associated to the gauge invariant variables $h^{_{gi}}_{ij}$ and $\dvf^{i}_{^{gi}}$ and not to $h_{ij}$ and $\dvf^{i}$ themselves. To indicate the limit configurations to which to sequences of fields converge, we substitute the symbol $[n]$ with the symbol $\infty$, so for example $Z_{^{[n]}} \rightarrow Z_{\infty}$, $\pi_{^{[n]}} \rightarrow \pi_{\infty}$, $\dvf^{i}_{^{[n]}} \to \dvf^{i}_{\infty}$ and so on. The thin limit of the background configurations is studied in appendix \ref{Thin limit of the background}.

Note that, at linear order in perturbations, the effect on the metric perturbations and on the bending modes of a tension perturbation and of a matter perturbation is additive, so we may write
\beq
\dvf^i = \dvf^i_{pt} + \dvf^i_{pm}
\eeq
where $\dvf^i_{^{pt}}$ is the bending correspondent to a pure tension perturbation and $\dvf^i_{^{pm}}$ is the bending correspondent to a pure matter perturbation (the perturbation of the metric vanishes in the pure tension case, so $h_{_{AB}} = h_{^{AB}}^{_{pm}}$). We already studied the effect of a pure tension source perturbation in our framework in section \ref{The pure tension case}: therefore, from now on we will consider only a pure matter perturbation, and will implicitly assume this even if, for economy of notation, we will omit to write explicitly the subscript/superscript ``$_{pm}/^{pm}\,$''.

\subsection{The singular structure of the perturbations f\mbox{}ields}

To understand what happens to the solutions of the equations of motion when the cod-2 brane becomes thin, the only thing we can do is to make hypothesis on the behaviour of the perturbation fields when the thin limit is taken, and a posteriori study if these assumptions are consistent. Since we want to investigate if gravity remains finite in the thin limit, we propose here an ansatz which generates a finite gravitational field on the thin cod-2 brane.

\subsubsection{The geometric ansatz}
\label{The geometric ansatz}

Naively speaking, the non-trivial point in the phenomenon of gravity regularization in our set-up is that, when we perform the thin limit, we need to generate a delta-function singularity in the left-hand side of the equations (\ref{dermunujunctioncondgi}) and (\ref{gloriagi}) (therefore, in the extrinsic curvature and/or in the cod-1 Einstein tensor), while having a finite cod-2 Ricci tensor. We mentioned above that there exists a class of configurations which has exactly these properties, i.e.~the configurations where the bulk metric is smooth while the embedding converges to a profile which is cuspy in the $\hxi$-direction at $\mcal{C}_2$. These configurations satisfy the requirements because the delta singularity is produced via second $\hxi$-derivatives of the embedding function, which being continuous gives rise to an induced metric which is finite on the cod-1 brane even at the position of the cod-2 brane.

Under this hypothesis, the only terms which can diverge in the cod-1 extrinsic curvature and Einstein tensor are those which contain second derivatives with respect to $\hxi$ of the embedding functions (background or perturbation part), and the second derivatives with respect to $\hxi$ of the bulk perturbations evaluated on the brane (e.g.~$\hp^{\p\p}$ and $\h{\mscr{H}}_{\m\n}^{\p\p}$), while the evaluation on the brane of the bulk perturbations and their derivatives with respect to the bulk coordinates do not diverge (for example $\dezq \pi$ and $\dezq \pi$ evaluated in $\vfd(\xid)$ are bounded). Taking a look at the sections \ref{Ricci and Einstein tensors} and \ref{Extrinsic curvature} of the appendix \ref{IndMetrandCurvTensApp}, it is clear that the only components of the cod-1 curvature tensors $\d\h{G}_{ab}$ and $\d\!\h{K}_{ab}$ that can diverge are $\d\h{G}_{\m\n}$ and $\d\!\h{K}_{\xi\xi}$. This is promising, since $\d\h{G}_{\m\n}$ and $\d\!\h{K}_{\xi\xi}$ appear only in the $\m\n$ components of the junction conditions (\ref{formula interessante3}) which are the only components which are sourced by a diverging energy-momentum tensor in the thin limit. Furthermore, taking a look at the junction condition for the tensor sector (\ref{maguardaunpo}) and at the ``derivative'' and ``trace'' junction conditions for the scalar sector (\ref{dermunujunctioncondgi}) and (\ref{gloriagi}), it is evident that each of these equations possesses on the left hand side a diverging term which may balance the diverging coming from the energy-momentum on the right hand side. It is worthwhile to notice that, both for the tensor and the ``derivative'' equation, the diverging pieces are contributed solely by the cod-1 induced gravity term: in absence of induced gravity on the cod-1 brane, this ansatz would not be valid, apart from the special case of a pure tension perturbation. We further comment on this point in the Conclusions.

We therefore choose this ansatz as a working hypothesis on the structure of the perturbation fields in the thin limit. Since the pillbox integration across the cod-2 brane is subtle, it turns out to be necessary to spell out very clearly the character of convergence of the sequences of perturbation fields to their limit configuration. Consistently with the considerations above, we assume that the perturbation of the bulk metric and its derivatives of every order converge \emph{uniformly} to smooth limit functions, and that the perturbation of the components of the embedding and all its 4D derivatives converge \emph{uniformly} to continuous limit functions. On the other hand, we assume that the first derivative with respect to $\hxi$ of the components of the embedding converge \emph{pointwise} to limit functions which are not necessarily continuous in $\hxi = 0$. This is consistent with the convergence properties of the sequences of background embedding functions $Z_{^{[n]}}$ and $Y_{^{[n]}}$, which we discuss in appendix \ref{Thin limit of the background} (for the definition of the standard concepts of uniform convergence and pointwise convergence of a sequence of functions see e.g.~\cite{Rudin}).

\subsubsection{Pure cod-1 and cod-2 junctions conditions}

From the point of view of the structure of the equations of motion, the thin limit has the effect of splitting the junction condition into two sets of conditions. In fact, when the cod-2 brane is thick we have to solve the equations both outside the physical cod-2 brane, where $\bv^i$ and $\bn_i$ are constant (external solutions), and inside the cod-2 brane (internal solution), and join smoothly these solutions at the boundaries of the cod-2 brane. Since $l_2^{_{[n]}} \rightarrow 0$ for $n \rightarrow + \infty$, in the thin limit the equations for the ``external'' fields are valid for $\hxi \neq 0$: the equations for the internal fields then generate a set of conditions which relate the value of the external fields at $\hxi = 0^-$ and $\hxi = 0^+$. We refer to the former set of equations as \emph{pure codimension-1 junction conditions} and to the latter set of conditions as \emph{codimension-2 junction conditions}.

To derive the codimension-2 junction conditions, it is necessary to make explicit use of the parity properties of the fields with respect to the reflection $\hxi \rightarrow - \hxi$. Note first of all that, by construction, $\Zp$ is odd while $\Yp$ is even. As we mentioned in section \ref{The equations of motion}, the $\mathbb{Z}_2$ symmetry present inside the cod-1 brane implies that the $\xi\xi$ and $\m\n$ components of $\h{\mathbf{g}}$, $\h{\mathbf{R}}$ and $\h{\mathbf{K}}$ are even, while their $\xi\m$ components are odd. From (\ref{color1})--(\ref{color6}) it follows that $\h{h}^{_{gi}}_{zz}$, $\h{h}^{_{gi}}_{yy}$, $\hp$, $\hdvf^{z}_{^{gi}}$, $\h{\mcalA}_{z\m}$ and $\h{\mscr{H}}_{\m\n}$ are even, while $\h{h}^{_{gi}}_{zy}$, $\hdvf^{y}_{^{gi}}$ and $\h{\mcalA}_{y\m}$ are odd. Furthermore, every $\dehxi$ derivative changes the parity from even to odd and the other way around, while the 4D derivatives $\de_{\chm}$ leave the parity unaltered. Expressing the $\dehxi$ derivative as $\bv^i \de_i$, we can then infer that $\dez$ derivatives acting on a bulk field leave unaltered the parity, while $\dey$ changes it. We stress that this is true only when the bulk fields and their $\de_i$ derivatives are evaluated \emph{on the cod-1 brane}, since we don't impose a $\mathbb{Z}_2 \times \mathbb{Z}_2$ symmetry in the bulk and therefore the bulk fields do not have definite parity properties away from the cod-1 brane.

\subsection{Tensor and vector sectors}

\subsubsection{Tensor sector}

Let's consider first the tensor sector. Considering the equation (\ref{maguardaunpo}), we first note that, for $\abs{\hxi} \geq l_2^{_{[n]}}$, each element of the sequence $\mscr{H}^{_{[n]}}_{\m\n}$ obeys

\beq
2 \Msf \, \de_{\bar{\mathbf{n}}_{[n]}}\evbhsin \, \mscr{H}^{[n]}_{\m\n} + \Mft \,\, \boxfi \, \h{\mscr{H}}^{[n]}_{\m\n} = 0
\eeq
Taking the limit $n \rightarrow + \infty$, the limit of the sequence (if it exists) satisfies for $\hxi \neq 0$ the pure cod-1 junction condition

\beq
2 \Msf \, \de_{\bar{\mathbf{n}}_{_{\infty}}}\evbhsinf \, \mscr{H}^{\infty}_{\m\n} + \Mft \,\, \boxfi \, \h{\mscr{H}}^{\infty}_{\m\n} = 0
\eeq

To derive the cod-2 junction condition, we substitute $\bar{\mathbf{n}}$, $\mscr{H}_{\m\n}$ and $\mscrT_{\m\n}$ respectively with $\bar{\mathbf{n}}^{_{[n]}}$, $\mscr{H}^{_{[n]}}_{\m\n}$ and $\mscrT^{_{[n]}}_{\m\n}$ in the equation (\ref{maguardaunpo}) and perform a pillbox integration across the cod-2 brane. Since by our ansatz both $\bn_{^{[n]}}^i$ and $\de_i \, \mscr{H}^{_{[n]}}_{\m\n}$ remain bounded in the $n \rightarrow + \infty$ limit, we get
\beq
\lim_{n \rightarrow + \infty} \int_{-l_2^{[n]}}^{+l_2^{[n]}} \! d\hxi \,\, \bigg[ \, 2 \Msf \, \bn^{i}_{[n]} \, \de_{i}\evbhsin \, \mscr{H}_{\m\n}^{[n]} + \Mft \,\, \boxfi \, \h{\mscr{H}}_{\m\n}^{[n]} \, \bigg] = \Mft \, \bigg[ \dehxi \, \h{\mscr{H}}_{\m\n}^{\infty} \, \bigg]_{0^{\pm}}
\eeq
and therefore we obtain the cod-2 junction condition for the tensor sector
\beq
\Mft \, \de_{\hxi}\Big\rvert_{0^+} \h{\mscr{H}}^{\infty}_{\m\n} = - \mscrT^{(4)}_{\m\n}
\eeq

\subsubsection{Vector sector - gauge invariant variables formulation}

Regarding the vector sector, for $\abs{\hxi} \geq l_2^{_{[n]}}$, each element of the sequences $\h{\mcalA}^{_{[n]}}_{z\m}$ and $\h{\mcalA}^{_{[n]}}_{y\m}$ obeys

\begin{align}
2 M_6^4 \, \Big( \bv^i_{[n]} \, \de_{\bar{\mathbf{n}}_{[n]}}\evbhsin \, \mcalA_{i\m}^{[n]} - \bn^i_{[n]} \, \h{\mcalA}_{i\m}^{[n] \, \p} \Big) + M_5^3 \, \boxf \, \bv^i_{[n]} \, \h{\mcalA}_{i\m}^{[n]} &= 0 \label{Vectorgi1} \\[3mm]
2 M_6^4 \, \bn^i_{[n]} \, \h{\mcalA}^{[n]}_{i\m} + M_5^3 \, \bv^i_{[n]} \, \h{\mcalA}^{[n] \, \p}_{i\m} &= 0 \label{Vectorgi2}
\end{align}
where the eq.~(\ref{Vectorgi1}) comes from the $\xi\m$ components of the junction conditions eq.~(\ref{avventura1}) and the eq.~(\ref{Vectorgi2}) comes from the $\m\n$ components of the junction conditions eq.~(\ref{Sara}). Therefore, the pure cod-1 junction conditions for the vector sector are
\begin{align}
2 M_6^4 \, \bigg( \bv^i_{\infty} \, \de_{\bar{\mathbf{n}}_{_{\infty}}}\evbhsinf \, \mcalA_{i\m}^{\infty} - \bn^i_{\infty} \, \h{\mcalA}_{i\m}^{\infty \, \p} \bigg) + M_5^3 \, \boxf \, \bv^i_{\infty} \, \h{\mcalA}_{i\m}^{\infty} &= 0 \label{VectorGI1} \\[3mm]
2 M_6^4 \, \bn^i_{\infty} \, \h{\mcalA}^{\infty}_{i\m} + M_5^3 \, \bv^i_{\infty} \, \h{\mcalA}^{\infty \, \p}_{i\m} &= 0 \label{VectorGI2}
\end{align}
which are again valid for $\hxi \neq 0$.

To derive the cod-2 junction conditions, we note that the $\m\n$ equation (\ref{Sara}) contains diverging pieces while the $\xi\m$ equation (\ref{avventura1}) contains at most discontinuous pieces, and its source term $\h{D}_{\m}$ vanishes in the thin limit. Therefore, the cod-2 junction conditions are obtained by imposing that the left hand side of the $\xi\m$ pure cod-1 junction condition (\ref{VectorGI1}) is continuous at $\hxi = 0$ (i.e.~its $\hxi \to 0^+$ and $\hxi \to 0^-$ limit coincide), and by performing a pillbox integration of the $\m\n$ equation (\ref{Sara}). The former condition, using $\de_{\hxi} = \bv^i \de_i$, can be rewritten as
\beq
\label{VectorGG}
\bigg[ 2 M_6^4 \, \Big( Z^{\p \, 2}_{\infty} + Y^{\p \, 2}_{\infty} \Big) \bigg( \de_{z}\evbhsinf \, \mcalA_{y\m}^{\infty} - \de_{y}\evbhsinf \, \mcalA_{z\m}^{\infty} \bigg) + M_5^3 \, \boxf \, \Big( \Zp_{\infty} \, \h{\mcalA}_{z\m}^{\infty} + \Yp_{\infty} \, \h{\mcalA}_{y\m}^{\infty} \Big) \bigg]_{0^{\pm}} = 0
\eeq
where $[\phantom{a}]_{0^{\pm}}$ means $\vert_{0^+} - \vert_{0^-}$ as usual. Note that the left hand side of this equation is a linear combination of terms which are odd with respect to the reflection symmetry $\hxi \rightarrow -\hxi$ and continuous\footnote{The functions $Z^{\p \, 2}_{\infty}$ and $Y^{\p \, 2}_{\infty}$ are not continuous at $\hxi = 0$, but their product with an odd and continuous function is.}, with the only exception of the term $M_5^3 \, \boxf \, \Zp_{\infty} \, \h{\mcalA}_{z\m}^{\infty}$ which is odd and discontinuous at $\hxi = 0$. Since every odd and continuous term vanishes both in $\hxi = 0^+$ and in $\hxi = 0^-$, the condition (\ref{VectorGG}) gives
\beq
M_5^3 \, \boxf \, \Zp_{\infty}\Big\rvert_{0^+} \, \h{\mcalA}_{z\m}^{\infty}\Big\rvert_{\hxi = 0} = 0
\eeq
which implies
\beq
\label{Ave}
\h{\mcalA}_{z\m}^{\infty}\Big\rvert_{\hxi = 0} = 0
\eeq
On the other hand, taking into account (\ref{nottoobad2}), the pillbox integration of the equation (\ref{Sara}) reads
\beq
\lim_{n \to \infty} \int_{-l_2^{[n]}}^{l_2^{[n]}} \bigg[ 2 M_6^4 \,\, \bn^{i}_{[n]} \, \h{\mcalA}^{[n]}_{i\m} + M_5^3 \, \de_{\hxi} \, \Big( \bv^{i}_{[n]} \, \h{\mcalA}^{[n]}_{i\m} \Big) \bigg] d\hxi  = 0
\eeq
and, since by our ansatz both $\bn_{^{[n]}}^i$ and $\h{\mcalA}^{[n]}_{i\m}$ remain bounded in the $n \rightarrow + \infty$ limit, we get
\beq
M_5^3 \, \bigg[ \bv^{i}_{\infty} \, \h{\mcalA}^{\infty}_{i\m} \bigg]_{0^{\pm}} = 0
\eeq
Since $\h{\mcalA}^{\infty}_{y\m}$ vanishes in $\hxi = 0^{\pm}$ because it is odd and continuous, we get
\beq
2 M_5^3 \, \Zp_{\infty}\Big\rvert_{0^+} \, \h{\mcalA}^{\infty}_{z\m}\Big\rvert_{\hxi = 0} = 0
\eeq
and so we reproduce (\ref{Ave}). Therefore, the gauge invariant vector modes vanish on the cod-2 brane in the thin limit: this is compatible with the observation at the end of section \ref{source decomposition} that the vector modes are not sourced in the thin limit.

\subsubsection{Vector sector - master variable formulation}
\label{Vector sector - master variable formulation}

Using the relations (\ref{vectormasterdef}) and (\ref{formula pero2}), we can recast the thin limit equations of motion for the vector sector in terms of the master variable $\Phi_{\m}^{\infty}$. Considering first the pure cod-1 junction conditions, the equation (\ref{VectorGI1}) in terms of $\Phi_{\m}^{\infty}$ reads
\beq
\label{VectorMaster}
2 M_6^4 \, \bigg( \de^{2}_{\bar{\mathbf{n}}_{_{\infty}}}\evbhsinf \, \Phi_{\m}^{\infty} + \de^{2}_{\bar{\mathbf{v}}_{_{\infty}}}\evbhsinf \, \Phi_{\m}^{\infty} \bigg) + M_5^3 \, \boxf \, \de_{\bar{\mathbf{n}}_{_{\infty}}}\evbhsinf \, \Phi_{\m}^{\infty} = 0
\eeq
Since $\bar{\mathbf{v}}_{_{\infty}}$ and $\bar{\mathbf{n}}_{_{\infty}}$ are orthonormal, we have
\beq
\de^{2}_{\bar{\mathbf{n}}_{_{\infty}}} + \de^{2}_{\bar{\mathbf{v}}_{_{\infty}}} = \de^{2}_{z} + \de^{2}_{y} \equiv \bigtriangleup_{2}
\eeq
and therefore we can express (\ref{VectorMaster}) as
\beq
\label{vectorpurecod1a}
2 M_6^4 \, \bigtriangleup_{2}\evbhsinf \, \Phi_{\m}^{\infty} + M_5^3 \, \boxf \, \de_{\bar{\mathbf{n}}_{_{\infty}}}\evbhsinf \, \Phi_{\m}^{\infty} = 0
\eeq
where $\bigtriangleup_{2}$ is the 2D Laplace operator. Similarly, the equation (\ref{VectorGI2}) in terms of $\Phi_{\m}^{\infty}$ reads
\beq
- 2 M_6^4 \, \de_{\bar{\mathbf{v}}_{_{\infty}}}\evbhsinf \, \Phi_{\m}^{\infty} + M_5^3 \, \de_{\bar{\mathbf{n}}_{_{\infty}}}\evbhsinf \, \de_{\bar{\mathbf{v}}_{_{\infty}}}\evbhsinf \, \Phi_{\m}^{\infty} = 0
\eeq
or equivalently
\beq
\label{vectorpurecod1b}
\de_{\hxi} \, \bigg( 2 M_6^4 \, \h{\Phi}_{\m}^{\infty} - M_5^3 \, \de_{\bar{\mathbf{n}}_{_{\infty}}}\evbhsinf \, \Phi_{\m}^{\infty} \bigg) = 0
\eeq

Regarding the cod-2 junction conditions, the conditions
\begin{align}
\h{\mcalA}_{z\m}^{\infty}\Big\rvert_{\hxi = 0} &= 0 & \h{\mcalA}_{y\m}^{\infty}\Big\rvert_{\hxi = 0} &= 0
\end{align}
imply that the $z$ and $y$ derivatives of $\Phi_{\m}^{\infty}$ vanish on the cod-2 brane
\begin{align}
\de_{z}\Big\rvert_{\bvfd(0,\chd)} \, \Phi_{\m}^{\infty} &= 0 & \de_{y}\Big\rvert_{\bvfd(0,\chd)} \, \Phi_{\m}^{\infty} &= 0
\end{align}

\subsection{Scalar sector}

\subsubsection{Pure cod-1 junctions conditions}

To derive the pure cod-1 junction conditions for the scalar sector, it is sufficient to substitute in the equations (\ref{xixijunctioncondgi}), (\ref{ximujunctioncondgibis}), (\ref{gloriagi}) and (\ref{dermunujunctioncondgibis}) the background and perturbation fields with the correspondent sequences indexed by $[n]$, to impose $\bv_i^{\p} = 0$ and to set to zero the right hand side of these equations. Taking the limit $n \to \infty$ we then obtain the pure cod-1 junction conditions

\noi ($\xi\xi$ component)
\beq
\label{Gingerxixi}
2 \, M_6^4 \, \bigg( 2 \, \de_{\bar{\mathbf{n}}_{\infty}}\evbhsinf \, \pi_{\infty} - \boxf \, \hdvfnin \bigg) + \frac{3}{2} \, M_5^3 \, \boxf \, \hp_{\infty} = 0
\eeq

\noi ($\xi\m$ components)
\beq
\label{Gingerximu}
2 \, M_6^4 \, \bigg( \half \, \bn^i_{\infty} \bv^{j}_{\infty} \, \h{h}^{\infty}_{ij} + \hdvfninp \, \bigg) - \frac{3}{2} \, M_5^3 \, \, \hp_{\infty}^{\p} = 0
\eeq

\noi ($\m\n$ components, derivative)
\beq
\label{Gingerdermunu}
2 M_6^4 \, \boxf \, \hdvfnin + M_5^3 \bigg( \! - \frac{1}{2} \, \bv^{i}_{\infty} \bv^{j}_{\infty} \, \boxf \, \h{h}^{\infty}_{ij} - \boxf \, \hp_{\infty} \bigg) = 0
\eeq

\noi ($\m\n$ components, trace)
\begin{multline}
\label{Gingertracemunu}
2 M_6^4 \, \bigg( \frac{3}{2} \, \de_{\bar{\mathbf{n}}_{\infty}}\evbhsinf \, \pi_{\infty} + \frac{1}{2} \, \bv^{i}_{\infty} \bv^{j}_{\infty} \, \de_{\bar{\mathbf{n}}_{\infty}}\evbhsinf \, h^{\infty}_{ij} - \bn^i_{\infty} \bv^{j}_{\infty} \, \de_{\bar{\mathbf{v}}_{\infty}}\evbhsinf \, h^{\infty}_{ij} - \boxfi \, \hdvfnin \bigg) + \\[1mm]
+ M_5^3 \, \bigg( \frac{1}{2} \, \bv^{i}_{\infty} \bv^{j}_{\infty} \, \, \boxf \h{h}^{\infty}_{ij} + \frac{3}{2} \, \hp^{\p \p}_{\infty} + \boxf \, \hp_{\infty} \bigg) = 0
\end{multline}
which are valid for $\hxi \neq 0$.

These four equations are actually not independent, but are linked by differential relations if we take into account the bulk equations. In fact, expressing the equations in terms of the metric master variable $\pi$ (using the relation (\ref{desiree2}), which encodes part of the bulk equations), it is possible to see that the equations above are linked by the relations
\begin{align}
\de_\xi \, (\ref{Gingerxixi}) + \boxf (\ref{Gingerximu}) &= 0 \\[2mm]
\de_\xi \, (\ref{Gingerximu}) + (\ref{Gingerdermunu}) + (\ref{Gingertracemunu}) &= 0
\end{align}
This implies that only two of the four equations (\ref{Gingerxixi})--(\ref{Gingertracemunu}) are independent: we choose to work with the equation $(\ref{Gingerxixi}) + (\ref{Gingerdermunu})$ and with the (5D trace) equation $(\ref{Gingerxixi}) + (\ref{Gingerdermunu}) + 4 \,\, (\ref{Gingertracemunu})$. Using the relations  
\begin{align}
\bv^{i} \bv^{j} \, \de_{i} \, \de_{j} \, \pi \evbhsi &= \hp^{\p \p} \\[1mm]
\bv^{i} \bv^{j} \, \bn^{k} \,\, \de_{i} \, \de_{j} \, \de_{k} \, \pi \evbhsi &= \de^2_{\hxi} \, \bigg( \de_{\bar{\textbf{n}}} \, \pi\evbhsi \bigg)
\end{align}
which are valid where $\bv_i^{\p} = \bn_i^{\p} = 0$, these two equations read in terms of the master variables

\beq
\label{purecod1pi}
2 \, M_6^4 \, \de_{\bar{\textbf{n}}_{\infty}}\evbhsinf \, \pi_{\infty} + M_5^3 \, \boxfi \, \hp_{\infty} = 0
\eeq
and

\beq
\label{purecod1bending}
\boxfi \, \hdvfnin = \half \, \de_{\bar{\textbf{n}}_{\infty}}\evbhsinf \, \pi_{\infty} + 2 \, \de^2_{\hxi} \, \bigg( \frac{\de_{\bar{\textbf{n}}_{\infty}}}{\boxf}\evbhsinf \, \pi_{\infty} \bigg)
\eeq

\noi Note that the field $\pi_{\infty}$ obeys a decoupled equation also on the pure cod-1 brane.

\subsubsection{Cod-2 junctions conditions}

To derive the cod-2 junction conditions, we note that the left hand sides of the $\xi\xi$ equation (\ref{xixijunctioncondgi}) and of the $\xi\m$ equation (\ref{ximujunctioncondgibis}) do not contain terms which diverge in the thin limit, according to our ansatz, and their source terms vanish in the thin limit. On the other hand, the left hand side of the $\m\n$ ``derivative'' equation (\ref{dermunujunctioncondgibis}) and of the $\m\n$ ``trace'' equation (\ref{gloriagi}) contain diverging terms, and the pillbox integration of their source terms remains non-vanishing in the thin limit. Therefore we impose that the left hand side of the $\xi\xi$ and $\xi\m$ pure junction continuous (\ref{Gingerxixi}) and (\ref{Gingerximu}) are continuous in $\hxi = 0$, while we perform a pillbox integration of the equations (\ref{gloriagi}) and (\ref{dermunujunctioncondgibis}).

Regarding the equation (\ref{Gingerxixi}), the condition 
\beq
\bigg[ 2 \, M_6^4 \, \bigg( 2 \, \bn^i_{\infty} \, \de_{i}\evbhsinf \, \pi_{\infty} - \boxf \, \hdvfnin \bigg) + \frac{3}{2} \, M_5^3 \, \boxf \, \hp_{\infty} \bigg]_{0^{\pm}} = 0
\eeq
is identically satisfied since all the terms in the left hand side are even. Concerning the equation (\ref{Gingerximu}), the terms in the left hand side are odd and so the condition
\beq
\bigg[ 2 \, M_6^4 \, \bigg( \half \, \bn^i_{\infty} \bv^{j}_{\infty} \, \h{h}^{\infty}_{ij} + \hdvfninp \, \bigg) - \frac{3}{2} \, M_5^3 \, \, \hp_{\infty}^{\p} \bigg]_{0^{\pm}} = 0
\eeq
is not identically satisfied, and produces the cod-2 junction condition
\beq
\label{consistency}
\Bigg[ M_6^4 \, \sin \bigg( \frac{\bla}{2 \Msf} \bigg) \Big( \h{h}^{\infty}_{zz} - \h{h}^{\infty}_{yy} \Big) + 4 \, M_6^4 \, \hdvfn^{\infty \, \p} - 3 \, M_5^3 \, \hp_{\infty}^{\p} \Bigg]_{0^+} = 0
\eeq
where we used the (background) relation $2 \, \Zp_{\infty}\rvert_{0^+} \, \Yp_{\infty}\rvert_{0^+} = \sin \big( \bla/2 \Msf \big)$ and the fact that $\h{h}^{\infty}_{zy}$ is continuous and odd and therefore vanishes in $\hxi = 0^+$.

We come now to the pillbox integration of the equations (\ref{gloriagi}) and (\ref{dermunujunctioncondgibis}) across the cod-2 brane. According to our ansatz on the singular behaviour of the perturbations at the cod-2 brane, the only term which diverges in the left hand side of the equation (\ref{dermunujunctioncondgibis}) is $\Mft \, \bv_{i}^{\p} \, \boxf \, \hdvf^{i}_{^{gi}}$. Therefore, the derivative part of the $\m\n$ components of the junction conditions produces the condition
\beq
\label{Aloha der}
M_5^3 \lim_{n \rightarrow + \infty} \int_{-l_2^{[n]}}^{+l_2^{[n]}} \!\! d \hxi \,\,\, \bv_{i}^{[n] \, \p} \, \boxf \, \hdvf^{i}_{[n]} = \boxf \, \mcal{T}^{(4)}_{de} 
\eeq
Similarly, the only terms which diverge in the left hand side of the equation (\ref{gloriagi}) are the ones which are derived twice with respect to $\hxi$ (remember that $\bv^{i \, \p} = \bvf^{i \, \p \p}$). Therefore, the trace part of the $\m\n$ components of the junction conditions produces the condition
\begin{multline}
\label{Aloha trace}
- M_6^4 \lim_{n \rightarrow + \infty} \int_{-l_2^{[n]}}^{+l_2^{[n]}} \! d \hxi \, \bigg( \bn_{[n]}^i \bn_{[n]}^j \, \Big( \bn^{[n]}_k \bv_{[n]}^{k \, \p} \Big) \, \h{h}^{[n]}_{ij} + 2 \, \bn^{[n]}_i \, \hdvf^{i \, \p\p}_{[n]} \bigg) + \\[2mm]
+ M_5^3 \lim_{n \rightarrow + \infty} \int_{-l_2^{[n]}}^{+l_2^{[n]}} \! d \hxi \, \bigg( \frac{3}{2} \, \hp_{[n]}^{\p \p} - \bv_{i}^{[n] \, \p} \, \boxf \, \hdvf^{i}_{[n]} \bigg) = \mcal{T}^{(4)}_{tr} 
\end{multline}
We can simplify this equation noting that, using (\ref{daje}), we have $\bn^{_{[n]}}_k \bv_{^{[n]}}^{k \, \p} = S_{^{[n]}}^{\p}$, and that the integral on the left hand side of (\ref{Aloha der}) is present also in (\ref{Aloha trace}). Using the continuity equation for the cod-2 energy-momentum tensor (\ref{nottoobad1}), we conclude that it is equivalent to impose the conditions (\ref{Aloha der}) and (\ref{Aloha trace}), or the condition (\ref{Aloha der}) and the following condition
\beq
\label{Aloha trace2}
M_6^4 \lim_{n \rightarrow + \infty} \int_{-l_2^{[n]}}^{+l_2^{[n]}} \! d \hxi \, \bigg( S_{[n]}^{\p} \, \bn_{[n]}^i \bn_{[n]}^j \, \h{h}^{[n]}_{ij} + 2 \, \bn^{[n]}_i \, \hdvf^{i \, \p \p}_{[n]} \bigg) - 3 \, M_5^3 \, \hp_{\infty}^{\p}\Big\rvert_{0^+} = 0 
\eeq
The integrations in the equations (\ref{Aloha der}) and (\ref{Aloha trace2}) are performed explicitly in the appendix \ref{Pillbox integration across the cod2 brane}. It turns out that the equation (\ref{Aloha trace2}) reproduces exactly the condition (\ref{consistency}), which is a confirmation of the consistency of our analysis, while the equation (\ref{Aloha der}) produces the condition
\beq
2 \, \Mft \, \tan \bigg( \frac{\bla}{4 \Msf} \bigg) \, \boxf \, \hdvfn^{\infty}\Big\rvert_{0^+} = \boxf \, \mcal{T}^{(4)}_{de} 
\eeq
The latter equation, together with (\ref{consistency}), constitutes the thin limit cod-2 junction conditions. Concerning the discussion at the end of section \ref{Scalar master variables} on the number of bending master variables, we note that we managed to express the cod-2 junction conditions in terms of $\dvfn$ only (and not $\dvfp$), and so we confirm that the thin limit equations for the scalar sector can be expressed in terms of one metric master variable, $\pi_{\infty}$, and one bending master variable, $\dvfn^{\infty}$. The equation (\ref{consistency}) in terms of the master variables reads
\beq
\Bigg[ 4 \, M_6^4 \, \hdvfn^{\infty \, \p} - 4 \, M_6^4 \, \sin \bigg( \frac{\bla}{2 \Msf} \bigg) \frac{\big( \de_z^2 - \de_y^2 \big)}{\boxf} \, \pi_{\infty}\evbbhsinf - 3 \, M_5^3 \, \hp_{\infty}^{\p} \Bigg]_{0^+} = 0 
\eeq

\subsection{Discussion}

We now discuss the thin limit equations we derived in the previous sections, in relation with the regularization of gravity and the well-definiteness of the thin limit. For clarity of exposition, in this section we omit the symbol $\infty$: it is implicitly assumed that all the fields which appear here are the limit configurations of the sequences indexed by $n$.

\subsubsection{Scalar sector}

Let's start with the discussion of the scalar sector. As we showed in the last section, the thin limit equations of motion for this sector read

\beq
\label{scalarthinbulk}
\boxs \, \pi = 0 \qquad (\textrm{bulk})
\eeq
\vspace{3mm}
\beq
\left.
\begin{aligned}
\label{scalarthinpurecod1}
2 \, M_6^4 \, \de_{\bar{\textbf{n}}}\evbhsi \, \pi + M_5^3 \, \boxfi \, \hp = 0\,\,& \\[3mm]
\boxfi \, \hdvfn = \half \, \de_{\bar{\textbf{n}}}\evbhsi \, \pi + 2 \, \de^2_{\hxi} \, \bigg( \frac{\de_{\bar{\textbf{n}}}}{\boxf}\evbhsi \, \pi \bigg)&
\end{aligned}
\quad \right\} \qquad (\textrm{pure cod-1 brane})
\eeq
\vspace{6mm}
\beq
\left.
\begin{aligned}
\label{scalarthincod2}
\Bigg[ 4 \, M_6^4 \, \hdvfn^{\p} - 4 \, M_6^4 \, \sin \bigg( \frac{\bla}{2 \Msf} \bigg) \frac{\big( \de_z^2 - \de_y^2 \big)}{\boxf} \, \pi\evbbhsi - 3 \, M_5^3 \, \hp^{\p} \Bigg]_{0^+} \! = 0\,\,& \\[3mm]
2 \, \Mft \, \tan \bigg( \frac{\bla}{4 \Msf} \bigg) \, \boxf \, \hdvfn\Big\rvert_{0^+} = \boxf \, \mcal{T}^{(4)}_{de}&
\end{aligned}
\quad \right\} \qquad (\textrm{cod-2 brane})
\eeq
Considering for the moment only the bulk equation (\ref{scalarthinbulk}) and the pure cod-1 junction conditions (\ref{scalarthinpurecod1}), we note that these equations are \emph{exactly} the equations for a 6D cod-1 DGP model: their form is more complicated than the one usually found in the literature just because we didn't impose transverse-traceless conditions in the bulk, which is usually a standard choice. In particular, the second of the equations (\ref{scalarthinpurecod1}) is the 5D wave equations for the bending mode $\hdvfn$, while the equation (\ref{scalarthinbulk}) and the first of the equations (\ref{scalarthinpurecod1}) are the 6D wave equation for $\pi$ together with the boundary condition on the brane. With appropriate boundary conditions at infinity in the extra dimensions, if we assume that the fields $\hdvfn$, $\pi$ and $\hp$ are smooth then the system of these equations is well-defined, and admits a unique solution. If now we relax the assumption that $\hdvfn$, $\pi$ and $\hp$ are smooth, and assume that in $\hxi = 0$ the fields $\hdvfn$ and $\hp$ are continuous but not derivable with respect to $\hxi$, the system of differential equations (\ref{scalarthinbulk})--(\ref{scalarthinpurecod1}) does not single out a unique solution any more, because there is freedom in choosing how to patch the solutions for the fields $\hdvfn$ and $\hp$ at $\hxi = 0^+$ and $\hxi = 0^-$. To single out a unique solution, we need to add two conditions (one for $\hp$ and one for $\hdvfn$) at $\hxi = 0$ which fix this arbitrariness.

This is exactly the role of the cod-2 junction conditions in our set-up: in fact, it is convenient to group the equations (\ref{scalarthinbulk})--(\ref{scalarthincod2}) as follows
\begin{align}
\boxs \, \pi &= 0 \label{fantasy1} \\[6mm]
2 \, M_6^4 \, \de_{\bar{\textbf{n}}}\evbhsi \, \pi + M_5^3 \, \boxfi \, \hp &= 0 \label{fantasy2} \\[2mm]
3 \, M_5^3 \, \hp^{\p}\Big\rvert_{0^+} &= \Bigg[ 4 \, M_6^4 \, \hdvfn^{\p} - 4 \, M_6^4 \, \sin \bigg( \frac{\bla}{2 \Msf} \bigg) \frac{\big( \de_z^2 - \de_y^2 \big)}{\boxf} \, \pi\evbbhsi  \Bigg]_{0^+} \label{fantasy3}
\end{align}
and
\begin{align}
\boxfi \, \hdvfn &= \half \, \de_{\bar{\textbf{n}}}\evbhsi \, \pi + 2 \, \de^2_{\hxi} \, \bigg( \frac{\de_{\bar{\textbf{n}}}}{\boxf}\evbhsi \, \pi \bigg) \label{fantasy4} \\[4mm]
2 \, \Mft \, \tan \bigg( \frac{\bla}{4 \Msf} \bigg) \, \boxf \, \hdvfn\Big\rvert_{0^+} &= \boxf \, \mcal{T}^{(4)}_{de} \label{fantasy5}
\end{align}
We see that the first of the cod-2 junction conditions (\ref{scalarthincod2}) acts as boundary condition of the Neumann type on the side of the thin cod-2 brane for the field $\hp$, while the second of the cod-2 junction conditions (\ref{scalarthincod2}) acts as boundary condition of the Dirichlet type on the side of the thin cod-2 brane for the field $\hdvfn$. Moreover, the details of the internal structure of the cod-2 brane, which are encoded in the explicit form of the sequences $f_{^{[n]}}$, $\ep_{^{[n]}}$ and $\h{\mcalT}_{ab}^{_{[n]}}$, do not enter the thin limit equations and only the integrated quantities $\bla$ and $\mcal{T}^{_{(4)}}_{\m\n}$ are present. Therefore, at least for the scalar sector, the ansatz we proposed in section \ref{The geometric ansatz} gives rise to a well-defined system of differential equations, and the internal structure of the cod-2 brane does not play a role in the thin limit. Since the ansatz corresponds to a class of configurations where the gravitational field is finite on the cod-2 brane, we conclude that (for what concerns the scalar sector) gravity is indeed regularized in the nested branes with induced gravity set-up, and that the thin limit of the cod-2 ribbon brane inside the (already thin) cod-1 brane is well-defined (at least for first order perturbations around pure tension solutions).

It is interesting to comment on the interplay between the master variables $\pi$ and $\hdvfn$, and on how the presence of matter on the cod-2 brane sources the total field configuration according to the coupled system of equations for $\pi$ and $\hdvfn$. First of all, note that if there is no matter on the cod-2 brane then the configuration $\pi = 0$, $\hdvfn = 0$ is a solution of the system of equations (it is the background solution in fact). If we turn on the energy-momentum tensor on the cod-2 brane, $\h{\mcalT}$ forces the cod-2 brane to move (equation (\ref{fantasy5})): remember in fact from section \ref{The induced gravity part} that $\hdvfn\big\rvert_{0^+}$ is proportional to the bending of the cod-2 brane in the bulk $\d \!\b^z_{\infty}$. This movement acts as a boundary condition for the movement of the cod-1 brane (equation (\ref{fantasy4})), producing a non-trivial cod-1 bending profile. This profile necessarily has non-vanishing first $\hxi$-derivative on the side of the cod-2 brane, and this acts as a source for the metric master variable $\pi$, since it produces a non-trivial boundary condition for $\hp$ on the side of the cod-2 brane (equation (\ref{fantasy3})). As a consequence, a non-trivial profile for $\pi$ in the bulk and on the cod-1 brane is created (equations (\ref{fantasy1}) and (\ref{fantasy2})). The profile of $\hp$ on the cod-1 brane in turn acts as a source for $\hdvfn$ on the cod-1 brane (equation (\ref{fantasy4})), and so on.

\subsubsection{Tensor and vector sectors}

Regarding the tensor sector, the thin limit equations of motion read
\begin{align}
\boxs \, \mscr{H}_{\m\n} &= 0 & \phantom{=}&(\textrm{bulk}) \label{tensorbulk} \\[5mm]
2 \Msf \, \de_{\bar{\mathbf{n}}}\evbhsi \, \mscr{H}_{\m\n} + \Mft \, \boxfi \, \h{\mscr{H}}_{\m\n} &= 0 & \phantom{=}&(\textrm{pure cod-1 brane}) \label{tensorpurecod1brane} \\[2mm]
\Mft \, \de_{\hxi}\Big\rvert_{0^+} \h{\mscr{H}}_{\m\n} &= - \mscrT^{(4)}_{\m\n} & \phantom{=}&(\textrm{cod-2 brane}) \label{tensorcod2brane}
\end{align}
and we see that the situation is analogous to the scalar sector. The equations (\ref{tensorbulk}) and (\ref{tensorpurecod1brane}) are exactly the equations for the tensor sector in a 6D cod-1 DGP model, and need an additional condition at $\hxi = 0^{\pm}$ to give a unique solution if we assume that $\h{\mscr{H}}_{\m\n}$ is continuous but not derivable in $\hxi = 0$. This condition is provided from the cod-2 junction condition (\ref{tensorcod2brane}), which is a boundary condition of the Neumann type. Analogously to the scalar sector, the energy-momentum tensor excites the metric field by providing a non-trivial boundary condition at the side of the cod-2 brane, which then creates a non-trivial profile on the cod-1 brane and in the bulk via the equations (\ref{tensorbulk}) and (\ref{tensorpurecod1brane}). Also in this case the internal structure of the cod-2 brane does not play a role in thin limit, so also in the tensor sector gravity is regularized and the thin limit of the cod-2 ribbon brane inside the (already thin) cod-1 brane is well-defined.
%
%

The situation in the vector sector is instead peculiar. The thin limit equations of motion read in terms of the master variable

\beq
\label{vectorbulkeqdiscussion}
\left.
\begin{aligned}
\de_z \, \boxs \, \Phi_{\m} &= 0 \\[4mm]
\de_y \, \boxs \, \Phi_{\m} &= 0
\end{aligned}
\quad \right\} \qquad (\textrm{bulk})
\eeq
\vspace{3mm}
\beq
\label{vectorpurecod1jcdiscussion}
\left.
\begin{aligned}
2 M_6^4 \, \bigtriangleup_{2}\evbhsi \, \Phi_{\m} + M_5^3 \, \boxf \, \de_{\bar{\mathbf{n}}}\evbhsi \, \Phi_{\m} &= 0 \\[3mm]
\de_{\hxi} \, \bigg( 2 M_6^4 \, \h{\Phi}_{\m} - M_5^3 \, \de_{\bar{\mathbf{n}}}\evbhsi \, \Phi_{\m} \bigg) &= 0
\end{aligned}
\quad \right\} \qquad (\textrm{pure cod-1 brane})
\eeq
\vspace{6mm}
\beq
\label{vectorcod2jcdiscussion}
\de_{z}\Big\rvert_{\bvfd(0,\chd)} \, \Phi_{\m}  = \de_{y}\Big\rvert_{\bvfd(0,\chd)} \, \Phi_{\m} = 0 \qquad (\textrm{cod-2 brane})
\eeq
and the gauge-invariant vector modes of metric perturbations are linked to the master variable by
\beq
\label{vectormastervariablediscussion}
\mcalA_{i\m} = \ep_{i}^{\,\, j} \, \de_j \, \Phi_{\m}
\eeq

Note first of all that the cod-2 junction conditions (\ref{vectorcod2jcdiscussion}) imply that the normal derivative of $\Phi_{\m}$ vanish on both sides of the cod-2 brane
\beq
\label{heypresto}
\de_{\bar{\mathbf{n}}}\Big\rvert_{\bvfd(0^{+},\chd)} \, \Phi_{\m} = \de_{\bar{\mathbf{n}}}\Big\rvert_{\bvfd(0^{-},\chd)} \, \Phi_{\m} = 0
\eeq
On the other hand, the second of the pure cod-1 junction conditions (\ref{vectorpurecod1jcdiscussion}) implies that the quantity
\beq
2 M_6^4 \, \h{\Phi}_{\m}\big( \hxi, \chd \big) - M_5^3 \, \de_{\bar{\mathbf{n}}}\Big\rvert_{\bvfd(\hxi, \chd)} \, \Phi_{\m}
\eeq
is constant with respect to the coordinate $\hxi$: the relations (\ref{heypresto}) then imply that
\beq
\label{apnea1}
2 M_6^4 \, \h{\Phi}_{\m}\big( \hxi, \chd \big) - M_5^3 \, \de_{\bar{\mathbf{n}}}\Big\rvert_{\bvfd(\hxi, \chd)} \, \Phi_{\m} = 2 M_6^4 \, \Phi_{\m}^{(4)}\big( \chd \big)
\eeq
where $\Phi_{\m}^{(4)}$ is the master variable evaluated on the cod-2 brane (i.e.~at $\hxi = 0$). Moreover, the bulk equations (\ref{vectorbulkeqdiscussion}) imply that
\beq
\boxs \, \Phi_{\m} \big( \Xd \big) = F_{\m}(\xd)
\eeq
where $F_{\m}$ is some vector function of the 4D coordinates $\xd$ only. Using the relation $\bigtriangleup_{2} \, \Phi_{\m} = - \boxf \, \Phi_{\m} + F_{\m}$ in the first of the pure cod-1 junction conditions (\ref{vectorpurecod1jcdiscussion}), and applying the operator $1/\boxf$, we get
\beq
\label{apnea2}
2 M_6^4 \, \h{\Phi}_{\m}\big( \hxi, \chd \big) - M_5^3 \, \de_{\bar{\mathbf{n}}}\Big\rvert_{\bvfd(\hxi, \chd)} \, \Phi_{\m} = 2 M_6^4 \, \frac{1}{\boxf} \, F_{\m}(\chd)
\eeq
and consistency of the equation (\ref{apnea2}) with the equation (\ref{apnea1}) then implies
\beq
F_{\m}(\chd) = \boxf \, \Phi_{\m}^{(4)}(\chd)
\eeq
Therefore, the equations of motion (\ref{vectorbulkeqdiscussion})--(\ref{vectorcod2jcdiscussion}) for the master variable are equivalent to the system 
\begin{align}
\boxs \, \Phi_{\m}\big( \Xd \big) &= \boxf \, \Phi_{\m}^{(4)}(\xd) \label{ultima1} \\[3mm]
2 M_6^4 \, \h{\Phi}_{\m}\big( \hxi, \chd \big) - M_5^3 \, \de_{\bar{\mathbf{n}}}\Big\rvert_{\bvfd(\hxi, \chd)} \, \Phi_{\m} &= 2 M_6^4 \, \Phi_{\m}^{(4)}(\chd) \\[2mm]
\de_{z}\Big\rvert_{\bvfd(0,\chd)} \, \Phi_{\m}  = \de_{y}\Big\rvert_{\bvfd(0,\chd)} \, \Phi_{\m} &= 0 \label{ultima3}
\end{align}

This result implies first of all that the system of differential equations for the vector sector is compatible, since the configuration
\beq
\label{basta}
\Phi_{\m}\big( \Xd \big) = \mathscr{F}_{\m}(\xd)
\eeq
is a solutions of the equations (\ref{vectorbulkeqdiscussion})--(\ref{vectorcod2jcdiscussion}) for any choice of the vector function $\mathscr{F}_{\m}$ of the 4D coordinates. Secondly, the solution for the master variable is not unique: this is not surprising since we mentioned above that there is freedom to redefine $\Phi_{\m}$ by adding to it any function of the 4D coordinates without changing the values of the gauge invariant variables $\mcalA_{z \m}$ and $\mcalA_{y \m}$. In fact, all the configurations of the form (\ref{basta}) correspond to the same solution for $\mcalA_{z \m}$ and $\mcalA_{y \m}$, namely the trivial solution of the equations of motion where $\mcalA_{z \m}$ and $\mcalA_{y \m}$ vanish everywhere. To understand if the solution for $\mcalA_{z \m}$ and $\mcalA_{y \m}$ is unique, for a generic configuration $\Phi_{\m}$ we define $\mathscr{F}_{\m} \equiv \Phi_{\m}^{_{(4)}}$ and $\Psi_{\m}(\Xd) \equiv \Phi_{\m}(\Xd) - \Phi_{\m}^{_{(4)}} (\xd)$, so that $\Phi_{\m}$ takes the form
\beq
\label{form}
\Phi_{\m}(\Xd) = \mathscr{F}_{\m}(\xd) + \Psi_{\m}(\Xd)
\eeq
where $\Psi_{\m}$ vanish on the cod-2 brane, i.e.~$\Psi^{_{(4)}}_{\m}(\chd) = 0$. Plugging the expression (\ref{form}) in the system (\ref{ultima1})--(\ref{ultima3}), we obtain the following system of differential equations for the function $\Psi_{\m}(\Xd)$
\begin{align}
\boxs \, \Psi_{\m} &= 0 \label{assolutamenteultima1} \\[3mm]
M_5^3 \, \de_{\bar{\mathbf{n}}}\evbhsi \, \Psi_{\m} - 2 M_6^4 \, \h{\Psi}_{\m} &= 0 \\[2mm]
\de_{z}\Big\rvert_{\bvfd(0,\chd)} \, \Psi_{\m}  = \de_{y}\Big\rvert_{\bvfd(0,\chd)} \, \Psi_{\m} &= 0 \\[3mm]
\Psi_{\m}^{(4)} &= 0 \label{assolutamenteultima4}
\end{align}
whose unique solution compatible with the boundary conditions is $\Psi_{\m} = 0$. Therefore the solution for the master variable in the vector sector is \emph{always} of the form $\Phi_{\m} = \mathscr{F}_{\m}(\xd)$, where $\mathscr{F}_{\m}$ is some vector function of the 4D coordinates only. We conclude that in the thin limit the gauge invariant vector modes are zero not only on the cod-2 brane, but also on the cod-1 brane and in the bulk. This is again consistent with the fact that the gauge invariant vector modes are not sourced in the thin limit (cfr.~equation (\ref{nottoobad2})), but is actually a much stronger conclusion.  Note that this conclusion is completely independent of the details of the internal structure of the cod-2 brane. It follows that our ansatz on the singular behaviour of the perturbation fields is consistent and that the thin limit of the cod-2 brane is well-defined also in the vector sector.

\section{Conclusions}
\label{Conclusions}

In this paper, we studied the behaviour of weak gravitational fields in models where a 4D brane is embedded inside a 5D brane equipped with induced gravity, which in turn is embedded in a 6D spacetime. More precisely, we considered a specific regularization of the branes internal structures where the 5D brane can be considered thin with respect to the 4D one, and studied perturbations at first order around the solutions corresponding to pure tension source configurations on the thick 4D brane. To perform the perturbative analysis, we adopted a bulk-based approach and expressed the equations in terms of gauge invariant and master variables using a 4D scalar-vector-tensor decomposition. We then studied the behaviour of the equations of motion in the thin limit of the ``ribbon'' 4D brane inside the (already thin) 5D brane. We proposed an ansatz on the behaviour of the perturbation fields in this limit, which corresponds to configurations where gravity remains finite everywhere even when the 4D brane is thin, and showed that the equations of motion give rise to a consistent set of differential equations in the thin limit, from which the details of the internal structure of the 4D brane disappear.

We conclude that, at least when considering first order perturbations around pure tension configurations, the thin limit of the ``ribbon'' 4D brane inside the (already thin) 5D brane is well defined, and that gravity on the 4D brane is finite even in the thin limit. We also confirm that the induced gravity term on the cod-1 brane is crucial for gravity regularization: if we set $\Mft = 0$ then the cod-2 junction conditions (\ref{fantasy3}), (\ref{fantasy5}) and (\ref{tensorcod2brane}) are consistent only with a source configuration where the matter energy-momentum tensor vanishes, i.e.~with a pure tension perturbation. Therefore, if $\Mft = 0$, our ansatz is in general not consistent; to accomodate matter on the cod-2 brane, either the bending of the cod-1 brane has to diverge at the cod-2 brane either the bulk metric has to become singular there, or both. We can understand geometrically the role of the cod-1 induced gravity term in the phenomenon of gravity regularization by noting that it allows to support a generic energy-momentum tensor on the thin cod-2 brane still having a continuous embedding of the cod-1 brane in a smooth bulk metric.

It is straightforward to apply the formalism developed in this paper to the 6D Cascading DGP model, since it is enough to include a 4D induced gravity term $\Mfs \, G^{(4)}_{\m\n}$ (which assures that gravity is 4D at small scales) in the codimension-2 energy momentum tensor $T^{(4)}_{\m\n}$. We are currently studying the properties of the 6D Cascading DGP model following this route \cite{SbisaKoyama:inprep}.

\section*{Acknowledgments}
FS wishes to thank Paolo Creminelli for hospitality at the Abdus Salam's ICTP, Trieste, Italy where part of this work has been done. FS also acknowledges useful conversations with Gianmassimo Tasinato. FS and KK were supported by the European Research Council's starting grant. KK is supported by the UK Science and Technology Facilities Council grants number ST/K00090/1 and ST/L005573/1.

\newpage

\begin{appendix}

\section{Thin limit of the background}
\label{Thin limit of the background}

In this appendix we want to study in detail the thin limit of the background configurations. As we mention in the main text, the thin limit of the background is performed by considering a sequence of source configurations of the form
\beq
\label{angelicabackground}
\bar{T}^{[n]}_{ab}(\hxi, \chd) = - \d_{a}^{\, \, \m} \, \d_{b}^{\, \, \n} \, f_{[n]} \big( \hxi \big) \, \bla \,\, \e_{\m\n}
\eeq
where $f_{^{[n]}} \big( \hxi \big)$ is a sequence of even functions which satisfy
\begin{align}
\int_{-\infty}^{+\infty} f_{[n]} \big( \hxi \big) \, d \hxi &= 1 & f_{[n]} \big( \hxi \big) &= 0 \,\,\,\, \text{for} \,\,\,\, \abs{\hxi} \geq l^{[n]}_2
\end{align}
and $l^{_{[n]}}_2$ is a sequence of positive numbers that converges to zero: $l^{_{[n]}}_2 \rightarrow 0^+$ for $n \rightarrow + \infty$. The analysis of section \ref{Pure tension solutions} implies that there exist exact solutions for this class of sources such that the bulk, induced and double induced metrics are respectively the 6D, 5D and 4D Minkowski metric, while the embedding of the cod-1 brane is $n$-dependent and non-trivial
\beq
\label{Pattysequence}
\bar{\varphi}_{[n]}^{A}(\hxi, \chd) = \big( Z_{[n]}(\hxi), Y_{[n]}(\hxi), \ch^{\a} \big) 
\eeq
and the cod-2 embedding is $n$-independent and trivial
\beq
\label{Robertsequence}
\bar{\a}_{[n]}^{a}(\chi^{\cdot}) = \big( 0, \chi^{\a} \big) 
\eeq
In terms of the regulating function $\ep_{^{[n]}}(\hxi)$
\beq
\ep_{[n]}(\hxi) \equiv \int_{0}^{\hxi} f_{[n]} (\z) \, d \z 
\eeq
the solution for the slope function reads
\beq
S_{[n]}(\hxi) = \frac{\bla}{2 \Msf} \, \ep_{[n]} \big( \hxi \big)
\eeq
and the solution for the $\hxi$-derivative of the embedding functions reads
\begin{align}
\Zp_{[n]}(\hxi) &= \sin \bigg( \frac{\bla}{2 \Msf} \, \ep_{[n]}(\hxi) \bigg) \label{Balotelli1} \\[2mm]
\Yp_{[n]}(\hxi) &= \cos \bigg( \frac{\bla}{2 \Msf} \, \ep_{[n]}(\hxi) \bigg) \label{Balotelli2}
\end{align}
Considering the slope function, we note that for $\hxi \gtrless \pm l_{2}^{_{[n]}}$ we have $\ep_{^{[n]}}(\hxi) = \pm \frac{1}{2}$, and by symmetry we have $\ep_{^{[n]}}(0) = 0$. For every fixed value $\hxi$ different from zero (say positive, although the case $\hxi < 0$ is analogous), there exists a natural number $N$ such that, for $n \geq N$, we have $l^{_{[n]}}_2 < \hxi$ (as a consequence of $l^{_{[n]}}_2 \rightarrow 0$) and so $\ep_{^{[n]}}(\hxi) = 1/2$. Therefore we have 
\beq
S_{[n]}(\hxi > 0) \xrightarrow[n \rightarrow + \infty]{} \frac{\bla}{4 \Msf}
\eeq
and so $S_{^{[n]}}$ converges to the function $S_{\infty}$ which reads
\beq
\label{negoziosottocasa}
S_{\infty} \big( \hxi \big) =
\begin{cases}
\frac{\bla}{4 \Msf} & \text{for $\hxi > 0$} \\
0 & \text{for $\hxi = 0$} \\
- \frac{\bla}{4 \Msf} & \text{for $\hxi < 0$}
\end{cases}
\eeq
and in particular $S_{[n]}\big( \pm l_{2}^{[n]} \big) = \pm \frac{\bla}{4 \Msf} \equiv S_{\pm}$ independently of $n$.

\subsection{Thin limit of the embedding functions}

To obtain the thin limit of the embedding functions, we integrate the relations (\ref{Balotelli1}) and (\ref{Balotelli2}) and we impose the condition $Z_{[n]}(0) = Y_{[n]}(0) = 0$ to get
\begin{align}
Z_{[n]}(\hxi) &= \int_{0}^{\hxi} \sin \bigg( \frac{\bla}{2 \Msf} \, \ep_{[n]}(\z) \bigg) d \z \label{Balotellibis1} \\[2mm]
Y_{[n]}(\hxi) &= \int_{0}^{\hxi} \cos \bigg( \frac{\bla}{2 \Msf} \, \ep_{[n]}(\z) \bigg) d \z \label{Balotellibis2}
\end{align}
For $\hxi \gtrless \pm l_2^{_{[n]}}$, we use the identity $\sin (\ep_{^{[n]}} \bla/2 \Msf) = \pm \sin (\bla/4 \Msf) + \big( \sin (\ep_{^{[n]}} \bla/2 \Msf) \mp \sin (\bla/4 \Msf) \big)$ and the linearity of the integral to get
\begin{align}
Z_{[n]}(\hxi) &= \sin \bigg( \frac{\bla}{4 \Msf} \bigg) \, \abs{\hxi} + Z_{[n]}^0 & Z_{[n]}^0 &= \int_{0}^{l_2^{[n]}} \sin \bigg( \frac{\bla}{2 \Msf} \, \ep_{[n]}(\z) \bigg) d \z - l_2^{[n]} \, \sin \bigg( \frac{\bla}{4 \Msf} \bigg)
\end{align}
Analogously, for $\hxi \gtrless \pm l_2^{_{[n]}}$ we get
\begin{align}
Y_{[n]}(\hxi) &= \cos \bigg( \frac{\bla}{4 \Msf} \bigg) \, \hxi \pm Y_{[n]}^0 & Y_{[n]}^0 &= \int_{0}^{l_2^{[n]}} \cos \bigg( \frac{\bla}{2 \Msf} \, \ep_{[n]}(\z) \bigg) d \z - l_2^{[n]} \, \cos \bigg( \frac{\bla}{4 \Msf} \bigg)
\end{align}
and taking the $l_2^{_{[n]}} \to 0$ limit we obtain
\begin{align}
Z_{\infty}(\hxi) &= \sin \bigg( \frac{\bla}{4 \Msf} \bigg) \, \abs{\hxi} & Y_{\infty}(\hxi) &= \cos \bigg( \frac{\bla}{4 \Msf} \bigg) \, \hxi
\end{align}
which is valid for $\hxi \in \mathbb{R}$.

Regarding the $\hxi$-derivative of the embedding functions, consistently with the symmetry properties of $\Zp$ and $\Yp$ we have
\begin{align}
\Zp_{[n]}(0) &= 0 & \Yp_{[n]}(0) &= 1
\end{align}
independently of $n$. In complete analogy to the case of the slope function, for every fixed value $\hxi > 0$ we have
\begin{align}
\Zp_{[n]}(\hxi > 0) &\xrightarrow[n \rightarrow + \infty]{} \sin \bigg( \frac{\bla}{4 \Msf} \bigg) & \Yp_{[n]}(\hxi > 0) &\xrightarrow[n \rightarrow + \infty]{} \cos \bigg( \frac{\bla}{4 \Msf} \bigg) 
\end{align}
and so we conclude that $\Zp_{^{[n]}}$ and $\Yp_{^{[n]}}$ converge respectively to the functions $\Zp_{\infty}$ and $\Yp_{\infty}$ which explicitly read
\beq
\label{felicita1}
\Zp_{\infty} \big( \hxi \big) =
\begin{cases}
\sin \Big( \bla/4 \Msf \Big) & \text{for $\hxi > 0$} \\
0 & \text{for $\hxi = 0$} \\
- \sin \Big( \bla/4 \Msf \Big) & \text{for $\hxi < 0$}
\end{cases}
\eeq
and
\beq
\label{felicita2}
\Yp_{\infty} \big( \hxi \big) =
\begin{cases}
\cos \Big( \bla/4 \Msf \Big) & \text{for $\hxi \neq 0$} \\
1 & \text{for $\hxi = 0$} 
\end{cases}
\eeq
Note that the sequences of functions $\Zp_{^{[n]}}$ and $\Yp_{^{[n]}}$ converge to \emph{discontinuous} functions. This is crucial for their convergence properties: in fact, since $\Zp_{^{[n]}}$ and $\Yp_{^{[n]}}$ are by hypothesis smooth for every value of $n$, if their convergence to $\Zp_{\infty}$ and $\Yp_{\infty}$ were uniform then $\Zp_{\infty}$ and $\Yp_{\infty}$ would necessarily be continuous (at least). The fact that $\Zp_{\infty}$ and $\Yp_{\infty}$ are discontinuous implies that the convergence of $\Zp_{^{[n]}}$ and $\Yp_{^{[n]}}$ is pointwise but not uniform.

\subsection{Thin limit of the parallel and normal vectors}

The sequences which correspond to the background parallel vector $\bar{\mathbf{v}}$ and to the background normal vector $\bar{\mathbf{n}}$ are expressed in terms of $\Zp_{^{[n]}}$ and $\Yp_{^{[n]}}$ as
\begin{align}
\bv^{A}_{[n]} (\hxi) &= \Big( \Zp_{[n]}, \Yp_{[n]}, 0,0,0,0 \Big) & \bn_{[n]}^A (\hxi) &= \Big( \Yp_{[n]}, -\Zp_{[n]}, 0,0,0,0 \Big)
\end{align}
The results obtained above imply that $\bar{\mathbf{v}}_{^{[n]}}$ and $\bar{\mathbf{n}}_{^{[n]}}$ respectively converge to the vector fields $\bar{\mathbf{v}}_{\infty}$ and $\bar{\mathbf{n}}_{\infty}$ which are separately constant for $\hxi > 0$ and for $\hxi < 0$, and which are discontinuous across the cod-2 brane
\begin{align}
\Big[ \bv^{z}_{\infty} \Big]_{0^{\pm}} &= 2 \, \sin \bigg( \frac{\bla}{4 \Msf} \bigg)  & \Big[ \bv^{y}_{\infty} \Big]_{0^{\pm}} &= 0 & \Big[ \bn^{z}_{\infty} \Big]_{0^{\pm}} &= 0  & \Big[ \bn^{y}_{\infty} \Big]_{0^{\pm}} &= -2 \, \sin \bigg( \frac{\bla}{4 \Msf} \bigg)
\end{align}
and in particular we have
\begin{align}
\bv^{i}_{\infty}\Big\rvert_{\hxi > 0} &= \Big( \sin \left( \bla/4 \Msf \right), \cos \big( \bla/4 \Msf \big) \Big) & \bn^{i}_{\infty}\Big\rvert_{\hxi  > 0} &= \Big( \cos \big( \bla/4 \Msf \big), -\sin \big( \bla/4 \Msf \big) \Big)
\end{align}

Regarding the parallel and normal components of the bending and of the perturbation of the parallel vector, the definitions (\ref{deltavdefinitiongi}) imply that for $\abs{\hxi} > l_2^{_{[n]}}$ we have
\begin{align}
\dhvsp^{[n]} &= \hdvfp^{[n] \, \p} & \dhvn^{[n]} &= \hdvfn^{[n] \, \p}
\end{align}
since for $\abs{\hxi} > l_2^{_{[n]}}$ we have $\bv_{i}^{_{[n] \, \p}} = 0$ and $\bn_{i}^{_{[n] \, \p}} = 0$. This implies that in the thin limit for $\hxi \neq 0$ we have
\begin{align}
\dhvsp^{\infty} &= \hdvfp^{\infty \, \p} & \dhvn^{\infty} &= \hdvfn^{\infty \, \p}
\end{align}
and in particular
\begin{align}
\label{susini}
\dhvsp^{\infty}\Big\rvert_{0^+} &= \hdvfp^{\infty \, \p}\Big\rvert_{0^+} & \dhvn^{\infty}\Big\rvert_{0^+} &= \hdvfn^{\infty \, \p}\Big\rvert_{0^+}
\end{align}

\section{Codimension-1 induced metric and curvature tensors}
\label{IndMetrandCurvTensApp}

In this appendix we give the explicit form of the perturbation of the cod-1 induced metric, of the cod-1 Ricci tensor and of the cod-1 extrinsic curvature in terms of the gauge invariant variables. We also express the cod-1 extrinsic curvature in terms of the normal and parallel components of the bending modes, and we discuss the apparent contradiction related to the presence of the parallel component $\dvfp$ in the extrinsic curvature.

\subsection{Induced metric}

Using the general definition $\ti{\mathbf{g}} \equiv \varphi_{\star} \big( \mathbf{g} \big)$, the perturbation of the induced metric can be decomposed as follows
\beq
\label{sunset}
\tilde{h}_{ab}(\xi^{\cdot}) = \tilde{h}^{[mp]}_{ab}(\xi^{\cdot}) + \tilde{h}^{[bp]}_{ab}(\xi^{\cdot})
\eeq
where
\begin{align}
\tilde{h}^{[mp]}_{ab}(\xi^{\cdot}) & \equiv \frac{\de \bvf^{A}(\xi^{\cdot})}{\de \xi^{a}} \, 
\frac{\de \bvf^{B}(\xi^{\cdot})}{\de \xi^{b}} \,\, h_{AB} \Big\rvert_{X^{\cdot} = \bvf^{\cdot}(\xi^{\cdot})} \\[2mm]
\tilde{h}^{[bp]}_{ab}(\xi^{\cdot}) & \equiv 
\frac{\de \dvf^{A}(\xi^{\cdot})}{\de \xi^{a}} \, \frac{\de \bvf^{B}(\xi^{\cdot})}{\de \xi^{b}} \,\, \e_{AB} + \frac{\de \bvf^{A}(\xi^{\cdot})}{\de \xi^{a}} \, \frac{\de \dvf^{B}(\xi^{\cdot})}{\de \xi^{b}} \,\, \e_{AB}
\end{align}
These two components have a clear geometrical meaning: the ``metric perturbation'' component $\tilde{h}^{[mp]}_{ab}$ represents the effect on the induced metric of the perturbation of the bulk metric, while the ``bending perturbation'' component $\tilde{h}^{[bp]}_{ab}$ represents the effect on the induced metric of the perturbation of the brane embedding.

The three sectors (scalar, vector and tensor) contribute additively to the perturbation of the induced metric, and we can write
\beq
\ti{h}_{ab} = \ti{h}^{(s)}_{ab} + \ti{h}^{(v)}_{ab} + \ti{h}^{(t)}_{ab}
\eeq
where $\ti{h}^{(s)}_{ab}$, $\ti{h}^{(v)}_{ab}$ and $\ti{h}^{(t)}_{ab}$ respectively contain only the scalar, vector and tensor sector of perturbation fields. These three contributions to the perturbation of the metric induced on the cod-1 brane read explicitly
\begin{align}
\ti{h}^{(s)}_{\xi\xi} &= \bv^{i} \bv^{j} \, \ti{h}^{gi}_{ij} + 2 \, \bv_i \, \dvf_{gi}^{i \, \p} \label{color1} \\[2mm]
\ti{h}^{(s)}_{\xi\m} &= \de_{\xi^{\m}} \Big( \bv_i \, \dvf_{gi}^{i} + \dvf_4^{gi \, \p} \Big) \\[2mm]
\ti{h}^{(s)}_{\m\n} &= \tp \, \e_{\m\n} + \deximxin \, 2 \, \dvf^{gi}_4 \\[5mm]
\ti{h}^{(v)}_{\xi\m} &= \dvf_{\m}^{_{(T)} \, \p} + \bv^i \, \mcalA_{i\m} \label{Color} \\[2mm]
\ti{h}^{(v)}_{\m\n} &= \de_{\xi^{( \m}} \, \dvf_{\n)}^{_{(T)}} \\[5mm]
\ti{h}^{(t)}_{\m\n} &= \ti{\mscr{H}}_{\m\n} \label{color6}
\end{align}
where the components not shown above vanish identically.

\subsection{Ricci and Einstein tensors}
\label{Ricci and Einstein tensors}

The geometric decomposition (\ref{sunset}) of $\ti{h}_{ab}$ implies that the perturbation of the cod-1 Ricci tensor can be decomposed as
\beq
\d \! \ti{R}_{ab}(\xi^{\cdot}) \simeq \d \! \ti{R}^{[mp]}_{ab}(\xi^{\cdot}) + \d \! \tilde{R}^{[bp]}_{ab}(\xi^{\cdot})
\eeq
where $\d \! \ti{R}^{[mp]}_{ab}$ is built from the metric perturbation part of the induced metric $\ti{h}^{[mp]}_{ab}$ and $\d \! \ti{R}^{[bp]}_{ab}$ is built from the bending perturbation part of the induced metric $\ti{h}^{[bp]}_{ab}$.

In analogy to the induced metric, also the perturbation of the Ricci tensor on the cod-1 brane can be written in the form
\beq
\d \! \ti{R}_{ab} = \d \! \ti{R}^{(s)}_{ab} + \d \! \ti{R}^{(v)}_{ab} + \d \! \ti{R}^{(t)}_{ab}
\eeq
where $\d \! \ti{R}^{(s)}_{ab}$, $\d \! \ti{R}^{(v)}_{ab}$ and $\d \! \ti{R}^{(t)}_{ab}$ respectively contain only the scalar, vector and tensor sector of perturbation fields. These three contributions to the perturbation of the Ricci tensor constructed with the metric induced on the cod-1 brane read explicitly

\begin{align}
\d \! \ti{R}^{(s)}_{\xi \xi} &= - \frac{1}{2} \, \bv^{i} \bv^{j} \, \, \boxf \, \ti{h}^{gi}_{ij} - 2 \, \tp^{\p\p} + \bv_{i}^{\p} \, \boxf \, \dvf^{i}_{gi} \\[2mm]
\d \! \ti{R}^{(s)}_{\m \xi} &= - \, \frac{3}{2} \, \de_{\xi} \, \dexim \, \tp \\[2mm]
\d \! \ti{R}^{(s)}_{\m \n} &= \deximxin \, \Big[ \!- \frac{1}{2} \, \bv^{i} \bv^{j} \, \ti{h}^{gi}_{ij} - \tp + \bv_{i}^{\p} \, \dvf^{i}_{gi} \Big] + \eta_{\m\n} \, \Big[ - \frac{1}{2} \, \Box_{5} \, \tp \, \Big] \\[4mm]
\d \! \ti{R}^{(v)}_{\xi\m} &= - \half \, \boxf \, \bv^i \, \ti{\mcalA}_{i\m} \\[2mm]
\d \! \ti{R}^{(v)}_{\m\n} &= \half \, \de_{\xi^{( \m}} \, \de_{\xi} \Big( \bv^i \, \ti{\mcalA}_{i|\n)} \Big) \\[4mm]
\d \! \ti{R}^{(t)}_{\m\n} &= - \half \, \boxfi \, \ti{\mscr{H}}_{\m\n}
\end{align}
where again the components not shown above vanish identically. Regarding the perturbation of the cod-1 Einstein tensor, we have also in this case
\beq
\d \! \ti{G}_{ab} = \d \! \ti{G}^{(s)}_{ab} + \d \! \ti{G}^{(v)}_{ab} + \d \! \ti{G}^{(t)}_{ab}
\eeq
where the vector and tensor parts satisfy
\begin{align}
\d \! \ti{G}^{(v)}_{ab} &= \d \! \ti{R}^{(v)}_{ab} & \d \! \ti{G}^{(t)}_{ab} &= \d \! \ti{R}^{(t)}_{ab}
\end{align}
and the scalar part reads explicitly
\begin{align}
\d \! \ti{G}^{(s)}_{\xi \xi} &= \frac{3}{2} \, \boxf \, \tp \label{porticciolo 1} \\[2mm]
\d \! \ti{G}^{(s)}_{\m \xi} &= - \, \frac{3}{2} \, \de_{\xi} \, \dexim \, \tp \label{porticciolo 2} \\[2mm]
\d \! \ti{G}^{(s)}_{\m \n} &= \deximxin \, \Big[ \!- \frac{1}{2} \, \bv^{i} \bv^{j} \, \ti{h}^{gi}_{ij} - \tp + \bv_{i}^{\p} \, \dvf^{i}_{gi} \Big] + \eta_{\m\n} \, \Big[ \, \frac{1}{2} \, \bv^{i} \bv^{j} \, \boxf \, \ti{h}^{gi}_{ij} + \boxf \, \tp + \frac{3}{2} \, \tp^{\p\p}
- \bv_{i}^{\p} \, \boxf \, \dvf^{i}_{gi} \, \Big] \label{porticciolo 3} 
\end{align}

\subsection{Extrinsic curvature}
\label{Extrinsic curvature}

Expanding at first order the formulae (\ref{ExtrCurvDecomposition})--(\ref{ExtrCurvB}) around the pure tension solutions, we get the following geometrical decomposition for the perturbation of the extrinsic curvature
\beq
\d \! \ti{K}_{ab}(\xi^{\cdot}) \simeq \d \! \ti{K}^{[og]}_{ab}(\xi^{\cdot}) + \d \! \ti{K}^{[pg]}_{ab}(\xi^{\cdot}) + \d \! \ti{K}^{[np]}_{ab}(\xi^{\cdot}) + \d \! \ti{K}^{[bp]}_{ab}(\xi^{\cdot})
\eeq
where
\begin{align}
\d \! \ti{K}^{[og]}_{ab}(\xi^{\cdot}) &\equiv - \half \, \frac{\de \bar{\varphi}^{A}(\xid)}{\de \xi^{a}}
\frac{\de \bar{\varphi}^{B}(\xid)}{\de \xi^{b}} \, \bar{n}^L(\xid) \frac{\de \, h_{AB}}{\de X^L}\Big\rvert_{X^{\cdot} = \bar{\vf}^{\cdot}(\xid)} \\[4mm]
\d \! \ti{K}^{[pg]}_{ab}(\xi^{\cdot}) &\equiv \half \, \bar{n}^A(\xid) \, \frac{\de \bar{\varphi}^{B}(\xid)}{\de \xi^{(a}} \, \frac{\de \bar{\varphi}^{L}(\xid)}{\de \xi^{b)}} \frac{\de \, h_{AB}}{\de X^L}\Big\rvert_{X^{\cdot} = \bar{\vf}^{\cdot}(\xid)}
\end{align}
\begin{align}
\d \! \ti{K}^{[np]}_{ab}(\xi^{\cdot}) &\equiv \half \, \bar{n}^{A}(\xid) \, \bar{n}^{B}(\xid) \, h_{AB}\evb \,\, \bar{n}_i(\xid) \, \bar{\varphi}^{i \, \p\p}(\xid) \\[4mm]
\d \! \ti{K}^{[bp]}_{ab}(\xi^{\cdot}) &\equiv \bar{n}_{L}(\xid) \, \frac{\de^2 \d \varphi^{L}(\xid)}{\de \xi^{a} \de \xi^{b}}
\end{align}
These four components have a clear geometrical interpretation: the orthogonal gradient $[og]$ component and the parallel gradient $[pg]$ component are non-zero respectively when the perturbation of the bulk metric has non-vanishing derivative in the orthogonal and in the parallel directions to the brane. The normal projection $[np]$ component is instead non-zero when the contraction of the perturbation of the bulk metric with two normal vectors is non-vanishing. Finally, the bending perturbation component $[bp]$ expresses the effect on the extrinsic curvature of the fact that the embedding of the brane is perturbed, even in absence of bulk metric perturbations.

Also in this case the perturbation of the extrinsic curvature can be written in the form
\beq
\d \! \ti{K}_{ab} = \d \! \ti{K}^{(s)}_{ab} + \d \! \ti{K}^{(v)}_{ab} + \d \! \ti{K}^{(t)}_{ab}
\eeq
where $\d \! \ti{K}^{(s)}_{ab}$, $\d \! \ti{K}^{(v)}_{ab}$ and $\d \! \ti{K}^{(t)}_{ab}$ respectively contain only the scalar, vector and tensor sector of the perturbation fields. These three contributions to the perturbation of the extrinsic curvature read explicitly

\begin{align}
\d \! \ti{K}^{(s)}_{\xi\xi} &= - \frac{1}{2} \, \bv^{i} \bv^{j} \, \de_{\bar{\mathbf{n}}}\evb \, h^{gi}_{ij}  + \bn^i \bv^{j} \, \de_{\bar{\mathbf{v}}}\evb \, h^{gi}_{ij} + \half \, \bn^i \bn^j \, \ti{h}^{gi}_{ij} \, \big( \bn^k \bv_{k}^{\p} \big) + \bn_i \, \dvf^{i \, \p \p}_{gi} \\[2mm]
\d \! \ti{K}^{(s)}_{\xi\m} &=  \dexim \, \Big[ \, \frac{1}{2} \, \bn^i \bv^{j} \, \ti{h}^{gi}_{ij} +  \bn_i \, \dvf^{i \, \p}_{gi} \, \Big] \\[2mm]
\d \! \ti{K}^{(s)}_{\m\n} &= - \frac{1}{2} \, \de_{\bar{\mathbf{n}}}\evb \, \pi \,\, \e_{\m\n} + \deximxin \,\, \bn_i \, \dvf^{i}_{gi} \\[4mm]
\d \! \ti{K}^{(v)}_{\xi\m} &= - \half \, \bv^i \, \de_{\bar{\mathbf{n}}}\evb \, \mcalA_{i\m} + \half \, \bn^i \, \ti{\mcalA}_{i\m}^{\p} \\[2mm]
\d \! \ti{K}^{(v)}_{\m\n} &= \half \, \de_{\xi^{( \m}} \, \bn^i \, \ti{\mcalA}_{i|\n)} \\[4mm]
\d \! \ti{K}^{(t)}_{\m\n} &= - \half \, \de_{\bar{\mathbf{n}}}\evb \, \mscr{H}_{\m\n}
\end{align}
where the components not shown above vanish identically.

\subsection{Some remarks on the bending modes}

We want now to comment on the number of gauge invariant variables which describe the fluctuation of the brane embedding. Note first of all that, as we mentioned in the main text, $\dvf_4^{gi}$ and $\dvf^{_{(T)}}_{\m}$ do not appear in $\d \! \ti{R}_{ab}$ and $\d \! \ti{K}_{ab}$. This implies that the 4D part of the bending $\dvf_{\m}$ does not appear in the equations of motion, and that the bending modes play a role only in the scalar sector. Regarding the other components of the bending modes, using the relation
\beq
\label{daje}
\bv_{i}^{\p} = \bvf_{i}^{\p\p} = \Sp \,\, \bn_i
\eeq
we can express the cod-1 Ricci tensor in terms of the normal and parallel bending modes $\dvfn$ and $\dvfp$ instead of $\dvf^{z}_{^{gi}}$ and $\dvf^{y}_{^{gi}}$. We get
\begin{align}
\d \! \ti{R}^{[bp]}_{\xi \xi} &= \Sp \,\, \boxf \, \dvfn \\[1mm]
\d \! \ti{R}^{[bp]}_{\m \n} &= \deximxin \, \Sp \,\, \dvfn
\end{align}
while $\d \! \ti{R}^{[bp]}_{\m \xi}$ vanishes. Therefore, the cod-1 Ricci tensor depends only on the normal component of the bending $\dvfn$, which is consistent with the fact that, from the point of view of the intrinsic geometry, $\dvfp$ represents just a change of coordinates. On the other hand, using the relation
\beq
\label{Bobbi}
\bn_{i}^{\p} = - \Sp \,\, \bv_i
\eeq
we obtain for the extrinsic curvature
\begin{align}
\d \! \ti{K}^{[bp]}_{\xi\xi} &= \dvfn^{\p\p} + \Spp \, \dvfp + 2 \, \Sp \, \dvfp^{\p} - \Spq \, \dvfn \\[1mm]
\d \! \ti{K}^{[bp]}_{\m\xi} &= \dexim \, \Big( \dvfn^{\p} + \Sp \, \dvfp \Big)  \\[1mm]
\d \! \ti{K}^{[bp]}_{\m\n} &= \deximxin \, \dvfn
\end{align}
and we see that the parallel component $\dvfp$ does not disappear from the extrinsic curvature. This seems somehow in contrast with the assertion of \cite{GarrigaVilenkin,Ishibashi:2002nn} that $\dvfn$ is the only physically observable fluctuation of the brane. To understand this apparent contradiction, note first of all that, from the point of view of the bulk, the normal component $\dvfn$ is the only perturbation which changes the shape of the brane, while $\dvfp$ just relabels the points on the brane: therefore, if we describe the extrinsic geometry with the second fundamental form $\mathbf{K}$, the parallel component does not play a role. However, the extrinsic curvature $\ti{\mathbf{K}} = \vf_{\star} (\mathbf{K})$ measures how fast the normal form changes when we move along a direction in the \emph{brane} coordinate system: a change of coordinates on the brane makes the normal vector vary more or less rapidly, and so has the same effect (from the brane point of view) as if we kept the brane coordinates unchanged and changed the shape of the brane (from the bulk point of view). Therefore, if we describe the extrinsic geometry with $\ti{\mathbf{K}}$, the mode $\dvfp$ does indeed play a role. The only exception to this argument is when the (background) normal form is constant, such as when the brane is straight in a homogeneous bulk metric: in this case, a change of coordinates on the brane has no effect on the normal form. In fact, if we set $\Sp = S^{\p\p} = 0$ in our case, the parallel component of the bending $\dvfp$ indeed disappears from the extrinsic curvature.

\section{The pillbox integration across the cod-2 brane}
\label{Pillbox integration across the cod2 brane}

In this appendix, we perform explicitly the pillbox integrations which appear in the left hand side of equations (\ref{Aloha der}) and (\ref{Aloha trace2}).

\subsection{A useful proposition}

In the appendix \ref{Thin limit of the background} we showed that the slope functions belonging to the sequence $\{S_{^{[n]}}\}_{^{[n]}}$ are smooth and odd with respect to the transformation $\hxi \to -\hxi$, and converge pointwise to the function (\ref{negoziosottocasa}). Furthermore, their value at the side of the cod-2 brane $\hxi = \pm l_2^{_{[n]}}$ is independent of $n$. This implies that $S^{\p}_{^{[n]}} \big( \hxi \big)$ is proportional to an even realization of the Dirac delta, since calling $\s \equiv \bla/ 4 \Msf$ we have
\beq
\int_{- l_2^{[n]}}^{+ l_2^{[n]}} \!\! d \hxi \,\, S^{\p}_{[n]} = 2 \, \s
\eeq
while $S^{\p}_{^{[n]}}$ vanishes for $\abs{\hxi} > l_2^{_{[n]}}$. Considering now a continuous real function $\mcal{B}$, we want first of all to see that $S^{\p}_{^{[n]}}(\hxi) \mcal{B} \big( S_{^{[n]}}(\hxi) \big)$ is as well a realization of the Dirac delta. More precisely, consider the following

\begin{lemma}
\label{lemma yes}
Be $\mcalB$ and $\mcalF$ continuous real functions, and define
\beq
\label{integral mon amour}
\mcalI_{B} \equiv \int_{-\s}^{+\s} \mcalB(\z) \, d\z
\eeq
Then we have
\beq
\label{lemmahypothesis}
\lim_{n \to \infty} \int_{- l_2^{[n]}}^{+ l_2^{[n]}} \! S^{\p}_{[n]}(\hxi) \, \mcal{B} \big( S_{[n]}(\hxi) \big) \, \mcalF(\hxi) \, d \hxi \,\, = \,\, \mcalI_{B} \, \mcalF(0) 
\eeq
\end{lemma}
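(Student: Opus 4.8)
The plan is to isolate the $\mcalF(0)$ part of the integrand: write $\mcalF(\hxi) = \mcalF(0) + \big(\mcalF(\hxi) - \mcalF(0)\big)$, so that the integral in (\ref{lemmahypothesis}) splits as $\mcalF(0)\,I_n^{(1)} + I_n^{(2)}$ with
\[
I_n^{(1)} \equiv \int_{-l_2^{[n]}}^{+l_2^{[n]}} S_{[n]}'(\hxi)\, \mcalB\big(S_{[n]}(\hxi)\big)\,d\hxi, \qquad
I_n^{(2)} \equiv \int_{-l_2^{[n]}}^{+l_2^{[n]}} S_{[n]}'(\hxi)\, \mcalB\big(S_{[n]}(\hxi)\big)\,\big(\mcalF(\hxi)-\mcalF(0)\big)\,d\hxi .
\]
For $I_n^{(1)}$ I would simply recognize the integrand as an exact $\hxi$-derivative: if $G$ is any primitive of $\mcalB$, then $\frac{d}{d\hxi} G\big(S_{[n]}(\hxi)\big) = S_{[n]}'(\hxi)\,\mcalB\big(S_{[n]}(\hxi)\big)$, whence by the fundamental theorem of calculus and the $n$-independent endpoint values $S_{[n]}(\pm l_2^{[n]}) = \pm\s$ recalled in Appendix \ref{Thin limit of the background},
\[
I_n^{(1)} = G(\s) - G(-\s) = \int_{-\s}^{+\s}\mcalB(u)\,du = \mcalI_B
\]
for every $n$; note this argument needs neither monotonicity of $S_{[n]}$ nor a change-of-variables theorem, just continuity of $\mcalB$.

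For $I_n^{(2)}$ the claim is that $I_n^{(2)}\to 0$. Since $S_{[n]}'(\hxi) = \frac{\bla}{2\Msf}\,f_{[n]}(\hxi)\ge 0$ and $S_{[n]}$ runs from $-\s$ to $+\s$, the function $S_{[n]}$ is nondecreasing and its range on $[-l_2^{[n]},+l_2^{[n]}]$ is contained in the fixed compact interval $[-\s,+\s]$; hence $\big|\mcalB\big(S_{[n]}(\hxi)\big)\big| \le M_B \equiv \max_{|\z|\le\s}|\mcalB(\z)|$ uniformly in $n$. Given $\vep>0$, continuity of $\mcalF$ at $0$ yields $\d>0$ with $|\mcalF(\hxi)-\mcalF(0)|<\vep$ for $|\hxi|<\d$, and since $l_2^{[n]}\to 0^+$ there is $N$ with $[-l_2^{[n]},+l_2^{[n]}]\subset(-\d,\d)$ for all $n\ge N$. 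Then for $n \ge N$,
\[
\big| I_n^{(2)} \big| \;\le\; M_B\,\vep \int_{-l_2^{[n]}}^{+l_2^{[n]}} S_{[n]}'(\hxi)\,d\hxi \;=\; M_B\,\vep\,\big(S_{[n]}(l_2^{[n]}) - S_{[n]}(-l_2^{[n]})\big) \;=\; 2\,\s\,M_B\,\vep,
\]
so $\limsup_n |I_n^{(2)}| \le 2\,\s\,M_B\,\vep$ for every $\vep>0$, i.e. $I_n^{(2)}\to 0$. Combining the two pieces gives $\lim_n \big(\mcalF(0)\,I_n^{(1)} + I_n^{(2)}\big) = \mcalI_B\,\mcalF(0)$, which is exactly (\ref{lemmahypothesis}).

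This is a short and essentially routine argument; the only place meriting care is the uniform-in-$n$ control needed for $I_n^{(2)}$ — namely that $\int |S_{[n]}'|$ stays bounded and that the ranges of the $S_{[n]}$ remain inside one fixed compact set, so that $\mcalB\circ S_{[n]}$ is uniformly bounded. Both properties follow here from the explicit structure of the background slope functions $S_{[n]} = \frac{\bla}{2\Msf}\,\ep_{[n]}$ (nonnegative localizing functions $f_{[n]}$, $n$-independent values $\pm\s$ at the edges of the ribbon) established in Appendix \ref{Thin limit of the background}; were one to admit more general sequences of localizing functions, these two mild conditions are all that the proof would actually require.
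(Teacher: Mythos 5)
Your proof is correct, and it takes a genuinely different route from the paper's. The paper proves the lemma with a min--max squeeze: it bounds $\mcalF$ on $I_{[n]} = [-l_2^{[n]}, l_2^{[n]}]$ between its minimum $m_{[n]}$ and maximum $M_{[n]}$ and sandwiches the integral between $\mcalI_B\, m_{[n]}$ and $\mcalI_B\, M_{[n]}$, letting continuity force both bounds to $\mcalI_B\,\mcalF(0)$. You instead split off the constant $\mcalF(0)$ exactly, evaluate $I_n^{(1)}$ by the fundamental theorem of calculus applied to a primitive $G$ of $\mcalB$ (so $I_n^{(1)}=G(\s)-G(-\s)=\mcalI_B$ for every $n$), and control $I_n^{(2)}$ by the uniform bound $M_B=\max_{|\z|\le\s}|\mcalB|$ together with $\int S'_{[n]} = 2\s$. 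Both arguments rest on the same two inputs (the $n$-independent endpoint values $S_{[n]}(\pm l_2^{[n]})=\pm\s$ and continuity of $\mcalF$ at $0$), but yours is strictly more robust. The paper's sandwich step $\int S'_{[n]}\mcalB(S_{[n]})\mcalF \geq m_{[n]}\int S'_{[n]}\mcalB(S_{[n]})$, and the subsequent identification of the lower bound as $\mcalI_B\lim m_{[n]}$ and upper bound as $\mcalI_B\lim M_{[n]}$, are valid only if the integrating factor $S'_{[n]}\,\mcalB(S_{[n]})$ is pointwise nonnegative and $\mcalI_B\ge 0$, i.e.\ effectively if $\mcalB\ge 0$ on $[-\s,\s]$. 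Your decomposition makes no sign assumption on $\mcalB$ and therefore covers sign-changing $\mcalB$ without modification --- which is actually needed in the paper's own application (\ref{artisticequation2}), where $\mcalB(\z)=\cos 2\z$ becomes negative on part of $[-\s,\s]$ as soon as $\s=\bla/4\Msf > \pi/4$, i.e.\ $\bla > \pi\Msf$, a range the paper explicitly allows since $\bla$ can approach $2\pi\Msf$.
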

\begin{proof}
Since $\mcalF(\hxi)$ is continuous on the compact interval $I_{^{[n]}} = \big[ - l_2^{_{[n]}}, l_2^{_{[n]}} \big]$ it has a maximum value $M_{^{[n]}}$ and a minimum value $m_{^{[n]}}$ in $I_{^{[n]}}$. Since $m_{^{[n]}} \leq \mcalF(\hxi) \leq M_{^{[n]}}$ on $I_{^{[n]}}$ we have
\beq
\lim_{n \to \infty} \int_{- l_2^{[n]}}^{+ l_2^{[n]}} \! S^{\p}_{[n]}(\hxi) \, \mcal{B} \big( S_{[n]}(\hxi) \big) \, \mcalF(\hxi) \, d \hxi \,\, \geq \,\, \lim_{n \to \infty} m_{[n]} \int_{- l_2^{[n]}}^{+ l_2^{[n]}} \! S^{\p}_{[n]}(\hxi) \, \mcal{B} \big( S_{[n]}(\hxi) \big) \, d \hxi \,\, = \,\, \mcalI_{B} \, \lim_{n \to \infty} m_{[n]}
\eeq
and analogously
\beq
\lim_{n \to \infty} \int_{- l_2^{[n]}}^{+ l_2^{[n]}} \! S^{\p}_{[n]}(\hxi) \, \mcal{B} \big( S_{[n]}(\hxi) \big) \, \mcalF(\hxi) \, d \hxi \,\, \leq \,\, \mcalI_{B} \, \lim_{n \to \infty} M_{[n]}
\eeq
Since the continuity of $\mcalF(\hxi)$ implies $\lim_{n \to \infty} M_{^{[n]}} = \lim_{n \to \infty} m_{^{[n]}} = \mcalF(0)$, we get (\ref{lemmahypothesis}).
\end{proof}

Therefore we can say that $S^{\p}_{^{[n]}}(\hxi) \mcal{B} \big( S_{_{[n]}}(\hxi) \big) \to \mcalI_{B} \, \d(\hxi)$. We now want to see that a result similar to (\ref{lemmahypothesis}) holds also if $\mcalF(\hxi)$ is substituted to a sequences of smooth functions which converges uniformly to a continuous function. More precisely we have
\begin{proposition}
\label{proposition yes}
Be $\mcalF_{[n]}(\hxi)$ a sequence of smooth functions which converge uniformly to a continuous function $\mcalF_{\infty}(\hxi)$. Then we have
\beq
\label{prophypothesis}
\lim_{n \to \infty} \int_{- l_2^{[n]}}^{+ l_2^{[n]}} \! S^{\p}_{[n]}(\hxi) \, \mcal{B} \big( S_{[n]}(\hxi) \big) \, \mcalF_{[n]}(\hxi) \, d \hxi \,\, = \,\, \mcalI_{B} \, \mcalF_{\infty}(0) 
\eeq
\end{proposition}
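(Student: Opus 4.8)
The plan is to reduce Proposition~\ref{proposition yes} to Lemma~\ref{lemma yes} by peeling off the limit function. First I would split $\mcalF_{[n]} = \mcalF_{\infty} + \big( \mcalF_{[n]} - \mcalF_{\infty} \big)$ and decompose the integral on the left hand side of (\ref{prophypothesis}) as $A_{[n]} + R_{[n]}$, with
\[
A_{[n]} \equiv \int_{-l_2^{[n]}}^{+l_2^{[n]}} \! S^{\p}_{[n]}(\hxi) \, \mcalB\big( S_{[n]}(\hxi) \big) \, \mcalF_{\infty}(\hxi) \, d\hxi , \qquad R_{[n]} \equiv \int_{-l_2^{[n]}}^{+l_2^{[n]}} \! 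S^{\p}_{[n]}(\hxi) \, \mcalB\big( S_{[n]}(\hxi) \big) \, \big( \mcalF_{[n]}(\hxi) - \mcalF_{\infty}(\hxi) \big) \, d\hxi .
\]
Since $\mcalF_{\infty}$ is continuous, Lemma~\ref{lemma yes} applies verbatim with $\mcalF = \mcalF_{\infty}$ and gives $A_{[n]} \to \mcalI_{B} \, \mcalF_{\infty}(0)$ as $n \to \infty$. It then remains only to show $R_{[n]} \to 0$, and this is exactly where the \emph{uniform} convergence of the sequence $\mcalF_{[n]}$ enters: pointwise convergence would not suffice here, because the integration domain $I_{[n]} = [-l_2^{[n]}, l_2^{[n]}]$ itself shrinks with $n$.

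To control $R_{[n]}$ I would estimate it crudely. Setting $\vep_{n} \equiv \sup_{\hxi} \big| \mcalF_{[n]}(\hxi) - \mcalF_{\infty}(\hxi) \big|$, which tends to zero by hypothesis, I have
\[
\big| R_{[n]} \big| \leq \vep_{n} \int_{-l_2^{[n]}}^{+l_2^{[n]}} \big| S^{\p}_{[n]}(\hxi) \big| \, \big| \mcalB\big( S_{[n]}(\hxi) \big) \big| \, d\hxi ,
\]
and I would bound the remaining integral uniformly in $n$ using two facts from appendix~\ref{Thin limit of the background}. First, $S^{\p}_{[n]} = \frac{\bla}{2\Msf}\,f_{[n]}$ is non-negative, so that $\int_{-l_2^{[n]}}^{+l_2^{[n]}} \big| S^{\p}_{[n]} \big| \, d\hxi = \int_{-l_2^{[n]}}^{+l_2^{[n]}} S^{\p}_{[n]} \, d\hxi = 2\s$. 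Second, $|S_{[n]}(\hxi)| = \frac{\bla}{2\Msf}\,|\ep_{[n]}(\hxi)| \leq \frac{\bla}{2\Msf}\cdot\frac{1}{2} = \s$ for every $\hxi$, so that $S_{[n]}(\hxi)$ always lies in the compact interval $[-\s,+\s]$, on which the continuous function $\mcalB$ is bounded, say $\bar{B} \equiv \max_{[-\s,\s]} |\mcalB| < \infty$. Hence $\big| R_{[n]} \big| \leq 2\s\,\bar{B}\,\vep_{n} \to 0$, and combining this with the limit of $A_{[n]}$ gives (\ref{prophypothesis}).

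I do not expect a genuine obstacle: the statement is just the stability of Lemma~\ref{lemma yes} under a uniform perturbation of the test function, and the only mildly delicate point is producing the $n$-independent bound for $\int_{I_{[n]}} \big| S^{\p}_{[n]} \, \mcalB(S_{[n]}) \big|\, d\hxi$, which is handled by the monotonicity of $S_{[n]}$ and the a priori bound $|S_{[n]}| \leq \s$. As an alternative one could mimic directly the squeeze argument in the proof of Lemma~\ref{lemma yes}, replacing $\mcalF$ by $\mcalF_{[n]}$ and noting that both $\sup_{I_{[n]}} \mcalF_{[n]}$ and $\inf_{I_{[n]}} \mcalF_{[n]}$ converge to $\mcalF_{\infty}(0)$ since $I_{[n]}$ shrinks to $\{0\}$, $\mcalF_{\infty}$ is continuous and $\vep_{n} \to 0$; but the split-off argument above is cleaner, as it confines any sign issues to the already-established lemma.
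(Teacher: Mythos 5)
Your proof is correct and takes essentially the same route as the paper's: both reduce to Lemma~\ref{lemma yes} applied to $\mcalF_{\infty}$ and invoke uniform convergence to control $\mcalF_{[n]} - \mcalF_{\infty}$, the paper packaging this as a direct $\vep$-squeeze on the original integral while you split it into $A_{[n]} + R_{[n]}$ and bound the remainder via the triangle inequality. Your version is a mild reorganization that makes the role of uniform convergence more transparent, and your explicit bound $\abs{R_{[n]}} \leq 2\s\,\bar B\,\vep_{n}$ (using $\abs{S_{[n]}}\leq\s$ and $\int\!\abs{S^{\p}_{[n]}}\,d\hxi=2\s$) correctly identifies the non-negativity of $f_{[n]}$ as the fact needed for an $n$-independent bound, an assumption the paper's squeeze argument also relies on but leaves tacit.
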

\begin{proof}
Since $\mcalF_{^{[n]}} \to \mcalF_{\infty}$ uniformly, for every $\vep > 0$ there exists a natural number $\ti{N}_{\vep}$ such that $n \geq \ti{N}_{\vep}$ implies $\abs{\mcalF_{^{[n]}}(\hxi) - \mcalF_{\infty}(\hxi)} \leq \vep$ for every $\hxi \in I_{^{[n]}}$. Then for $n \geq \ti{N}_{\vep}$ we have
\beq
\label{propres1}
\int_{- l_2^{[n]}}^{+ l_2^{[n]}} \! S^{\p}_{[n]}(\hxi) \, \mcal{B} \big( S_{[n]}(\hxi) \big) \, \mcalF_{[n]}(\hxi) \, d \hxi \,\, \geq \,\,
\int_{- l_2^{[n]}}^{+ l_2^{[n]}} \! S^{\p}_{[n]}(\hxi) \, \mcal{B} \big( S_{[n]}(\hxi) \big) \, \big( \mcalF_{\infty}(\hxi) - \vep \big) \, d \hxi
\eeq
while for the lemma \ref{lemma yes} there exists a natural number $\bar{N}_{\vep}$ such that $n \geq \bar{N}_{\vep}$ implies
\beq
\label{propres2}
\mcalI_{B} \, \mcalF_{\infty}(0) - \vep \,\, \leq \,\, \int_{- l_2^{[n]}}^{+ l_2^{[n]}} \! S^{\p}_{[n]}(\hxi) \, \mcal{B} \big( S_{[n]}(\hxi) \big) \, \mcalF_{\infty}(\hxi) \, d \hxi \,\, \leq \,\, \mcalI_{B} \, \mcalF_{\infty}(0) + \vep
\eeq
Calling $N_{\vep} = \textrm{max} \{ \ti{N}_{\vep} , \bar{N}_{\vep} \}$ and putting together (\ref{propres1}) and (\ref{propres2}) we get that $n \geq N_{\vep}$ implies
\beq
\label{propres3}
\int_{- l_2^{[n]}}^{+ l_2^{[n]}} \! S^{\p}_{[n]}(\hxi) \, \mcal{B} \big( S_{[n]}(\hxi) \big) \, \mcalF_{[n]}(\hxi) \, d \hxi \,\, \geq \,\, \mcalI_{B} \, \mcalF_{\infty}(0) - \vep \, \big( \mcalI_{B} + 1 \big)
\eeq
and analogously we get
\beq
\label{propres4}
\int_{- l_2^{[n]}}^{+ l_2^{[n]}} \! S^{\p}_{[n]}(\hxi) \, \mcal{B} \big( S_{[n]}(\hxi) \big) \, \mcalF_{[n]}(\hxi) \, d \hxi \,\, \leq \,\, \mcalI_{B} \, \mcalF_{\infty}(0) + \vep \, \big( \mcalI_{B} + 1 \big)
\eeq
Since $\vep > 0$ is arbitrary, (\ref{propres3}) and (\ref{propres4}) imply (\ref{prophypothesis}).
\end{proof}

\noi Moreover we have the following

\begin{corollary}
\label{corollary yes}
If, in addition to the hypothesis of Proposition \ref{proposition yes}, $\mcalF_{\infty}(\hxi)$ is odd with respect to the parity transformation $\hxi \to - \hxi$, then
\beq
\label{corollhypothesis}
\lim_{n \to \infty} \int_{- l_2^{[n]}}^{+ l_2^{[n]}} \! S^{\p}_{[n]}(\hxi) \, \mcal{B} \big( S_{[n]}(\hxi) \big) \, \mcalF_{[n]}(\hxi) \, d \hxi \,\, = \,\, 0
\eeq
\end{corollary}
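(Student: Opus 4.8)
The plan is to obtain the Corollary as an immediate consequence of Proposition \ref{proposition yes}. The hypotheses of the Corollary already contain, verbatim, those of that Proposition: $\mcalF_{[n]}$ is a sequence of smooth functions converging uniformly to the continuous function $\mcalF_{\infty}$, while $\mcalB$ is the fixed continuous real function entering the kernel, with $\mcalI_{B}$ as in (\ref{integral mon amour}). So I would simply apply Proposition \ref{proposition yes} to conclude that
\[
\lim_{n \to \infty} \int_{- l_2^{[n]}}^{+ l_2^{[n]}} \! S^{\p}_{[n]}(\hxi) \, \mcal{B} \big( S_{[n]}(\hxi) \big) \, \mcalF_{[n]}(\hxi) \, d \hxi \,\, = \,\, \mcalI_{B} \, \mcalF_{\infty}(0) .
\]
The only new ingredient is the added oddness hypothesis, which I would use to evaluate the right-hand side: since $\mcalF_{\infty}(-\hxi) = -\mcalF_{\infty}(\hxi)$ and $\mcalF_{\infty}$ is continuous at the origin, setting $\hxi = 0$ gives $\mcalF_{\infty}(0) = -\mcalF_{\infty}(0)$, hence $\mcalF_{\infty}(0) = 0$. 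Therefore $\mcalI_{B}\,\mcalF_{\infty}(0) = 0$ regardless of the value of $\mcalI_{B}$, which is precisely the claim (\ref{corollhypothesis}).

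I expect no real obstacle here: all the analytic work --- controlling the pillbox integral of the delta-like kernel $S^{\p}_{[n]}\,\mcal{B}(S_{[n]})$ against a uniformly convergent sequence of weights --- has already been carried out in Lemma \ref{lemma yes} and Proposition \ref{proposition yes}, and the Corollary merely adds the trivial remark that an odd continuous function vanishes at $\hxi = 0$. For completeness I would point out that one cannot shortcut this via a bare parity argument on the integrand: although $S^{\p}_{[n]}$ is even and $S_{[n]}$ is odd (Appendix \ref{Thin limit of the background}), the weight $\mcalF_{[n]}$ carries no definite parity, so a direct ``odd integrand'' cancellation is not available, and one is led back to Proposition \ref{proposition yes} in any case. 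Hence I would present only the short argument above, without reproducing the $\vep$-estimates underlying the Proposition.
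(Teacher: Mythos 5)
Your argument is correct and is essentially identical to the paper's: the paper also derives $\mcalF_{\infty}(0)=0$ from continuity plus oddness and then invokes Proposition \ref{proposition yes} to get (\ref{corollhypothesis}). Your added remark that a direct parity cancellation on the integrand is unavailable (since $\mcalF_{[n]}$ has no definite parity) is a sensible sanity check, though the paper does not include it.
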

\begin{proof}
Since by hypothesis $\mcalF_{\infty}(\hxi)$ is continuous, if it is odd then $\mcalF_{\infty}(0) = 0$. Then the thesis follows trivially from (\ref{prophypothesis}).
\end{proof}

\noi \textbf{Comment}: the results (\ref{prophypothesis}) and (\ref{corollhypothesis}) hold as well if the functions $\mcalF_{^{[n]}}$ and $\mcalF_{\infty}$ depend also on other variables (let's call them collectively $\chd$) beside $\hxi$, provided that, for every value of $\chd$, the functions $\mcalF_{^{[n]}}$ are smooth in $\hxi$ at fixed $\chd$ and converge to the continuous function $\mcalF_{\infty}$ uniformly with respect to $\hxi$ at fixed $\chd$. This is automatically granted if the functions $\mcalF_{^{[n]}}$ are smooth both in $\hxi$ and in $\chd$, and if they converge uniformly both with respect to $\hxi$ and to $\chd$ to a function $\mcalF_{\infty}$ which is continuous both in $\hxi$ and in $\chd$.

\subsection{The induced gravity part}
\label{The induced gravity part}

We turn now to the evaluation of the pillbox integral
\beq
\label{pillboxintegralindgrav}
\mscr{I}_{\!\!_{\textup{G}}} = \lim_{n \rightarrow + \infty} \int_{- l_{2}^{[n]}}^{+ l_{2}^{[n]}} \!\! d \hxi \,\,\, \bv_{i}^{[n] \, \p} \, \boxf \, \hdvf^{i}_{[n]}
\eeq
which appears in the equation (\ref{Aloha der}) and is produced by the cod-1 induced gravity term. Note that both sides of the equation above are functions of the 4D coordinates $\chd$, but we omit to indicate explicitly this dependence in the following. By expressing $\bv_i^\p \, \hdvf^{i}_{gi} = \big(\bv_i \, \hdvf^{i}_{gi}\big)^{\p} - \bv_i \, \hdvf^{i\, \p}_{gi}$ and remembering the definitions (\ref{deltavdefinitiongi}) it is easy to see that
\beq
\label{Dagmara}
\bv_i^{[n] \, \p} \, \hdvf^{i}_{[n]} = \hdvfp^{[n] \, \p} - \dhvsp^{[n]}
\eeq
and using this relation we can rewrite the integral (\ref{pillboxintegralindgrav}) as
\beq
\mscr{I}_{\!\!_{\textup{G}}} = \lim_{n \rightarrow + \infty} \int_{- l_{2}^{[n]}}^{+ l_{2}^{[n]}} \!\! d \hxi \,\,\, \boxf \, \hdvfp^{[n] \, \p} - \lim_{n \rightarrow + \infty} \int_{- l_{2}^{[n]}}^{+ l_{2}^{[n]}} \!\! d \hxi \,\,\, \boxf \, \dhvsp^{[n]}
\eeq
Since $\dhvsp^{[n]}$ is constructed from the first derivative of the embedding functions only, by our ansatz it remains bounded even in the $n \rightarrow + \infty$ limit. Therefore we have
\beq
\lim_{n \rightarrow + \infty} \int_{- l_{2}^{[n]}}^{+ l_{2}^{[n]}} \!\! d \hxi \,\,\, \boxf \, \dhvsp^{[n]} = 0
\eeq
and we conclude that
\beq
\mscr{I}_{\!\!_{\textup{G}}} = 2 \lim_{n \rightarrow + \infty} \boxf \, \hdvfp^{[n]}\Big\rvert_{l_{2}^{[n]}} = 2 \,\,\boxf \, \hdvfp^{\infty}\Big\rvert_{0^{+}}
\eeq
The parallel component of the bending, despite being non-zero outside the cod-2 brane, does not appear in the pure cod-1 junction conditions. However, in the thin limit its value in $\hxi = 0^+$ (on the side of the cod-2 brane) is not independent of the value of the normal component of the bending in $\hxi = 0^+$. In fact, the $y$ component of the bending $\hdvf^{y}_{\infty}$ vanishes in $\hxi = 0^+$ since it is continuous and odd; the definition (\ref{whoknows}) implies then that both $\hdvfnin$ and $\hdvfp^{\infty}$ on the side of the cod-2 brane are determined in terms of the 4D field $\hdvf^{z}_{\infty}\Big\rvert_{\hxi = 0}$ by
\begin{align}
\hdvfnin\Big\rvert_{\hxi = 0^+} &= \Yp_{\infty}\Big\rvert_{0^+} \, \hdvf^{z}_{\infty}\Big\rvert_{\hxi = 0} & \hdvfp^{\infty}\Big\rvert_{\hxi = 0^+} &= \Zp_{\infty}\Big\rvert_{0^+} \, \hdvf^{z}_{\infty}\Big\rvert_{\hxi = 0}
\end{align}
Therefore the integral (\ref{pillboxintegralindgrav}) can be equivalently expressed as
\beq
\mscr{I}_{\!\!_{\textup{G}}} = 2 \, \sin \bigg( \frac{\bla}{4 \Msf} \bigg) \, \boxf \, \hdvf^{z}_{\infty}\Big\rvert_{\hxi = 0}
\eeq
and
\beq
\mscr{I}_{\!\!_{\textup{G}}} = 2 \, \tan \bigg( \frac{\bla}{4 \Msf} \bigg) \, \boxf \, \hdvfn^{\infty}\Big\rvert_{0^{+}}
\eeq
It is worthwhile to stress that the values of $\hdvfnin$ and $\hdvfp^{\infty}$ are rigidly linked only on the side of the cod-2 brane and only in the thin limit, while for $\hxi \neq 0$ the two quantities are independent.

The field $\hdvf^{z}_{\infty}\big\rvert_{\hxi = 0}$ has a precise geometrical meaning. In general, the embedding of the (mathematical) cod-2 brane $\mcalC_2$ in the bulk $\b^{\cdot}(\chd)$ is obtained by composing the cod-2 embedding into the cod-1 brane $\ta^{\cdot}(\chd)$ and the cod-1 embedding $\vfd(\hxid)$
\beq
\b^A(\chd) = \vf^A \big( \ta^{\cdot}(\chd) \big)
\eeq
If we fix the embedding $\ta^{\cdot}(\chd)$ according to (\ref{Joan}) and we use cod-1 GNC, we have
\begin{align}
\b^z(\chd) &= \vf^z \big( 0, \chd \big) & \b^y(\chd) &= 0
\end{align}
since in this case $\vf^y$ vanishes at $\hxi = 0$ as a consequence of the cod-1 $\mathbb{Z}_2$ symmetry. Using the perturbative decomposition $\vf^z = Z + \hdvf^z$ and taking into account that $Z(0) = 0$, the bending of $\mcalC_2$ in the bulk in the extra dimensions $z$--$y$ is described by the function $\d \!\b^z ( \chd ) = \hdvf^z ( 0, \chd )$, whose gauge invariant version is
\beq
\d \!\b^z_{gi} \big( \chd \big) \equiv \hdvf^z_{gi}\big( 0, \chd \big)
\eeq
Since in the thin limit the physical ribbon brane and the mathematical brane $\mcalC_2$ coincide, the field $\hdvf^{z}_{\infty}\big\rvert_{\hxi = 0}$ is the gauge invariant bending $\d \!\b^z_{\infty}$ of the thin cod-2 brane in the bulk (in the extra dimensions $z$ and $y$).

\subsection{The extrinsic curvature part}

We consider now the pillbox integration
\beq
\label{pillboxintegralextcurv}
\mscr{I}_{\!\!_{\textup{K}}} = \lim_{n \rightarrow + \infty} \int_{-l_2^{[n]}}^{+l_2^{[n]}} \! d \hxi \, \bigg( S_{[n]}^{\p} \, \bn_{[n]}^i \bn_{[n]}^j \, \h{h}^{[n]}_{ij} + 2 \, \bn^{[n]}_i \, \hdvf^{i \, \p \p}_{[n]} \bigg)
\eeq
which appears in the equation (\ref{Aloha trace2}) and is produced by the extrinsic curvature term. Also in this case both sides of the equation are functions of the 4D coordinates $\chd$, but we omit to indicate explicitly this dependence. To perform this integration, it is useful to recast the integrand in a more convenient form. First of all, by expressing $\bn_i \, \hdvf^{i \,\p \p}_{gi} = \big(\bn_i \, \hdvf^{i \,\p}_{gi}\big)^{\p} - \bn_i^\p \, \hdvf^{i \,\p}_{gi}$ and using (\ref{Bobbi}) it is easy to see that (remember the definitions (\ref{deltavdefinitiongi}))
\beq
\bn^{[n]}_i \, \hdvf^{i \, \p\p}_{[n]} = \dhvn^{[n] \, \p} + S_{[n]}^{\p} \, \dhvsp^{[n]}
\eeq
so we can express the integrand as
\beq
\label{Montparnasse}
S_{[n]}^{\p} \, \Big( \bn_{[n]}^i \bn_{[n]}^j \, \h{h}^{[n]}_{ij} + 2 \, \dhvsp^{[n]} \Big) + 2 \, \dhvn^{[n] \, \p}
\eeq
Secondly, since in cod-1 GNC we have $\h{h}^{[n]}_{\xi\xi} = 0$, the relation (\ref{color1}) implies that
\beq
\label{Monmartre}
\dhvsp^{[n]} = - \half \, \bv_{[n]}^{i} \bv_{[n]}^{j} \, \h{h}_{ij}^{[n]}
\eeq
and inserting the relation (\ref{Monmartre}) in (\ref{Montparnasse}) we can express the integral (\ref{pillboxintegralextcurv}) as follows
\beq
\label{artisticequation}
\mscr{I}_{\!\!_{\textup{K}}} = 4 \, \dhvn^{\infty}\Big\rvert_{0^+} + \lim_{n \rightarrow + \infty} \int_{- l_{2}^{[n]}}^{+ l_{2}^{[n]}} \! d \hxi \, S_{[n]}^{\p} \, \Big( \bn_{[n]}^i \bn_{[n]}^j - \bv_{[n]}^{i} \bv_{[n]}^{j} \Big) \, \h{h}^{[n]}_{ij}
\eeq

The integral on the right hand side of (\ref{artisticequation}) is exactly of the form considered in the Proposition \ref{proposition yes}, since $\bv^i$ and $\bn^i$ are proportional to $\sin S$ or $\cos S$ and by our ansatz $\h{h}^{_{[n]}}_{ij}$ converge uniformly to the continuous functions $\h{h}^{\infty}_{ij}$. Moreover, the Corollary \ref{corollary yes} implies that $\h{h}^{_{[n]}}_{zy}$ gives a vanishing contribution to the thin limit integral, since it is odd; therefore the relation (\ref{artisticequation}) can be simplified as
\beq
\label{artisticequation2}
\mscr{I}_{\!\!_{\textup{K}}} = 4 \, \dhvn^{\infty}\Big\rvert_{0^+} + \lim_{n \rightarrow + \infty} \int_{- l_{2}^{[n]}}^{+ l_{2}^{[n]}} \! d \hxi \, S_{[n]}^{\p} \, \Big( \cos^2 \!S_{[n]} - \sin^2 \!S_{[n]} \Big) \, \Big( \h{h}^{[n]}_{zz} - \h{h}^{[n]}_{yy} \Big)
\eeq
Since in this case the integral defined in (\ref{integral mon amour}) reads
\beq
\mcalI \equiv \int_{-\s}^{+\s} \Big( \cos^2 \!\z - \sin^2 \!\z \Big) \, d\z = \int_{-\s}^{+\s} \cos (2\z) \, d\z = \sin 2 \s
\eeq
we finally get
\beq
\label{artisticequation3}
\mscr{I}_{\!\!_{\textup{K}}} = 4 \, \dhvn^{\infty}\Big\rvert_{0^+} + \sin \bigg(\frac{\bla}{2 \Msf}\bigg) \, \Big( \h{h}^{\infty}_{zz}(0) - \h{h}^{\infty}_{yy}(0) \Big)
\eeq
Also in this case, it is useful to express the integral $\mscr{I}_{\!\!_{\textup{K}}}$ in terms of $\dvfn^{\infty}$, which appears in the pure cod-1 junction conditions: using the relation (\ref{susini}) we obtain
\beq
\label{artisticequation5}
\mscr{I}_{\!\!_{\textup{K}}} = 4 \, \hdvfn^{\infty \, \p}\Big\rvert_{0^+} + \sin \bigg( \frac{\bla}{2 \Msf} \bigg) \, \Big( \h{h}^{\infty}_{zz} (0) - \h{h}^{\infty}_{yy} (0) \Big)
\eeq

\end{appendix}

\clearemptydoublepage

%
%

\end{document}